\documentclass[11pt,reqno,twoside]{amsart}

\usepackage[margin=1in]{geometry}

\usepackage{amsmath,amssymb,amscd,amsthm}
\usepackage{mathrsfs}
\usepackage{upgreek}
\usepackage{mathtools}
\usepackage[dvips]{graphics}
\usepackage{enumerate}
\usepackage{mdframed}
\usepackage{cite}
\usepackage{comment}
\usepackage[colorlinks,pdfstartview=FitB]{hyperref}
\hypersetup{allcolors=blue}
\usepackage{subcaption}
\usepackage{graphicx}
\usepackage{xcolor}
\usepackage{float}

\usepackage{tikz}
\usetikzlibrary{calc,
	matrix,
	fadings,
	fit,
	math,
	arrows.meta}
\usepackage{pgfplots}
\pgfplotsset{compat=1.18}

\usepackage{bbm}
\usepackage{tikz}
\usetikzlibrary{positioning}
\usetikzlibrary{shapes.geometric, positioning, backgrounds, fit}

\definecolor{mylightblue}{RGB}{200, 230, 255}
\definecolor{mylightorange}{RGB}{255, 230, 200}
\definecolor{mylightgreen}{RGB}{210, 255, 210}

\definecolor{myblue}{RGB}{0, 100, 200}
\definecolor{myred}{RGB}{200, 50, 50}
\definecolor{mydarkorange}{RGB}{180, 90, 0}
\definecolor{mydarkgreen}{RGB}{0, 120, 0}

\def\idty{\mathbbm{1}} 

\newcommand{\Z}{{\mathbb Z}}

\newcommand{\C}{{\mathbb C}}

\newcommand{\N}{{\mathbb N}}

\renewcommand{\Re}{{\mathrm{Re} \,}}

\newtheorem{theorem}{Theorem}[section]
\newtheorem{lemma}[theorem]{Lemma}

\newtheorem{remark}[theorem]{Remark}

\theoremstyle{definition}

\allowdisplaybreaks \numberwithin{equation}{section}

\DeclareMathOperator{\diag}{diag}

\makeatletter
\def\subsection{\@startsection{subsection}{2}%
	\z@{.5\linespacing\@plus.7\linespacing}{.5\linespacing}%
	{\normalfont\scshape\centering}}
\makeatother

\hypersetup{
	pdftitle = {Bottleneck Effect and Harmonic-Type Velocity Bounds for Periodic Quantum Walks},
	pdfauthor = {Houssam Abdul-Rahman, Thomas A. Jackson, Darren C. Ong},
	pdfsubject = { },
	pdfkeywords = {}
}

\makeatletter
\renewcommand{\section}{\@startsection
  {section}{1}{\z@}%
  {-2.5ex \@plus -1ex \@minus -.2ex}%
  {1ex \@plus .2ex}%
  {\normalfont\Large\bfseries}}
  
\renewcommand{\l@section}{%
  \@tocline{1}{3pt plus 1pt}{0pt}{2.6em}{\normalsize\bfseries}%
}

\renewcommand{\subsection}{\@startsection
  {subsection}{2}{\z@}%
  {-2ex \@plus -1ex \@minus -.2ex}%
  {0.8ex \@plus .2ex}%
  {\normalfont\large\bfseries}}
  
  \renewcommand{\subsubsection}{\@startsection
  {subsubsection}{3}{\z@}%
  {-1.5ex \@plus -1ex \@minus -.2ex}%
  {0.6ex \@plus .2ex}%
  {\normalfont\bfseries}}
\makeatother

\begin{document}

\title[Bottleneck Effects and Harmonic-Type Velocity Bounds for Periodic Quantum Walks]{Bottleneck Effects and Harmonic-Type Velocity Bounds for Periodic Quantum Walks}
		
\author[H.\ Abdul-Rahman]{Houssam Abdul-Rahman}
\address{[H.\ Abdul-Rahman] Department of Mathematical Sciences, United Arab Emirates University, AL Ain, UAE}
\email{\href{mailto:houssam.a@uaeu.ac.ae}{houssam.a@uaeu.ac.ae}}

\author[T. Jackson]{Thomas A. Jackson}
\address{[T. Jackson] Department of Mathematics and Center for Quantum Mathematics and Physics, University of California,  Davis, CA  95616-8633, USA} 
\email{\href{mailto:thomasjacksonmath@gmail.com}{thomasjacksonmath@gmail.com}}

\author[D. Ong]{Darren C. Ong}
\address{[D. Ong] Department of Mathematics, Xiamen University Malaysia, Sepang, Malaysia 
and School of Mathematical Sciences, Xiamen University, 361005 Xiamen, Fujian, China}
\email{\href{mailto:darrenong@xmu.edu.my}{darrenong@xmu.edu.my}}

\keywords{}
\begin{abstract}

We prove explicit upper bounds on the propagation velocity of one-dimensional quantum walks with periodic coins of arbitrary period. We treat two complementary settings. First, in a perturbative regime where one transmission parameter is small, we show that the corresponding almost reflecting coin acts as a bottleneck for transport: the velocity is bounded linearly in this parameter, with an explicit leading order estimate. Second, for arbitrary nonzero transmission parameters, we prove a general a priori bound in terms of their harmonic mean, together with a refined version that detects the spatial variation of neighboring coins. Moreover, we prove a general lower bound on the velocity. These bounds apply directly to the corresponding CMV setting.
\end{abstract}

\maketitle

\markboth{H. Abdul-Rahman, T. Jackson, and D. Ong}{Bottleneck Effects and Harmonic-Type Velocity Bounds for Periodic Quantum Walks}

\allowdisplaybreaks

\tableofcontents
\section{Introduction}

In periodic quantum systems, transport is typically ballistic: wave packets propagate linearly in time \cite{damanikWhatBallisticTransport2024}. This raises a basic quantitative question: how large can the propagation velocity be in terms of the local parameters of the system? In this paper we address this question for one-dimensional quantum walks with periodic coins, proving explicit upper bounds on the maximal group velocity in terms of the local transmission parameters.

Discrete time quantum walks form a natural class of unitary dynamics on discrete structures. Originally introduced as quantum analogues of classical random walks, they incorporate quantum coherence and interference, and in periodic or homogeneous settings this leads to ballistic propagation rather than the diffusive spreading of classical random walks \cite{Kempe2003,VenegasAndraca2012}. Beyond this basic role, quantum walks have become important tools in quantum algorithms \cite{Portugal2013,Montanaro2016}, quantum simulation \cite{Childs2009,AspuruGuzik2012}, and the study of topological phases of driven systems \cite{kitagawaExploringTopologicalPhases2010,TopClass,AsboBB,Asbo}.

Quantum walks have also attracted considerable attention as models for the mathematical study of spectral and dynamical properties of unitary operators.  They provide a flexible family of unitary operators for studying localization \cite{HJ14,HJS2009,AhlbrechtVolkherScholzWerner2011,JM10,HamzaJoyeStolz2006,hamzaLyapunovExponentsUnitary2007,locQuasiPer,joye2026dynamical, Konno2010Localization, Konno2009QWRE}, transport \cite{ARJS26,ARCSW25,ARS2023-CMP,CedzichFillmanVelazquez2026,Suzuki2016AsymptoticVelocity}, recurrence \cite{recurrence,bourgainQuantumRecurrenceSubspace2014}, and spectral type on discrete spaces
\cite{BHJ2003,KumarSabri2026,CedzichFillmanVelazquez2026, MaedaSuzukiWada2022}. In one dimension, an especially important feature is the connection between coined quantum walks and CMV matrices, or more generally extended CMV matrices. This links quantum walk dynamics with the spectral theory of orthogonal polynomials on the unit circle and with the unitary analogue of Jacobi matrix methods \cite{CGMV2012QIP,canteroFivediagonalMatricesZeros2003,Simon2005OPUC2,canteroMatrixvaluedSzegoPolynomials2010,CFLOZ,Simon2007CMV}.

The spectral and dynamical behavior of a quantum walk depends strongly on the structure of the underlying coin sequence. Periodic quantum walks are associated with absolutely continuous spectrum and ballistic transport \cite{ARS2023-CMP,ARCSW25,damanikQuantumDynamicsPeriodic2015,ewalks}. By contrast, disordered quantum walks often exhibit pure point spectrum and dynamical localization \cite{HamzaJoyeStolz2006,hamzaLyapunovExponentsUnitary2007,HJS2009,HJ14,AhlbrechtVolkherScholzWerner2011,JM10,AschJoye2019}. Quasi-periodic quantum walks occupy an intermediate regime, where the dynamics may display localization, anomalous transport, or other nonballistic behavior \cite{Wojcik,Nguyen2019,locQuasiPer,WangDamanik2019,CFLOZ,CFO1}.

In this work we study one-dimensional discrete time quantum walks with periodic coins, which we refer to as \emph{periodic quantum walks}. For a local coin, we call the modulus of its $(1,1)$-entry the \emph{transmission parameter}. This terminology is motivated by the homogeneous shift-coin walk, where the propagation velocity is precisely given by this quantity, see, e.g., \cite{ARS2023-CMP,ARCSW25}.

While periodicity generally leads to ballistic transport, the propagation velocity may vary substantially with the local transmission parameters.  Recent work on periodic quantum walks has considered periodicity arising from external fields, e.g., electric quantum walks, where rational fields lead to an (exponential) suppression of the velocity \cite{ewalks,ARS2023-CMP,ARCSW25}. Related work in the physics literature has also examined quantum walks with time-periodic coins of period two \cite{KumarEtal2018}. In contrast, the present work treats position-dependent periodic coins of arbitrary period and gives explicit bounds on the maximal group velocity in terms of the full transmission profile.

The transmission parameters determine the basic transport regime. If one of them vanishes, then the corresponding coin is a \emph{perfect reflector}. In this case the walk decomposes into finite-size blocks and the velocity is zero, see, e.g., \cite{ARJS26}. In contrast, when all transmission parameters are non-zero, such periodic walks are known to have positive velocity, that is, they exhibit ballistic transport. The aim of this paper is to quantify this velocity in terms of the transmission parameters of the periodic coin sequence, including the transition regime between these two cases. One observation is that, in the non-homogeneous periodic setting, the velocity depends not only on the individual coins but also on their arrangement within a single period.

The main difficulty is that the velocity of a $p$-periodic quantum walk is encoded in the derivatives of the dispersion relations of a $2p\times 2p$ Floquet matrix. These dispersion relations are generally not analytically accessible when $p>2$. We therefore avoid relying on explicit eigenvalue formulas. Instead, we use perturbation theory, Hellmann--Feynman identities, and structural bounds on Floquet eigenvectors. This provides a way to control the derivatives of the eigenvalues directly from structural properties of the Floquet matrix.

Our first main result, Theorem \ref{thm:perturbative}, quantifies a \emph{bottleneck effect} for periodic quantum walks: a single very weak transmission parameter can determine the velocity scale of the entire periodic system. The terminology is meant in the broad transport sense of a local obstruction that limits the overall propagation, a terminology that also appears in the physics literature on quantum and nanoscale transport, see, e.g.,  \cite{Bottleneck2010,Bottleneck2021}. More precisely, we consider the regime in which the smallest transmission parameter $\varrho$ tends to zero. The corresponding coin then becomes an almost perfect reflector, separating the period into nearly decoupled regions. We show that the velocity is of the same order as $\varrho$ and obtain an explicit leading order estimate. In the special case where all remaining coins are perfect transmitters, i.e., coins with transmission parameters 1, this leading-order picture becomes exact. As shown in Theorem \ref{thm:1-defect}, a non-perturbative argument gives velocity precisely equal to $\varrho$.

The second main result, Theorem \ref{thm:harmonic} gives a general a priori upper bound for the velocity of periodic quantum walks with nonzero transmission parameters. This estimate does not single out a weakest coin. Instead, it bounds the velocity in terms of quantities averaged over the full period. The first such quantity is the harmonic mean of the transmission parameters, while the refined bound also detects variations between neighboring coins \footnote{ It is particularly interesting that the harmonic mean appears as a velocity bound, as this mean is well known to arise in effective averaging of local speeds and in series resistance laws.}. Thus the bounds are explicit, do not require solving the dispersion relations, and remain sensitive to the spatial structure of the periodic medium. These results apply directly to the generalized extended CMV matrices with periodic Verblunsky coefficients.

Some general upper bounds of the velocity are already known. Besides the trivial global velocity bound of $1$, \cite{ARJS26} shows that velocity of a general inhomogeneous quantum walk is bounded above by the limit superior of its transmission parameters. For a $p$-periodic walk, this bound is the maximum of the transmission parameters in one period.
To the best of our knowledge, these are the first explicit velocity bounds for periodic quantum walks with arbitrary periods in terms of the full transmission profile, obtained without solving the Floquet dispersion relations.

Finally, to complete the picture, we include a general lower bound for the velocity in Appendix \ref{sec:Lower-bound}, see Theorem \ref{thm:v-lower}. Such a result cannot be expected to have the same scale as the harmonic-type quantities appearing in the upper bounds, since phase effects may strongly suppress transport while leaving the transmission moduli unchanged. This is already visible in the periodic-field examples of \cite{ARCSW25,ARS2023-CMP}. These works consider a homogeneous shift-coin quantum walk with constant coin transmission parameter $|a|$, subject to a specific class of external fields of period $p$. In this particular setting, they prove that the resulting walk has velocity exactly $|a|^p$ when $p$ is odd. After absorbing the field phases into the coins, these examples become $p$-periodic coin models. Thus an exponential dependence on the period is the natural scale for lower bounds depending only on the transmission parameters. Theorem \ref{thm:v-lower} provides such a universal phase-independent lower bound, and Remark \ref{rem:Lower-bound} explains how it captures the scale of the periodic-field examples.

The paper is organized as follows. In Section \ref{sec:setup-results} we introduce the model and state the main results. 
Section \ref{sec:block-velocity-Fourier} develops the block velocity and Floquet transform framework, and includes a short numerical illustration of the bounds for period $5$. In Section \ref{sec:perturbation} we prove the perturbative bottleneck result. Section \ref{sec:harmonic} establishes the general a priori bound and its refined harmonic version. Finally, Section \ref{sec:simple-spec} proves the simplicity results for the Floquet spectrum used in the perturbative and dispersion relation arguments. 

The appendices contain complementary results which complete the picture. Appendix \ref{sec:warm-up} presents the explicit velocity calculation in the two-periodic case. The computation is somewhat technical, so it is deferred to the appendix in order to preserve the flow of the main text. Appendix \ref{sec:one-nontrivial-coin} treats the periodic case in which all coins are perfectly transmitting except for one nontrivial coin, while Appendix \ref{sec:Lower-bound} gives a general lower bound for the velocity.

\section*{Acknowledgments}
The authors are grateful to Mostafa Sabri  and G\"unter Stolz for fruitful discussions.\\
D.O. gratefully acknowledges the hospitality of the Department of Mathematical Sciences at the UAE University during a short visit, where part of this work was carried out. \\
D.O. was partially supported by Xiamen University Malaysia Research Fund with fund number XMUMRF/2023C11/IMAT/0024.


\section{The model and main results}\label{sec:setup-results}
\subsection{The shift-coin setup}
We consider discrete time quantum walks on the Hilbert space
\begin{equation}
  \mathcal H = \ell^2(\Z)\otimes \C^2 .
\end{equation}
Let $\{|j\rangle : j\in\Z\}$ and
$|+\rangle =\begin{bmatrix}1\\0\end{bmatrix},\  |-\rangle =\begin{bmatrix}0\\1\end{bmatrix}$ be the standard orthonormal bases of $\ell^2(\Z)$ and $\C^2$, respectively. The corresponding orthonormal basis of $\mathcal H$ is denoted by $\delta_j^\pm:=|j\rangle \otimes |\pm\rangle$.

We study the shift-coin quantum walk with periodic coins on $\mathcal H$. It is defined as the product of the following unitary operators.
\begin{enumerate}[(i)]
\item The coin operator is periodic of period $p\in\N$. It is determined by a $p$-periodic string as follows. Let
\begin{equation}\label{def:c-p}
c:=((a_1,b_1),(a_2,b_2),\ldots,(a_p,b_p)),
\qquad (a_k,b_k)\in\mathbb S^3,\quad k=1,\ldots,p.
\end{equation}
where $\mathbb S^3:=\{(z_1,z_2)\in\C^2:\ |z_1|^2+|z_2|^2=1\}$.
In this work we assume that $c$ is a primitive $p$-periodic string, i.e., $c$ is not obtained by periodically repeating a shorter block. 

Then the coin operator $C(c)$ on $\mathcal H$ acts locally on one period by  \footnote{The matrices in \eqref{def:C} parametrize all of $\mathrm{SU}(2)$. Indeed, every $A\in \mathrm{U}(2)$ can be written as
\[
A=e^{i\phi}
\begin{bmatrix}
a & b\\
-\overline{b} & \overline{a}
\end{bmatrix},
\qquad \phi\in\mathbb R,\quad (a,b)\in\mathbb S^3.
\]
Since the velocity defined in \eqref{def:v} is unaffected by the trivial phase $e^{i\phi}$, all our results remain valid for periodic $\mathrm{U}(2)$-valued coins as well.}
\begin{equation}\label{def:C}
  C_k = \begin{bmatrix}
    a_k & b_k\\
   -\overline{b_k} & \overline{a_k}
  \end{bmatrix}, \quad k=1,\ldots,p.
\end{equation}
The operator is then extended to all $k\in \Z$ by periodicity, $C_{k+p}=C_k$. 

\item The shift operator $S$ on $\mathcal H$ is defined by
\begin{equation}\label{def:S}
  S = T \otimes P_+ + T^{-1} \otimes P_-,
\end{equation}
where $T^{\pm1}$ denotes the translation operator on $\ell^2(\Z)$,
$T^{\pm 1}|j\rangle = |j\pm 1\rangle$,  $j\in\Z$,
and $P_\pm=|\pm\rangle\langle\pm|$ are the orthogonal projections onto $|\pm\rangle$.
\end{enumerate}
Given the $p$-periodic string $c$ as in \eqref{def:c-p}, the $p$-periodic shift-coin walk is given by
\begin{equation}\label{def:U}
  U_p(c) := S C(c).
\end{equation}
We note that the case $p=1$ corresponds to the well understood homogeneous shift-coin walk. In the following we assume that $p>1$.

For a given periodic string $c$ as in \eqref{def:c-p}.
We are interested in finding the maximal velocity of the $p$-periodic walk $U_p(c)$ in \eqref{def:U}. The maximal velocity of $U_p(c)$ is defined as, see \cite{ARCSW25,ARJS26,ARFFW, ARS2023-CMP,ADFS2024},
\begin{equation}\label{def:v}
v(U_p(c))=\sup_{\substack{\psi\in D(Q)\\ \|\psi\|=1}}\limsup_{t\to\infty}\frac{1}{t}\left\|Q (U_p(c))^t \psi\right\|,
\end{equation}
where $Q$ is the position operator on $\mathcal H$,
\begin{equation}\label{def:Q}
Q=\sum_{j\in\Z}j\ |j\rangle\langle j|\otimes \idty_2.
\end{equation}

As is well known, and as recalled in Section \ref{sec:block-velocity-Fourier}, the dynamics of periodic quantum walks is governed by the associated Floquet matrices and dispersion relations, see, e.g., \cite{AhlbrechtEtal2011,ARS2023-CMP,ARCSW25}.
After applying the Fourier transform, the walk is represented by a family of $2p\times 2p$ unitary matrices (Floquet matrices). The velocity is then controlled by the derivatives of the corresponding eigenvalues (eigenfunctions). Since these eigenvalues are in general not available in closed form, our analysis relies on perturbative information (as in Theorem \ref{thm:perturbative}) and structural  information (as in Theorem \ref{thm:harmonic})  about the Floquet matrix rather than on explicit formulas.

When $p=1$, we obtain the homogeneous shift-coin walk with one coin determined by the pair $(a,b)\in\mathbb{S}^3$. It is well known that its velocity is equal to  $|a|$, see, e.g., \cite{ARCSW25}.

The shift-coin walk with 2-periodic coins admits an explicit analysis, see Appendix \ref{sec:warm-up} below. We show that in this case the velocity is determined by the minimum of the two transmission parameters.
\begin{lemma}\label{lem:warm-up}
Consider the $2$-periodic shift-coin walk $U_2(c)=SC(c)$ determined by
\begin{equation}
c=((a_1,b_1),(a_2,b_2)),
\quad (a_j,b_j)\in\mathbb S^3 \quad \text{ for }j=1,2.
\end{equation}
Then the velocity is 
\begin{equation}
v(U_2(c)) =
\begin{cases}
0 & \text{if } |a_1a_2|=0\\
1 & \text{if } |a_1a_2|=1\\
\min\{|a_1|,|a_2|\} & \text{if } 0<|a_1a_2|<1 \text{ and } b_1\overline{b_2}\in\mathbb R\\
<\min\{|a_1|,|a_2|\} & \text{if } 0<|a_1a_2|<1 \text{ and } b_1\overline{b_2}\notin\mathbb R.
\end{cases}
\end{equation}
\end{lemma}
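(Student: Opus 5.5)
The plan is to compute the dispersion relation of the $2$-periodic walk explicitly and then maximize the resulting group velocity by elementary calculus. I would first dispose of the two degenerate cases directly. If $a_1a_2=0$, one coin is a perfect reflector, so $U_2(c)$ leaves invariant finite blocks of sites, cut at every occurrence of that coin. Then $\|Q U_2(c)^t\psi\|$ stays bounded for finitely supported $\psi$, and by density $v=0$. If $|a_1|=|a_2|=1$, then $b_1=b_2=0$ and both coins are diagonal phases. In that case $U_2(c)$ moves $\delta_j^\pm$ to $\delta_{j\pm1}^\pm$ up to phases, so $v=1$.

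For $0<|a_1a_2|<1$ I would group sites into cells $\{2n+1,2n+2\}$ and apply the Fourier transform in the cell index $n$. This turns $U_2(c)$ into a fiber family $\hat U(k)$ of $4\times 4$ unitary matrices, $k\in[0,2\pi)$, built from $C_1$, $C_2$ and phases $e^{\pm ik}$ in the blocks that cross cell boundaries. Following the framework of Section \ref{sec:block-velocity-Fourier}, the velocity equals $\max$ over bands and $k$ of $|\partial_k\omega_m(k)|$, times the cell length $2$, where $e^{i\omega_m(k)}$ are the eigenvalues. Rather than diagonalizing $\hat U(k)$, I would compute its characteristic polynomial using $\det\hat U(k)=$ const and the trace of $\hat U(k)$ and of its powers. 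The bipartite structure of $S$ should make $\operatorname{tr}\hat U(k)=\operatorname{tr}\hat U(k)^3=0$, so the polynomial depends only on $\lambda^2$. I expect a dispersion relation of the form
\[
\cos(2\omega)=|a_1a_2|\cos(2k+\theta)-\beta,\qquad \beta=\Re\bigl(b_1\overline{b_2}\bigr),
\]
possibly up to a constant phase shift in $\omega$ and a sign convention, with $\theta=\arg(a_1a_2)$. Pinning down this formula exactly, including which cells the phases $e^{\pm ik}$ enter and that the only $b$-dependence is through $\Re(b_1\overline{b_2})$, is the main computational obstacle.

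Given the relation, implicit differentiation gives
\[
\omega'(k)=\frac{r\sin\varphi}{\sin 2\omega},\qquad r=|a_1a_2|,\quad \varphi=2k+\theta,
\]
so that
\[
v^2 \propto \sup_{x\in[-1,1]}\frac{r^2(1-x^2)}{1-(rx-\beta)^2}.
\]
When $b_1\overline{b_2}\in\mathbb R$ we have $\beta=\pm|b_1||b_2|$. I would then maximize in $x$, setting the derivative of the quotient to zero, and use the identities $|b_j|^2=1-|a_j|^2$. The maximum should simplify, after including the normalization factor from the cell length, to $\min\{|a_1|^2,|a_2|^2\}$, which is the algebraic heart of the third case. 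When $b_1\overline{b_2}\notin\mathbb R$ we have $|\beta|<|b_1||b_2|$. I would show that the supremum is strictly monotone in $|\beta|$ on $[0,|b_1||b_2|]$: the denominator increases as $|\beta|$ decreases at the optimizing $x$, the sign of $\beta$ being absorbed by $x\mapsto -x$. Combined with compactness in $x$, this gives the strict inequality $v<\min\{|a_1|,|a_2|\}$. The step I would verify most carefully, first by checking the special case $a_1=a_2$, is this optimization, in particular confirming that the critical point lies in $(-1,1)$ rather than at the singular endpoint where $\sin 2\omega=0$.
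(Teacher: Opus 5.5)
Your route is essentially the paper's: compute the band function explicitly, then maximize $F(x)=\frac{r^2(1-x^2)}{1-(rx-\beta)^2}$ with $r=|a_1a_2|$ and $\beta=\Re(b_1\overline{b_2})$. The degenerate cases are fine, but two steps need repair.

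First, the dispersion relation you anticipate has the wrong frequency. This is not cosmetic, since the lemma is precisely about the constant. Squaring the $4\times4$ fiber gives $\hat U(k)^2=\mathrm{diag}(M(k),\widetilde M(k))$, with $\widetilde M(k)$ similar to $M(k)$, $\det M(k)=1$, and $\mathrm{Tr}\,M(k)=2\Re(e^{ik}a_1a_2)-2\Re(b_1\overline{b_2})$ (up to the sign convention for $k$). Hence $\cos(2\omega)=r\cos(k+\phi)-\beta$, with $\phi=\arg(a_1a_2)$ (your $\theta$), and with $k$, not $2k$. Then $\partial_k\omega=r\sin(k+\phi)/(2\sin 2\omega)$, and multiplying by the cell length $2$ gives $v^2=\sup F$ exactly. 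With your $2k$ and the same normalization you would get $v=2\min\{|a_1|,|a_2|\}$, which exceeds the trivial bound $1$ once $\min\{|a_1|,|a_2|\}>1/2$. So the ``$\propto$'' has to be resolved, and it resolves to equality.

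Second, the step you intend to verify, that the maximizer is an interior critical point, is false in a whole subcase. Suppose $0<|a_1|=|a_2|<1$ and $b_1\overline{b_2}\in\mathbb R$. Then $b_1=\pm b_2$, $r=|a_1|^2$ and $\beta=\pm(1-|a_1|^2)$, so the numerator and denominator of $F$ both vanish at $x=\mp1$. Cancelling the common factor $1\pm x$ leaves a strictly monotone function, e.g.\ $F(x)=\frac{|a_1|^2(1-x)}{2-|a_1|^2(1+x)}$ when $b_1=b_2$, with no critical point in $(-1,1)$. The value $|a_1|^2$ is reached only as the limit at the singular endpoint, where $\sin 2\omega=0$ and the two bands of $M(k)$ touch. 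The homogeneous choice $(a_1,b_1)=(a_2,b_2)$ you planned to test first is exactly of this type: its maximal group velocity sits where the folded bands meet. This subcase must be handled separately, as the paper does.

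For $|a_1|\neq|a_2|$ your plan is sound. Strict Cauchy--Schwarz, $r+|b_1b_2|<1$, keeps the denominator positive on $[-1,1]$, $F(\pm1)=0$, and so the maximum is interior. The paper avoids the critical-point algebra by requiring the quadratic level equation $F(x)=\mu$ to have a double root. This gives $\mu^2-(|a_1|^2+|a_2|^2)\mu+|a_1a_2|^2=0$, and $\mu=\max\{|a_1|^2,|a_2|^2\}$ is discarded because its double root lies outside $[-1,1]$.

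Your monotonicity argument for $b_1\overline{b_2}\notin\mathbb R$ is correct and differs from the paper. The paper shows $F(x)<|a_1|^2$ pointwise, via a quadratic in $x$ that is positive at $0$ and has negative discriminant, and then uses symmetry in $a_1,a_2$. To make your version precise: take $\beta\geq0$ (using $F_{-\beta}(x)=F_\beta(-x)$) and choose a maximizer $x^*\in(-1,1)$ with $x^*\le0$. Replacing $\beta$ by $|b_1b_2|$ then strictly decreases the still positive denominator at $x^*$, so $\sup F<\min\{|a_1|^2,|a_2|^2\}$ by the real case. The price is that this depends on the real case, including the endpoint subcase above, whereas the paper's argument is self-contained.
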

The proof is direct but technical, relying on an explicit computation of the dispersion relation and a careful analysis of the resulting one-variable maximization problem. It is included in Appendix \ref{sec:warm-up}.

\begin{remark}
The real case
$0<|a_1a_2|<1$ and $b_1\overline{b_2}\in\mathbb R$,
contains the homogeneous walk in a period-two electric field. Indeed, as explained in Remark \ref{rem:Lower-bound}, the electric field can be absorbed into the coins. For $p=2$, this gives
$(a_2,b_2)=-(a_1,b_1)$.
Thus $|a_1|=|a_2|$ and $b_1\overline{b_2}=-|b_1|^2\in\mathbb R$, so the previous lemma gives
$v(U_2(c))=|a_1|$.
This agrees with the electric-field formula $v=|a_1|^{p/2}=|a_1|$ for $p=2$ (even period).

On the other hand, the non-real case,  $0<|a_1a_2|<1$ and $b_1\overline{b_2}\notin\mathbb R$, shows that other periodic phase choices can lead to smaller velocities. Thus, for fixed transmission moduli, some choices of the coin phases produce a velocity smaller than the one produced by the period-two electric field.
\end{remark}

\subsection{The perturbative bottleneck regime}
Lemma \ref{lem:warm-up} shows that, for 2-periodic shift-coin walks, the velocity is determined by the smallest of the two transmission parameters. This naturally raises the question whether, for a general $p$-periodic coin, the velocity is also controlled by the  weakest transmission parameter. In this sense, the weakest coin acts as a \emph{bottleneck} for transport. We use perturbative arguments to confirm that this bottleneck effect is valid for $p>2$ in the regime where the smallest transmission parameter goes to zero.  In particular, Theorem \ref{thm:perturbative} below shows that when all remaining coins are perfect transmitters, the exact scaling of the velocity is given by the minimal transmission parameter. In this special case, Theorem \ref{thm:1-defect} proves the stronger non-perturbative statement that the velocity is exactly equal to this minimum. Somewhat surprisingly, if the other coins are not perfect transmitters, then the velocity can be larger. Numerics in Section \ref{sec:numerics} are confirming that for general periodic quantum walks, the minimum transmission parameter is not an upper bound for the velocity, see Remark \ref{rem:not-min}.

Consider a  $p$-periodic $c$ as in \eqref{def:c-p} with one weak transmission parameter, denoted by $\varrho$, i.e.,
\begin{equation}\label{def:varrho}
0<\varrho:=\min_{1\leq k\leq p}|a_k| \ll 1,
\end{equation}
and assume that $\varrho$ is non-degenerate in the sense that it is attained at a unique site
\begin{equation}\label{unique-site}
|\{j:\ |a_j|=\varrho\}|=1,
\qquad
m:=\arg\min_{1\leq k\leq p}|a_k|.
\end{equation}
We study $v(U_p(c(\varrho)))$ in the regime when $\varrho\to 0$. Here $c=c(\varrho)$ where
\begin{equation}
a_m(\varrho)=\alpha_m\varrho,\qquad b_m(\varrho)=\beta_m\sqrt{1-\varrho^2}, \qquad |\alpha_m|=|\beta_m|=1,
\end{equation}
while all other coins are fixed.

Define $L_m$ and $R_m$ to be the number of consecutive perfect transmitters, i.e. sites with $|a_j|=1$, to the \emph{left} and  \emph{right} of the weak site $m$:
\begin{align}\label{def:RL}
R_{m}&:=\max\{r\geq 0:\ |a_{m+1}|=|a_{m+2}|=\cdots=|a_{m+r}|=1\},\notag\\
L_{m}&:=\max\{\ell\geq 0:\ |a_{m-1}|=|a_{m-2}|=\cdots=|a_{m-\ell}|=1\},
\end{align}
where the indices are understood cyclically.

\begin{theorem}\label{thm:perturbative} 
For a given positive integer $p$, consider the $p$-periodic shift-coin walk $U_p(c(\varrho))=SC(c(\varrho))$, where $\varrho$ is the weakest transmission parameter, see \eqref{def:varrho}. Then
\begin{equation}\label{eq:main:bound}
\lim_{\varrho\to 0} \frac{v\left(U_p\left(c(\varrho)\right)\right)}{\varrho}
\begin{cases}
=1,
& \text{if } |a_j|=1 \text{ for all } j\neq m,\\[0.2cm]
\displaystyle \leq \frac{p}{2\sqrt{(L_m+1)(R_m+1)}},
& \text{otherwise}.
\end{cases}
\end{equation}
Here $R_m$ and $L_m$ are the numbers of perfect transmitters adjacent to the weak site $m$, as defined in \eqref{def:RL}.
\end{theorem}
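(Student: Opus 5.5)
Our plan is to work entirely in the Floquet picture. After a Fourier transform over the period cell, $U_p(c)$ becomes a family $\widehat U(k)$ of $2p\times 2p$ unitary matrices, $k\in[0,2\pi)$. The velocity \eqref{def:v} equals $\max_j\sup_k |\omega_j'(k)|$, normalized by $p$, where $e^{-i\omega_j(k)}$ are the eigenvalues. Where the eigenvalues are simple (generic in $k$, via Section \ref{sec:simple-spec}), the Hellmann--Feynman identity gives
\[
\omega_j'(k)=\langle \phi_j(k),\, \widehat Q\,\phi_j(k)\rangle .
\]
Here $\widehat Q$ is the (diagonal, $\pm$-chirality weighted) generator of the $k$-dependence, coming from the bonds crossing the cell boundary. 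We choose the cell so that the weak coin at site $m$ sits at the boundary. Then only the hopping through the coin $C_m$ carries the quasimomentum $k$, and its $k$-dependent off-diagonal part is proportional to $a_m=\alpha_m\varrho$.

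\textbf{Leading-order expansion.} At $\varrho=0$ the coin $C_m$ is a perfect reflector. The cell decouples into a finite chain, and $\widehat U(k)$ is independent of $k$, with eigenvalues $e^{-i\omega_j^0}$. First-order perturbation theory in $\varrho$ then gives $\omega_j(k)=\omega_j^0+\varrho\,\Omega_j(k)+O(\varrho^2)$. The first-order term is
\[
\varrho\,\Omega_j(k)=\langle \phi_j^0, (\partial_\varrho \widehat U)\phi_j^0\rangle ,
\]
which involves the products $\overline{\phi^0_j(m,+)}\phi^0_j(m,-)e^{\pm ik}$ of the two amplitudes of the unperturbed eigenvector at the two endpoints adjacent to the reflector. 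Since the $k$-dependence is $e^{\pm ik}$, we get
\[
|\Omega_j'(k)|\le 2|\phi^0_j(\text{left end})|\,|\phi^0_j(\text{right end})| .
\]
The velocity is therefore $\varrho\cdot \max_j 2|\phi_j^0(\ell)||\phi_j^0(r)|\cdot(\text{normalization})+o(\varrho)$. One must check that the $O(\varrho^2)$ remainder is uniform in $k$ and that derivatives commute with the limit. For this we use analyticity in $(\varrho,k)$ together with the simplicity results from Section \ref{sec:simple-spec}. In the degenerate case we use the projected perturbation on the eigenspace instead.

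\textbf{Bounding the endpoint amplitudes.} We now estimate $\max_j |\phi_j^0(\ell)||\phi_j^0(r)|$ for the decoupled chain of length $p$. In the all-transmitter case $|a_j|=1$ for $j\neq m$, the unperturbed chain is a pure shift that bounces off the reflector. Its $2p$ eigenvectors are uniform in modulus, with $|\phi|^2=1/(2p)$ at every site. Working out the normalization makes the limit exactly $1$; this is the computation to carry out first, as a consistency check of the constants. In general, $L_m$ perfect transmitters to the left of $m$ force the eigenvector to carry equal modulus along $L_m+1$ consecutive components, because a perfect transmitter moves amplitude without splitting it. Hence $(L_m+1)|\phi(\ell)|^2\le \|\phi\|^2$-portion, and similarly $(R_m+1)|\phi(r)|^2\le$ its portion. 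Using the constraint that the two portions sum to at most $1$ and applying AM--GM gives
\[
|\phi(\ell)||\phi(r)|\le \frac{1}{2\sqrt{(L_m+1)(R_m+1)}} .
\]
Multiplying by the period factor $p$ then yields the stated bound.

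\textbf{Main obstacle.} The hard part is the rigorous justification of the first-order expansion of the derivative $\omega_j'(k)$ uniformly in $k$. This is especially delicate where eigenvalues of the decoupled matrix are degenerate, which happens systematically in the all-transmitter case. We would first try degenerate perturbation theory on the reduced $\varrho$-linear matrix restricted to each unperturbed eigenspace. To control the remainder, we would use the Kato analytic eigenprojections in $\varrho$. The exact equality in the transmitter case also requires a matching lower bound, which we would obtain from the explicit first-order dispersion $\Omega_j(k)$ there.
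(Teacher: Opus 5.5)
Your architecture matches the paper's. You perturb around the decoupled walk with a perfect reflector at $m$, extract the first-order term $2\Re\bigl(e^{i\theta}\alpha\overline{\beta}\,x_{j,m}^+\overline{x_{j,m}^-}\bigr)$ via Hellmann--Feynman, and bound $|x_{j,m}^+x_{j,m}^-|$ by equal-modulus chains plus AM--GM. The gap is in the chain count: your final constant comes out right only because two factors of $2$ cancel. With $L_m+1$ and $R_m+1$ equal-modulus entries, the constraint $(L_m+1)x^2+(R_m+1)y^2\le1$ (with $x=|x_{j,m}^+|$, $y=|x_{j,m}^-|$) gives $xy\le\frac{1}{2\sqrt{(L_m+1)(R_m+1)}}$. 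You must then multiply by your own factor in $|\Omega_j'|\le 2xy$, which is correct and attained in $\theta$, and by $p$. This proves only $\lim_{\varrho\to0}v/\varrho\le p/\sqrt{(L_m+1)(R_m+1)}$; the sentence ``multiplying by $p$ yields the stated bound'' drops that $2$.

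The missing idea is that each chain has $2(L_m+1)$, resp.\ $2(R_m+1)$, entries, because the decoupled eigenvectors carry zero current. From $C_nx_{j,n}=\lambda_j\,(x_{j,n+1}^+,x_{j,n-1}^-)^T$ and unitarity, $|x_{j,n+1}^+|^2-|x_{j,n}^-|^2$ is independent of $n$ and vanishes at the reflector. Hence $|x_{j,n}^-|=|x_{j,n+1}^+|$ at every site, while $|x_{j,\ell}^+|=|x_{j,\ell}^-|$ at transmitter sites (Lemma~\ref{lem:eigen-bonds}). So the chain through $x_{j,m}^+$ passes through both entries of each of the $L_m$ transmitters on the left and ends at $x_{j,m-L_m-1}^-$; the right side is symmetric. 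The two chains are disjoint precisely because some coin other than $m$ is not a perfect transmitter. Otherwise they merge into one chain of all $2p$ entries, which is the equality case. Then $2(L_m+1)x^2+2(R_m+1)y^2\le1$ yields $2xy\le\frac{1}{2\sqrt{(L_m+1)(R_m+1)}}$.

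The factor matters. Take real coins with $(|a_1|,|a_2|,|a_3|)=(\varrho,1,2\sqrt2/3)$, so $L_1=0$ and $R_1=1$. The decoupled eigenvector with $\lambda^2=-1$ has $|x_1^+|=\sqrt2\,|x_1^-|$ and saturates the bound. Thus $\lim v/\varrho=3\sqrt2/4=p/\bigl(2\sqrt{(L_1+1)(R_1+1)}\bigr)$ there, and the weaker bound your count gives is not the statement.

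Second, the obstacle you single out does not occur. By Theorem~\ref{thm:simple-spec}(ii), the one-reflector decoupled Floquet matrix always has simple spectrum: an eigenvector is fixed by the single scalar $x_m^-$ through the transfer matrices of the other sites. In the all-transmitter case the matrix is a weighted cyclic permutation of all $2p$ basis vectors. Its characteristic polynomial is $z^{2p}-w$ with $|w|=1$, so it has $2p$ distinct eigenvalues rather than systematic degeneracies. Together with $\widehat U_p(c(0),\theta)=D_\theta\widehat U_p(c(0),0)D_\theta^{-1}$, this gives a $\theta$-independent spectral gap. Consequences:
\begin{itemize}
\item One Riesz contour serves all $\theta$, the branches are analytic in $(\varrho,\theta)$, and the Taylor remainder and its $\theta$-derivative are uniformly bounded, so no degenerate perturbation theory is needed.
\item No separate lower bound is needed for the equality case. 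The first-order term is $2\Re(e^{i\theta}z_j)$ with $z_j$ independent of $\theta$, so its maximum over $\theta$ is exactly $2|z_j|$. With all entries of modulus $1/\sqrt{2p}$, this gives $v/\varrho\to1$.
\end{itemize}
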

The proof of Theorem \ref{thm:perturbative} is perturbative and is given mainly in Section \ref{sec:perturbation}. We regard $U_p(c(\varrho))$ as a perturbation of the decoupled walk $U_p(c(0))$. The first step is to understand the Floquet spectrum at $\varrho=0$. Under the non-degeneracy
assumption that the weakest coin is attained at a unique site, the corresponding Floquet matrix has simple spectrum, see
Theorem \ref{thm:simple-spec}(\ref{thm:spectrum:2}). This explains the uniqueness assumption in \eqref{unique-site}. When the minimum is attained at several sites, the same argument applies provided the associated decoupled Floquet matrix has simple spectrum. This case is addressed in Theorem \ref{thm:spectrum:3}, see also Remark \ref{rem:multiple-reflectors} that provides the bound \eqref{rem:MR-bound} in this several sites setting.

Once simplicity is available, the eigenvalues and eigenvectors can be perturbed from $\varrho=0$, and the velocity estimate reduces to controlling the first-order behavior of the Floquet eigenvectors and the resulting derivatives of the dispersion relations.

Here are some remarks.
\begin{enumerate}[(A)]\itemsep0.2cm
\item Theorem \ref{thm:perturbative} quantifies the bottleneck effect created by the
smallest transmission parameter. As $\varrho\to0$, the coin at the site $m$
approaches a perfect reflector, and the velocity goes to zero at least linearly in
$\varrho$. The leading coefficient depends on the local arrangement of perfect transmitter ($|a_j|=1$) coins
around the weak site.

The role of the adjacent perfect transmitters is particularly transparent. If the weak
coin $m$ is surrounded by $L_m$ consecutive perfect transmitters on the left and
$R_m$ consecutive perfect transmitters on the right, then the leading
coefficient is bounded as in \eqref{eq:main:bound}.
Thus, for fixed period $p$, the upper bound improves as the weak coin is placed
between longer chains of perfect transmitters. 

\item In the extremal case where all coins except the weakest one are perfect transmitters, the perturbative proof  shows that
\begin{equation}
v(U_p(c(\varrho)))=\varrho+\mathcal{O}(\varrho^2).
\end{equation}
In fact, this illustrates that the bottleneck effect is exact beyond the perturbative regime: if the $p$-periodic string $c$ is given as
\begin{equation}
c=\left((a,b),(1,0),\ldots,(1,0)\right),
\end{equation}
then $v(U_p(c))=|a|$, 
see Theorem~\ref{thm:1-defect}.

\item The distinction between the two cases in \eqref{eq:main:bound} may seem counterintuitive at first. One might expect the presence of perfect transmitters to increase the velocity, since transport through such sites is locally maximal. However, the case in which all coins except the weakest one are perfect transmitters has a special structure. In that situation, the motion between two consecutive weakest sites is essentially deterministic, and the perfect transmitting regions enhance transport symmetrically to the left and to the right of the bottleneck. Since the weakest coin is repeated periodically, the whole periodic chain behaves like a homogeneous walk on the line whose transmission parameter is precisely $\varrho$. 
When some of the coins are not perfect transmitters, this homogeneous picture breaks down. The additional scattering within one period may then produce a larger leading-order prefactor in the small-$\varrho$ scaling of the velocity.

\item This estimate also exhibits a quantitative trade-off between the period $p$ and the size of the minimal transmission parameter. The bound scales linearly in $p$,
whereas the bottleneck-effect regime is governed by the small parameter $\varrho=\min_k |a_k|$. Consequently, to obtain a comparable bottleneck effect for larger periods, one must take the minimal transmission parameter $\varrho$ sufficiently smaller.

\item Here is an explicit example:
Let $p=10$, and consider the periodic string of transmission parameters
\begin{equation}
(|a_1|,\ldots,|a_{10}|)
=
(\varrho,1,1,1,0.6,0.8,0.7,1,1,1),
\qquad 0<\varrho\ll0.6.
\end{equation}
The unique weakest coefficient is $|a_1|=\varrho$. To the right of $a_1$, there are
three consecutive perfect transmitters: $
|a_2|=|a_3|=|a_4|=1$, i.e., $R_1=3$.
To the left of $a_1$, using the cyclic convention, we have
$|a_{10}|=|a_9|=|a_8|=1$, i.e., $L_1=3$.
Therefore Theorem \ref{thm:perturbative} gives
\begin{equation}
\lim_{\varrho\to0}\frac{v(U_{10}(c(\varrho)))}{\varrho}\leq\frac{10}{2\sqrt{(3+1)(3+1)}}=\frac{10}{8}=1.25.
\end{equation}
Thus, for small $\varrho$, the velocity is at most of order $1.25\,\varrho$, up to
higher-order corrections.

For comparison, if the weak coin at site 1 were not adjacent to any perfect transmitters, so that
$L_1=R_1=0$, the same theorem would only give
\begin{equation}
\lim_{\varrho\to0}\frac{v(U_{10}(c(\varrho)))}{\varrho}\leq\frac{10}{2}=5.
\end{equation}
This illustrates that the bottleneck effect becomes stronger when the weakest coin is
surrounded by longer strings of perfect transmitters.
\end{enumerate}

Theorem \ref{thm:perturbative} above is perturbative. It describes the bottleneck-effect  regime in which one transmission parameter is much smaller than all the others, and the velocity is studied through the small-$\varrho$ behavior of the dispersion relations. In particular, this result relies on singling out a unique weakest coin and on analyzing the corresponding decoupled limit. Next, we discuss a general a priori bound.

\subsection{General harmonic-type a priori bounds}

We now turn to a different type of bound, valid for $p$-periodic coins  with nonzero transmission parameters. This bound does not require a unique weakest coin. Instead, it bounds the velocity in terms of quantities averaged over the full period. The first such quantity is the harmonic mean of the transmission parameters, while the second is a refinement that also detects the spatial variation of neighboring coins. The resulting estimates are especially effective in the near-homogeneous regime, where the transmission parameters are close to one another.

For a $p$-periodic string $c$ in \eqref{def:c-p} with $a_j\neq0$ for all $j$,
we define its harmonic mean by
\begin{equation}\label{def:Hc}
H(c):=p\left(\sum_{j=1}^p\frac{1}{|a_j|}\right)^{-1}.
\end{equation}
We also define the refined harmonic mean by
\begin{equation}\label{def:Hc-tilde}
\widetilde H(c):=p\left(\sum_{j=1}^p\frac{1}{|a_j|}+\frac{1}{4 \mu_a}\sum_{j=1}^p\left|t_{j+1}-t_j\right|^2\right)^{-1}
\end{equation}
with
\begin{equation}
\mu_a:=\max_{1\leq k\leq p}\frac{|a_k|^2}{1+|a_k|}, \qquad t_j:=\frac{|b_j|}{1+|a_j|}=\sqrt{\frac{1-|a_j|}{1+|a_j|}}\in[0,1),
\end{equation}
where the indices are understood cyclically.

Both $H(c)$ and $\widetilde H(c)$ are independent of the period chosen to represent
the periodic sequence. Indeed, suppose that the displayed period $p$ is not
fundamental, and let $q$ be the fundamental period. Then $p=mq$ for some
$m\in\mathbb N$. Since the quantities $|a_j|^{-1}$ and
$|t_{j+1}-t_j|^2$ are $q$-periodic, we have
\begin{equation}
\sum_{j=1}^{p}\frac{1}{|a_j|}=m\sum_{j=1}^{q}\frac{1}{|a_j|},\qquad\sum_{j=1}^{p}|t_{j+1}-t_j|^2=m\sum_{j=1}^{q}|t_{j+1}-t_j|^2.
\end{equation}
Hence $H(c)$ and $\widetilde H(c)$ are unchanged whether $p$ or $q$ is used as
the period.
\begin{remark}\label{simpler12}
Since $0<|a_k|\leq1$, we have $\mu_a\leq 1/2$. Thus, in the definition of $\widetilde H(c)$, one may replace $1/(4 \mu_a)$ by $1/2$. This gives the simpler but slightly weaker bound
\begin{equation}
\widetilde H(c)\leq p\left(\sum_{j=1}^p\frac{1}{|a_j|}+\frac12\sum_{j=1}^p |t_{j+1}-t_j|^2\right)^{-1}.
\end{equation}
\end{remark}
\begin{theorem}\label{thm:harmonic}
For a given positive integer $p$, consider the $p$-periodic shift-coin walk
$U_p(c)=SC(c)$ determined by the string $c$ as in \eqref{def:c-p} where $a_j\neq 0$ for all $j=1,\ldots,p$. Then
\begin{equation}\label{eq:hatmonic:main}
v(U_p(c))\leq \widetilde H(c)\leq H(c),
\end{equation}
where $H(c)$ and $\widetilde H(c)$ are the harmonic mean and the refined
harmonic mean given in \eqref{def:Hc} and \eqref{def:Hc-tilde}, respectively.
\end{theorem}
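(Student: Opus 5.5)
The plan is to reduce the velocity bound to a flux--conservation statement for Floquet eigenvectors. By the block/Floquet framework of Section~\ref{sec:block-velocity-Fourier}, $v(U_p(c))$ is the supremum over quasimomenta $k$ and bands $r$ of $|\omega_r'(k)|$. Here $e^{i\omega_r(k)}$ are the eigenvalues of the $2p\times 2p$ Floquet matrix $\widehat U(k)$, and $k$ is scaled so that one unit of $k$ corresponds to one period of length $p$. Where eigenvalues cross, we pass to analytic branches, as allowed by the simplicity results of Section~\ref{sec:simple-spec}. We then apply the Hellmann--Feynman identity to a normalized eigenvector $\psi=(\psi_1,\dots,\psi_p)$, $\psi_j=(\psi_j^+,\psi_j^-)\in\C^2$. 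Since $k$ enters only through the shift, and only on the bonds crossing the cell boundary, a gauge choice that spreads the phase $e^{\pm ik/p}$ evenly over every bond gives
\begin{equation*}
\omega'(k)=\sum_{j=1}^p\Big(|(C_j\psi_j)^+|^2-|(C_j\psi_j)^-|^2\Big).
\end{equation*}
In other words, $\omega'$ is the total right-minus-left current in one cell.

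The second step is a stationarity argument. The eigenvalue equation $e^{i\omega}\psi=\widehat U(k)\psi$ shows that the bond current $J_j:=|(C_j\psi_j)^+|^2-|\psi_{j+1}^-|^2$ is independent of $j$, since unitarity of each $C_j$ gives a Kirchhoff law at each site. Unitarity also identifies the cell sum with $pJ$. Hence $|\omega'(k)|=p|J|$, where $J$ is a single common flux.

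The key local lemma is then
\begin{equation*}
|J|\le |a_j|\,\|\psi_j\|^2 .
\end{equation*}
To prove it, write $J$ as the Hermitian form $\langle\psi_j,(C_j^*\sigma_zC_j)\psi_j\rangle$, or an equivalent form mixing incoming and outgoing amplitudes at site $j$, and compute its eigenvalues. These should be $\pm|a_j|$: the diagonal of $C_j^*\sigma_zC_j$ is $\pm(|a_j|^2-|b_j|^2)$, and it is combined with the conserved incoming data. Summing $\|\psi_j\|^2\ge|J|/|a_j|$ over $j$ and using $\|\psi\|=1$ yields
\begin{equation*}
v=p|J|\le p\Big(\sum_j|a_j|^{-1}\Big)^{-1}=H(c).
\end{equation*}

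For the refinement $\widetilde H(c)$ we sharpen the lemma to
\begin{equation*}
\|\psi_j\|^2\ \ge\ \frac{|J|}{|a_j|}+\gamma_j\,\mathrm{dist}^2 ,
\end{equation*}
where $\gamma_j$ is the spectral gap of the quadratic form and the distance is measured from $\psi_j$ to the extremal eigendirection. A direct computation should show that this extremal direction is $(1,t_j)$ up to phases, with $t_j=|b_j|/(1+|a_j|)$. The hopping structure links the ratio of components at site $j+1$ to that at site $j$, so both sites cannot be extremal at once unless $t_j=t_{j+1}$. Quantifying this by a triangle-type inequality gives a joint deficit of at least $|J|\,|t_{j+1}-t_j|^2/(4\mu_a)$, with $\mu_a$ absorbing the worst-case gap normalization $|a_k|^2/(1+|a_k|)$. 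Summing over $j$ gives
\begin{equation*}
1\ge|J|\Big(\sum_j|a_j|^{-1}+\tfrac{1}{4\mu_a}\sum_j|t_{j+1}-t_j|^2\Big),
\end{equation*}
and hence $v\le\widetilde H(c)$. The inequality $\widetilde H\le H$ is trivial.

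I expect the main obstacle to be this refined step. The difficulty is pinning down the exact extremal directions and the neighbor-coupling estimate so that the constant comes out precisely as $1/(4\mu_a)$ rather than some weaker constant.
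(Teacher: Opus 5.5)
Your route to $v\le H(c)$ is the paper's argument: Hellmann--Feynman, a current that is constant along the period, and a local Hermitian form with eigenvalues $\pm|a_j|$. Spreading the quasimomentum phase over all bonds, instead of putting it on the boundary bond as the paper does, only changes which expression you identify with the current.

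Two slips must be fixed in that part. First, with $\widehat U=\widehat S\widehat C$ the left-going flux across the bond $(j,j+1)$ is $|(C_{j+1}\psi_{j+1})^-|^2=|\psi_j^-|^2$. Your $J_j$ as written equals $|\psi_{j+1}^+|^2-|\psi_{j+1}^-|^2$, which is not conserved. Second, $C_j^*\sigma_zC_j$ is unitarily equivalent to $\sigma_z$, so its eigenvalues are $\pm1$ and it only gives $|J|\le\|\psi_j\|^2$. The form that works is the mixed one,
\[
J=|\psi_j^+|^2-|(C_j\psi_j)^-|^2=\tfrac12\bigl\langle\psi_j,(\sigma_z+C_j^*\sigma_zC_j)\psi_j\bigr\rangle,\qquad \tfrac12(\sigma_z+C_j^*\sigma_zC_j)=\begin{bmatrix}|a_j|^2&\overline{a_j}b_j\\ a_j\overline{b_j}&-|a_j|^2\end{bmatrix},
\]
whose eigenvalues are $\pm|a_j|$.

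The genuine gap is the refined inequality $v\le\widetilde H(c)$, which is the main content of the statement. You assert a deficit of order $|J|\,|t_{j+1}-t_j|^2/(4\mu_a)$ but give no mechanism that produces the factor $|J|$ or the constant. The remark that neighbouring sites cannot both be extremal is only qualitative. Comparing component ratios would also require controlling a $\lambda$-dependent M\"obius map between the two sites, which you have not done.

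What is missing is an exact linear relation between the residuals at consecutive sites. Put $\alpha_j=a_j/|a_j|$, $\tau_j=b_j/(1+|a_j|)$ and $R_j^{(+)}=\alpha_j\overline{\tau_j}\psi_j^+-\psi_j^-$. Then $\|\psi_j\|^2=J/|a_j|+(1+|a_j|)|R_j^{(+)}|^2$ holds exactly. The eigenvalue equations, together with $\overline{\tau_j}b_j=1-|a_j|$ and $b_j-|a_j|\tau_j=\tau_j$, give
\[
|a_{j+1}|R_{j+1}^{(+)}-\alpha_{j+1}e^{i\omega}|a_j|R_j^{(+)}=-\alpha_{j+1}\bigl(\overline{\tau_{j+1}}+e^{2i\omega}\overline{\tau_j}\bigr)\psi_{j+1}^+ .
\]
There is an analogous identity for $R_j^{(-)}=\psi_j^++\overline{\alpha_j}\tau_j\psi_j^-$, used when $J<0$.

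The factor $|J|$ comes from an amplitude bound, not from geometry: $|\psi_{j+1}^+|^2=J+|\psi_j^-|^2\ge J$ if $J>0$, and $|\psi_j^-|^2\ge|J|$ if $J<0$. The reverse triangle inequality gives $|\overline{\tau_{j+1}}+e^{2i\omega}\overline{\tau_j}|\ge|t_{j+1}-t_j|$.

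For the constant, set $D_j:=(1+|a_j|)|R_j^{(s)}|^2$. Squaring and using $(A+B)^2\le2(A^2+B^2)$ together with $|a_j|^2\le\mu_a(1+|a_j|)$ gives $D_j+D_{j+1}\ge|J|\,|t_{j+1}-t_j|^2/(2\mu_a)$. The second bound is what converts the $|a_j|$-weighted residuals in the identity into the deficits $D_j$. Since each site occurs in two consecutive pairs, the cyclic sum yields $\sum_jD_j\ge\frac{|J|}{4\mu_a}\sum_j|t_{j+1}-t_j|^2$. So $\mu_a$ is not a spectral-gap normalization; the normalized form always has gap $2$.
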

The proof is included mainly in Section \ref{sec:harmonic}. After using the simplicity of
the Floquet spectrum from Theorem \ref{thm:simple-spec}, a Hellmann--Feynman
formula reduces the problem to estimating a site-independent quantity associated
with the Floquet eigenvector.

Here are some remarks.
\begin{enumerate}[(A)]\itemsep0.2cm
\item $H(c)$ depends only on the unordered collection of
transmission parameters $|a_1|,\ldots,|a_p|$.
In particular, the harmonic mean bound is invariant under all permutations of the period. By contrast,
the correction term $\sum |t_{j+1}-t_j|^2$ in $\widetilde H(c)$
depends on the relative placement of neighbouring coins. Thus $\widetilde H(c)$
detects the spatial arrangement of the coin sequence.
At the same time, the correction term is invariant under cyclic translations of the period:
\begin{equation}
(t_1,\ldots,t_p)\mapsto (t_{k+1},\ldots,t_p,t_1,\ldots,t_k).
\end{equation}
This is consistent with the regrouping invariance of the velocity discussed in
Remark \ref{rem:regrouping}. Hence $\widetilde H(c)$ gives a bound with the correct
translation symmetry: it is independent of the choice of the starting point of the
period, but it is sensitive to the order in which different coins appear.


\item Since the correction term in $\widetilde H(c)$ is nonnegative, we immediately have $\widetilde H(c)\leq H(c)$.
Equality holds if and only if $|a_1|=\cdots=|a_p|$.
Thus the refined estimate improves the harmonic bound precisely when the transmission
coefficients are not constant along the period.

\item The bounds in Theorem \ref{thm:harmonic} depend only on the moduli of the transmission parameters. Consequently, they do not capture velocity suppression caused purely by periodic phases. This is relevant for electric quantum walks, where the external field can often be absorbed into position-dependent phases of the coins, and where such phases may lead to suppression of the propagation velocity \cite{ewalks,ARS2023-CMP,ARCSW25}. Thus the harmonic-type bounds should be understood as controlling the contribution of the spatial transmission profile, rather than phase-induced transport effects. See Remark \ref{rem:Lower-bound}  for more details on the electric-field setting and on how the general lower bound, given in Theorem \ref{thm:v-lower}, still captures the exponential suppression of the velocity.

\item \label{passing-the-limit}
The velocity bounds in Theorem \ref{thm:harmonic} apply for periodic settings. One might
hope to approximate an inhomogeneous shift-coin walk $U$ by periodic walks
$U_p$, obtained by periodizing longer and longer finite blocks of the original
coin sequence, and then pass to the limit
\begin{equation}
v(U)\overset{?}{=}\lim_{p\to\infty}v(U_p).
\end{equation}
This conclusion is false in general, even if $U_p\to U$ strongly and even if the
periodic approximants use only coins appearing in the original walk.
Indeed, fix $0<\delta<1$, and consider the shift-coin walk $U=SC$ with
\begin{equation}
a_j=\begin{cases}
1, & j\leq 0,\\ 
\delta, & j\geq 1,
\end{cases}\qquad b_j=\sqrt{1-a_j^2}.
\end{equation}
Then all $a_j\neq 0$. Moreover, the left half-line is a perfect transmitter. For
example, 
\begin{equation}
U^t\delta_0^-=\delta_{-t}^-,\qquad \text{ for all }t\in\mathbb{N}.
\end{equation} 
This  shows that $v(U)=1$.

Now let $U_N:=SC_N$ be the periodic walk obtained by periodizing the finite block with transmitting parameters $a_j$'s as
\begin{equation}
\underbrace{1,\ldots,1}_{N\ \mathrm{times}},
\underbrace{\delta,\ldots,\delta}_{N\ \mathrm{times}}.
\end{equation}
The fundamental period of this block is $2N$. Also $U_N\to U$ strongly  \footnote{ Indeed, for
every finitely supported vector $\varphi$ (dense in $\mathcal{H}$), the coefficients of $C_N$ agree with those
of $C$ on the support of $\varphi$ for all sufficiently large $N$. Since the
operators $U_N$ and $U$ are unitary (hence bounded), this implies strong convergence on
$\mathcal H$.}.
However, the velocities $v(U_N)$ do not converge to $v(U)$. Applying the harmonic
bound to the fundamental period $2N$, we obtain
\begin{equation}
v(U_N)
\leq
2N\left(
N+\frac{N}{\delta}
\right)^{-1}
=
\frac{2\delta}{1+\delta} < v(U) \qquad\text{(because $\delta<1$)}.
\end{equation}
Therefore
\begin{equation}
\limsup_{N\to\infty}v(U_N)
<
v(U).
\end{equation}

\item  The example  in (\ref{passing-the-limit}) above does not settle the genuinely limit-periodic case, where the coin sequence is a uniform limit of periodic ones, see e.g., \cite{Ong2012LimitPeriodic,FillmanOng2017,fillmanLimitperiodic2017,DamanikFillman2020}. It only shows that strong convergence of periodic approximants is too weak for passing velocity bounds to the limit. Whether harmonic-type bounds such as those in Theorem \ref{thm:harmonic} extend to limit-periodic quantum walks remains open. 

However, we should note that related results suggest that the spreading rate of a limit periodic operator might not be similar to the spreading rates of its periodic approximants. For example, \cite{FillmanOng2017} shows that limit-periodic CMV operators generically have zero measure spectrum, even though their periodic approximants have absolutely continuous spectrum. If two operators are close in norm, this can imply that their eigenvalues are close, however we define velocity using the $\theta$-derivative of the eigenvalues rather than the location of the eigenvalues. So even if an operator is a norm limit of periodic operators, this will not imply that the velocity bound will carry over.

\end{enumerate}

\subsection{Periodic split-step walks}
We now turn to split-step walks with periodic coins. Given two sequences
\begin{equation}
c=\bigl((a_j,b_j)\bigr)_{j\in\mathbb Z},
\quad
\tilde c=\bigl((\tilde a_j,\tilde b_j)\bigr)_{j\in\mathbb Z},
\quad
(a_j, b_j), (\tilde a_j,\tilde b_j)\in \mathbb S^3. 
\end{equation}
We define the corresponding coin operators by
\begin{equation}
C(c)=\sum_{j\in\mathbb Z}|j\rangle\langle j|\otimes 
\begin{bmatrix}
a_j & b_j\\
-\overline{b_j} & \overline{a_j}
\end{bmatrix},
\qquad
C(\tilde c)=\sum_{j\in\mathbb Z}|j\rangle\langle j|\otimes \begin{bmatrix}
\tilde a_j & \tilde b_j\\
-\overline{\tilde b_j} & \overline{\tilde a_j}
\end{bmatrix}.
\end{equation}
The split-step walk is given by the unitary
\begin{equation}\label{def:split-step}
W(c,\tilde c):=S_+C(c)S_-C(\tilde c),
\end{equation}
where
\begin{equation}\label{def:half-shifts}
S_+=T\otimes P_+ + \idty_{\mathbb Z}\otimes P_-,
\qquad
S_-=\idty_{\mathbb Z}\otimes P_+ + T^{-1}\otimes P_-.
\end{equation}
\begin{remark}
If $C(c)=\idty_{\mathcal H}$ then $W=U(\tilde c)$. Moreover, if $C(\tilde c)=\idty_{\mathcal H}$ then  $W=S_+ C(c)S_-$, and  observe that in this case, see \cite[Lemma 3.2]{ARJS26}, 
\begin{equation}
v(W)=v\left(S_+ C(c)S_-\right)= v\left(S_- S_+ C(c)\right)=v(U(c)).
\end{equation}
Thus, in the following, we discuss the cases where $W(c,\tilde c)$ is  a genuine split-step walk, distinct from a shift-coin walk.
\end{remark}
Our main interest is the periodic case, where $c$ and $\tilde c$ are periodic,
possibly with different periods. We assume that $c$ has fundamental period
$p_1$ and $\tilde c$ has fundamental period $p_2$. Thus
\begin{equation}
(a_{j+p_1},b_{j+p_1})=(a_j,b_j),
\qquad
(\tilde a_{j+p_2},\tilde b_{j+p_2})
=
(\tilde a_j,\tilde b_j).
\end{equation}
In contrast with the shift-coin case, one has to keep track of the relative position of
the two coin sequences. A simultaneous translation of both $c$ and $\tilde c$
does not change the velocity, see Remark \ref{rem:regrouping}, but translating only one of them changes the relative
placement of the two half-steps and may lead to a different split-step walk.

We now explain how the preceding bounds for shift-coin walks imply bounds for
$W(c,\tilde c)$. Define the interlaced sequence $d:\Z\to\mathbb S^3$
by
\begin{equation}\label{def:interlaced-d}
(d)_{2n}:=(\tilde c)_n=(\tilde a_n,\tilde b_n),
\quad
(d)_{2n+1}:=(c)_n=(a_n,b_n),
\quad n\in\mathbb Z.
\end{equation}
Let $U(d):=SC(d)$ be the corresponding shift-coin walk.

The connection between $U(d)$ and $W(c,\tilde c)$ is given by the sieving theorem
\cite[Theorem 3.1]{ARCSW25}. The operator $U(d)$ changes the parity of the position at
each time step. Hence $(U(d))^2$ leaves the even and odd subspaces invariant:
\begin{equation}
\mathcal H=
\left[\ell^2(2\mathbb Z)\otimes\mathbb C^2\right]
\oplus
\left[\ell^2(2\mathbb Z+1)\otimes\mathbb C^2\right].
\end{equation}
Under a natural ordering of the basis \footnote{The precise ordering of the basis is part of the sieving construction in \cite[Theorem 3.1]{ARCSW25}. We omit it here because only the resulting unitary equivalence is used in our arguments.}, one obtains
\begin{equation}\label{eq:sieving-split-step}
(U(d))^2\equiv W(c,\tilde c)\oplus W(\tau\tilde c,c) \quad\text{ where }\quad(\tau\tilde c)_n:=(\tilde c)_{n+1}=(\tilde a_{n+1},\tilde b_{n+1}).
\end{equation}
We now compare the velocities. Since the identification of the even sublattice
$\ell^2(2\mathbb Z)$ with $\ell^2(\mathbb Z)$ rescales the position operator by a
factor of $2$, the velocity of $(U(d))^2$ satisfies, see \cite[Lemma 3.7]{ARCSW25},
\begin{equation}\label{eq:velocity-sieving}
v\left((U(d))^2\right)=2\max\{v(W(c,\tilde c)),v(W(\tau\tilde c,c))\}.
\end{equation}
On the other hand, see \cite[Lemma 3.8]{ARCSW25},
$v\left((U(d))^2\right)=2v(U(d))$.
Combining this with \eqref{eq:velocity-sieving}, we obtain
\begin{equation}\label{eq:split-step-max}
v(U(d))
=
\max\{v(W(c,\tilde c)),v(W(\tau\tilde c,c))\}.
\end{equation}
In particular,
\begin{equation}\label{eq:split-step-bound-by-shift}
v(W(c,\tilde c))\leq v(U(d)).
\end{equation}
Thus every upper bound for the interlaced shift-coin walk $U(d)$ gives an upper bound
for the split-step walk $W(c,\tilde c)$.

Let $L:=\mathrm{lcm}(p_1,p_2)$, then the interlaced sequence $d$ is $2L$-periodic \footnote{ However fundamental period may be a
proper divisor of $2L$, but this does not affect the velocity bound.}.
Assume that $a_j\neq0$, $\tilde a_j\neq0$, for all  $j\in\mathbb Z$.
Applying the harmonic bound for the shift-coin walk $U(d)$ and using
\eqref{eq:split-step-bound-by-shift}, we get
\begin{equation}\label{eq:split-harmonic-bound}
v(W(c,\tilde c))\leq2L \left( \sum_{j=1}^{L}\frac{1}{|a_j|} +\sum_{j=1}^{L}\frac{1}{|\tilde a_j|}\right)^{-1} = H(\{H(c), H(\tilde c)\}).
\end{equation}
i.e., the velocity of the split-step walk with two periodic coins is bounded by the harmonic mean of the two harmonic means
associated with the two coin sequences. In this sense, the velocity of the split-step walk exhibits a two-level harmonic averaging: first along each coin operator over one period, and then between the two alternating coins.

The refined harmonic bound for the shift-coin walk gives a sharper estimate. Set
\begin{equation}
t_j:=\frac{|b_j|}{1+|a_j|},
\qquad
\tilde t_j:=\frac{|\tilde b_j|}{1+|\tilde a_j|}, \text{ for }j=1,\ldots, L
\end{equation}

and the indices are defined cyclically. Using the interlaced order \eqref{def:interlaced-d}, the refined bound for $U(d)$
yields (we use the simpler bound in Remark \ref{simpler12}),
\begin{equation}\label{eq:split-refined-bound}
v(W(c,\tilde c))\leq2L\Bigg(\sum_{j=1}^{L}\frac{1}{|a_j|}+\sum_{j=1}^{L}\frac{1}{|\tilde a_j|}+\frac12\sum_{j=1}^{L}\left(|t_j-\tilde t_j|^2+|\tilde t_{j+1}-t_j|^2\right)\Bigg)^{-1}.
\end{equation}
The two terms in the last
sum correspond to the two adjacent differences in the interlaced sequence $d$.
This reflects the fact that the split-step dynamics depends not only on the two periodic
coin sequences separately, but also on their relative placement.

\subsection{The connection with CMV matrices}
CMV matrices form a natural unitary analogue of Jacobi matrices \cite{BHJ2003,  canteroFivediagonalMatricesZeros2003}. In the extended setting, they can be realized as split-step quantum walks, and this identification allows us to apply the preceding velocity bounds directly to periodic generalized extended CMV matrices.

For a \emph{Verblunsky pair}
$(\alpha,\rho)\in\mathbb S^3$, define
\begin{equation}
\Theta(\alpha,\rho):=\begin{bmatrix}\overline{\alpha} & \rho\\ \overline{\rho} & -\alpha\end{bmatrix}.
\end{equation}
Given a sequence of Verblunsky pairs $(\alpha_n,\rho_n)_{n\in\mathbb Z}$, we define the block diagonal operators $\mathcal L$ and $\mathcal M$ on
$\ell^2(\mathbb Z)$ by
\begin{align}
\mathcal L& =\mathcal L((\alpha_{2n},\rho_{2n})_{n\in\Z})=\bigoplus_{n\in\Z}\Theta(\alpha_{2n},\rho_{2n}),\\
\mathcal M &= \mathcal M ((\alpha_{2n+1},\rho_{2n+1})_{n\in\Z})=\bigoplus_{n\in\Z}\Theta(\alpha_{2n+1},\rho_{2n+1}),
\end{align}
where the block $\Theta(\alpha_j,\rho_j)$ acts on $\ell^2(\{j,j+1\})$. The associated \emph{generalized extended CMV matrix} is, see e.g., \cite{BHJ2003, CGMV2012QIP, canteroFivediagonalMatricesZeros2003,CFO1}
\begin{equation}\label{def:E}
\mathcal E= \mathcal E\left((\alpha_n,\rho_n)_{n\in\Z}\right)=\mathcal L \mathcal M.
\end{equation} 

In what follows, we are interested in the periodic case. Thus we assume that the sequence $(\alpha_n,\rho_n)_{n\in\mathbb Z}$ is $q$-periodic, that is,
\begin{equation}
(\alpha_{n+q},\rho_{n+q})=(\alpha_n,\rho_n), \qquad n\in\mathbb Z .
\end{equation}
The matrix $\mathcal E$ defined by \eqref{def:E} will be referred to as the corresponding $q$-periodic generalized extended CMV matrix.

Identify $\ell^2(\mathbb Z)$ with $\ell^2(\mathbb Z)\otimes\mathbb C^2$ via
\begin{equation}\label{eq:cmv-identification}
\delta_{2n-1}\mapsto \delta_n^+, \qquad \delta_{2n}\mapsto \delta_n^- .
\end{equation}
With this convention, \cite[Lemma A.1]{ARCSW25} gives the unitary equivalence
\begin{equation}\label{eq:cmv-split-step}
\mathcal E \equiv S_+ C(\gamma) S_- C(\tilde \gamma)=W(\gamma,\tilde \gamma), \text{ where }\gamma_n=(\overline{\rho_{2n}},-\alpha_{2n}),
 \ \tilde \gamma_n=(\overline{\rho_{2n+1}},-\alpha_{2n+1}).
\end{equation}
Thus a periodic generalized extended CMV matrix is, after the identification
\eqref{eq:cmv-identification}, precisely a periodic split-step walk whose two
coin sequences are obtained from the even and odd Verblunsky pairs.
Thus, 
\begin{equation}\label{eq:cmv-velocity-split}
v(\mathcal E)=v(W(\gamma,\tilde \gamma)).
\end{equation}
Consequently, the split-step estimates obtained above apply directly to
$\mathcal E$. 

More explicitly, if $q$ is odd, then both sequences $\gamma$ and $\tilde \gamma$
in \eqref{eq:cmv-split-step} are $q$-periodic. If $q$ is even, then both are
$(q/2)$-periodic. Of course, their fundamental periods may be proper
divisors of these numbers. We denote their fundamental periods by $p_1$ and
$p_2$, respectively, and set $L=\mathrm{lcm}(p_1,p_2)$.

In particular, assuming $\rho_j\neq0$ for all $j$, the
harmonic bound \eqref{eq:split-harmonic-bound} and the refined bound \eqref{eq:split-refined-bound} above hold with $a_j$ and $\tilde a_j$ are replaced by $\rho_{2j}$ and $\rho_{2j+1}$, respectively.

\section{Block velocity and Fourier transform}\label{sec:block-velocity-Fourier}
 In this section, we introduce the Fourier framework that underlies both main
results, Theorems \ref{thm:perturbative} and \ref{thm:harmonic}. By regrouping the original lattice into blocks of length $p$, the
$p$-periodic walk becomes translation invariant. The Fourier transform then
reduces the analysis to a family of finite-dimensional Floquet matrices, whose
dispersion relations determine the block velocity.

\subsection{The block velocity}
Regroup the lattice into blocks of length $p$ as
\begin{equation}\label{def:H-regrouped}
\mathcal H\equiv \ell^2(\Z)\otimes \C^{2p}.
\end{equation}
 The corresponding  position operator  is no longer the original position operator $Q$ in  \eqref{def:Q}, but rather the
operator which counts the block number. The purpose of this section is to
compare these two position operators and to show that they define the same velocity up to the factor $p$. This leads to the notion of
\emph{block velocity}, which is the velocity measured in the regrouped lattice and
will be the quantity computed through the dispersion relations.

We introduce the modified position operator associated with the regrouped lattice by
\begin{equation}
Q_p:=\sum_{j\in\Z} j\, |j\rangle\langle j|\otimes \idty_{2p}.
\end{equation}
Thus, $Q_p$ labels the blocks rather than the original lattice sites. Equivalently, on the
original labelling of $\ell^2(\Z)\otimes \C^2$, this operator is given by
\begin{equation}\label{def:Qp}
Q_p \equiv\sum_{k=0}^{p-1}\sum_{j\in\Z} j\, |pj+k\rangle\langle pj+k|\otimes \idty_2.
\end{equation}
In other words, $Q_p$ assigns the same position label $j$ to all sites in the $j$-th block
of $p$ consecutive original positions.

A direct computation using \eqref{def:Q} and \eqref{def:Qp} gives
\begin{equation}
Q-pQ_p=\sum_{k=0}^{p-1}\sum_{j\in\Z}k\, |pj+k\rangle\langle pj+k|\otimes \idty_2 .
\end{equation}
Thus, $Q-pQ_p$ is a bounded operator and $\|Q-pQ_p\|=p-1$. Hence, by \cite[Thm 3.1]{ARJS26}, the scaled modified position operator $p\tilde Q$ can be used instead of the $Q$ in the definition of the velocity, i.e., 
\begin{equation}
v(U_p(c))=
\sup_{\substack{\psi\in D(Q)\\ \|\psi\|=1}}
\limsup_{t\to\infty}\frac{1}{t}\left\|Q(U_p(c))^t\psi\right\|
=
\sup_{\substack{\psi\in D(Q_p)\\ \|\psi\|=1}}
\limsup_{t\to\infty}\frac{1}{t}\left\|pQ_p(U_p(c))^t\psi\right\|.
\end{equation}
We define the corresponding \emph{block velocity} by
\begin{equation}\label{def:v-p}
v^{(p)}_{\mathrm{blk}}(U_p(c)):=
\sup_{\substack{\psi\in D(Q_p)\\ \|\psi\|=1}}
\limsup_{t\to\infty}\frac{1}{t}
\left\|Q_p (U_p(c))^t \psi\right\|.
\end{equation}
Hence,
\begin{equation}\label{eq:v-vp}
v(U_p(c))=p\ v^{(p)}_{\mathrm{blk}}(U_p(c)).
\end{equation}

The factor $p$ has a simple interpretation. Consider, for instance, the trivial walk of
perfect transmitters, namely $|a_j|=1$ for all $j\in\Z$. With respect to the original
position operator $Q$, the velocity is equal to $1$. However, after regrouping $p$
consecutive sites into a single block, the operator $Q_p$ counts the block number rather
than the original position. Therefore, a particle must travel through $p$ original sites in
order to move from one regrouped site to the next. Consequently, its block velocity is
$1/p$, which is consistent with the relation
$v(U_p(c))=p\ v^{(p)}_{\mathrm{blk}}(U_p(c))$.

\begin{remark}\label{rem:regrouping}
The operator $Q_p$ in \eqref{def:Qp} corresponds to the regrouping of the original lattice sites into
blocks of length $p$ starting at site $0 \mod p$ as in Figure \ref{fig:regrouping}.
\begin{figure}[H]
\centering
\includegraphics[width=0.9\textwidth]{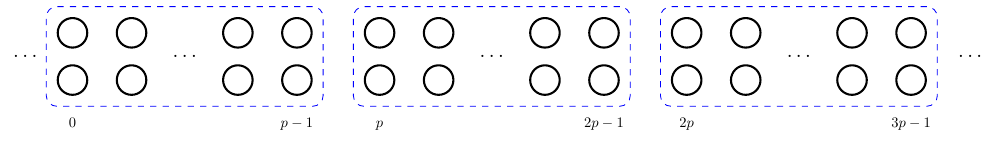}
\caption{Regrouping of the one-dimensional lattice sites into blocks of length $p$. Each site carries two internal states, represented by the upper circle for the spin basis $|+\rangle$ and the lower circle for $|-\rangle$. }
\label{fig:regrouping}
\end{figure}

Thus, $Q_p$ assigns the same label $j$ to all sites in the block
$\{pj,pj+1,\ldots,pj+p-1\}$.

We now observe that the block velocity is invariant under cyclic regrouping, i.e., the velocity is independent of the choice of the starting point of the groups. Indeed, let $T_{\mathcal H}:=T\otimes \idty_2$ denote the lattice translation and, for $k\in\Z$,
define the shifted regrouped position operator by
\begin{equation}
Q_{p,k}:=T_{\mathcal H}^k Q_p T_{\mathcal H}^{-k}.
\end{equation}
This operator corresponds to shifting the above decomposition into blocks by $k$ lattice
sites. We claim that $pQ_{p,k}$ differs from $Q$ only by a bounded operator. Indeed,
\begin{equation}
Q-pQ_{p,k}= T_{\mathcal H}^k[T_{\mathcal H}^{-k},Q]+T_{\mathcal H}^k\bigl(Q-pQ_p\bigr)T_{\mathcal H}^{-k}.
\end{equation}
Both terms are bounded, in particular,
$\|[T_{\mathcal H}^{-k},Q]\|=|k|$ and $\|Q-pQ_p\|=p-1$. Therefore, $Q-pQ_{p,k}$ is bounded for every
$k\in\Z$.

By  \cite[Thm 3.1]{ARJS26}, bounded perturbations of the position operator do not affect the
velocity. Hence the block velocity defined through the regrouped position operator is
invariant under finite lattice translations of the regrouping. In particular, it is
independent of the chosen starting site of the blocks.
\end{remark}

\subsection{Fourier transform and the dispersion relations}\label{sec:Fourier}
We now pass from the regrouped walk to its Floquet representation. Since the regrouped walk is translation invariant with respect to the blocks of length $p$ (and of size $2p$ counting the two degrees of freedom), the Fourier transform diagonalizes the block translations and reduces the walk to a $2p\times 2p$ Floquet matrix.

Here it is convenient to identify $\C^{2p}$ with $\C^p\otimes\C^2$. Thus we write
\begin{equation}\label{def:H-regrouped-2}
\mathcal H\equiv \ell^2(\Z)\otimes \C^{2p}
\equiv \ell^2(\Z)\otimes \C^p\otimes \C^2 .
\end{equation}

Let $\{e_j\}_{j=1}^p$ denote the standard basis of $\C^p$. In the regrouped representation \eqref{def:H-regrouped-2},
the coin operator is given by
\begin{equation}\label{C:ext}
C(c)= \idty_{\Z} \otimes \bigoplus_{j=1}^p C_j
\equiv \idty_\Z\otimes \sum_{j=1}^p |e_j\rangle\langle e_j|\otimes C_j .
\end{equation}
The shift operator takes the form
\begin{equation}\label{S:ext}
S\equiv \idty_{\Z}\otimes \sum_{j=1}^{p-1}
\left(|e_j\rangle\langle e_{j+1}|\otimes P_-
+
|e_{j+1}\rangle\langle e_j|\otimes P_+\right)
+ T^{-1}\otimes \left(|e_p\rangle\langle e_1|\otimes P_-\right)
+ T\otimes \left( |e_1\rangle\langle e_p|\otimes P_+\right).
\end{equation}
The last two terms encode the transitions between neighboring blocks.

Consider the Fourier transform $\mathscr{F}=\mathcal F\otimes \idty_{2p}$ on
$\ell^2(\Z)\otimes \C^{2p}$, where
$\mathcal F:\ell^2(\Z)\to L^2([0,2\pi))$ is given by
\begin{equation}
(\mathcal F\phi)(\theta)=\frac{1}{\sqrt{2\pi}}\sum_{n\in\Z}e^{-in\theta}\phi_n .
\end{equation}
It follows from \eqref{C:ext} and \eqref{S:ext} that, for $p\geq2$,
\begin{equation}
( \mathscr{F}U_p(c)\mathscr{F}^{-1}\hat\psi)(\theta)
=\widehat U_p(c,\theta)\hat{\psi}(\theta),
\end{equation}
where $\widehat U_p(c,\theta)$ is the $2p\times 2p$ unitary Floquet matrix acting on
$\C^p\otimes \C^2$ and defined by
\begin{equation}\label{def:Up-hat}
\widehat U_p(c,\theta)=\widehat S(\theta)\widehat C(c).
\end{equation}
Here
\begin{equation}\label{def:S-hat}
\widehat S(\theta)=\sum_{j=1}^{p}
\left(|e_{j-1}\rangle\langle e_{j}|\otimes P_-
+
|e_{j+1}\rangle\langle e_j|\otimes P_+\right),
\quad \text{i.e., }\widehat S(\theta)|e_j,\pm\rangle=|e_{j\pm1},\pm\rangle,
\end{equation}
where the boundary indices are understood according to the convention
\begin{equation}
 |e_0\rangle := e^{-i\theta}|e_p\rangle,
 \qquad
 |e_{p+1}\rangle := e^{i\theta}|e_1\rangle .
\end{equation}
Moreover,
\begin{equation}\label{def:C-hat}
\widehat C(c)=\sum_{j=1}^p |e_j \rangle\langle e_j|\otimes C_j.
\end{equation}

For example, when $p=3$, the Floquet matrix $\widehat U_3(c,\theta)$ is given by the
following block matrix:
\begin{equation}\label{def:U3}
  \begin{bmatrix}
 0 & P_-C_2 &  e^{i\theta}P_+C_3\\
 P_+ C_1 & 0 & P_- C_3 \\
  e^{-i\theta}P_- C_1 &  P_+ C_2 & 0
 \end{bmatrix}
 =
 \left[\begin{array}{cc|cc|cc}
 0 & 0 & 0 &0&  e^{i\theta}a_3 & e^{i\theta}b_3\\
 0 &0 & -\overline{b_2} & \overline{a_2} &  0&0\\
 \hline
 a_1 & b_1 & 0 & 0 & 0  & 0 \\
 0 & 0& 0& 0& -\overline{b_3} & \overline{a_3} \\
 \hline
 0 & 0&  a_2 & b_2 & 0 & 0\\
 -e^{-i\theta}\overline{b_1} & e^{-i\theta}\overline{a_1} &  0& 0& 0&0
 \end{array}\right].
\end{equation}

We note that, in the special case $p=2$, the matrix $\widehat U_2(c,\theta)$
has the structurally different form:
\begin{equation}\label{def:U2}
 \widehat U_2(c,\theta)=
 \left[
 \begin{array}{cc|cc}
 0 & 0 & e^{i\theta} a_2 & e^{i\theta}b_2\\
 0 & 0  & -\overline{b_2} &\overline{a_2}\\
 \hline
 a_1 & b_1 & 0 & 0\\
 -e^{-i\theta}\overline{b_1} & e^{-i\theta}\overline{a_1} & 0 & 0
 \end{array}
 \right].
\end{equation}

Let $\{\lambda_j(c,\theta)\}_{j=1}^{2p}$ denote the eigenvalues of
$\widehat U_p(c,\theta)$. Since $\widehat U_p(c,\theta)$ is a finite-dimensional
unitary matrix whose entries depend analytically on $\theta$, its eigenvalues
can be parametrized locally by analytic functions of $\theta$, counted with
multiplicity. The derivatives $\partial_\theta\lambda_j(c,\theta)$ below are
therefore understood along such local analytic branches. Eigenvalue crossings
may change the labelling of the branches, but they do not affect the value of
the maximum appearing below.

The block velocity is determined
by the maximal slope of the dispersion relations, see \cite[eq. (3.3)]{ARCSW25}, which follows from the fact that the rescaled dynamics of $Q_p$ converges strongly to the velocity operator on $\mathcal H$, see e.g.,  \cite[Theorem 9.4]{DamanikSpreading2016}, \cite[Proposition 2.2]{extail25}, \cite[Theorem 4]{AhlbrechtEtal2011}. 
\begin{equation}\label{def:block-v}
v^{(p)}_{\mathrm{blk}}(U_p(c))=
\max_{\substack{\theta\in[0,2\pi) \\ j=1,\ldots,2p}}
|\partial_\theta\lambda_j(c,\theta)|,
\qquad
\partial_{\theta}\lambda_j(c,\theta)
:=
\frac{\partial}{\partial \theta}\lambda_j(c,\theta),
\end{equation}
where $v^{(p)}_{\mathrm{blk}}(U_p(c))$ is defined in \eqref{def:v-p}. Hence, if the eigenvalues are taking the form $\lambda_j(c,\theta)=e^{i\omega_j(c,\theta)}$, then
\begin{equation}\label{def:v-dispersion}
v(U_p(c))=p\ v^{(p)}_{\mathrm{blk}}(U_p(c))
=p\max_{\substack{\theta\in[0,2\pi) \\ j=1,\ldots,2p}}
|\partial_\theta\omega_j(c,\theta)|.
\end{equation}

Thus, the problem of bounding $v(U_p(c))$ is equivalent to the problem of bounding the
derivatives of the eigenvalues of the $2p\times 2p$ Floquet matrix
$\widehat U_p(c,\theta)$.

\subsection{Numerical illustration of the bounds}\label{sec:numerics}

In this section, we provide numerical illustrations of the bounds on the dispersion relations $\omega_j(c,\theta)$ of the Floquet matrix $\widehat U_{p=5}(c,\theta)$, with $b_j=\sqrt{1-|a_j|^2}$,\footnote{Note that if $b_j\in[0,1)$ is real for all $j=1,\ldots,2p$, then the spectrum of the Floquet matrix $\widehat U_p(c,\theta)$ is invariant under complex conjugation. Consequently, its eigenvalues may be written in pairs of the form $e^{\pm i\omega_j(c,\theta)}$, see Remark \ref{spec:symmetry}. In figures \ref{fig:8710051} to \ref{fig:81914}, we plot only the nonnegative phases $\omega_j(c,\theta)\geq 0$.} and on the corresponding group velocities $|\partial_\theta\omega_j(c,\theta)|$ for $j=1,\ldots,10$.
We emphasize that the bounds displayed in the figures are for the block velocity $v_{\mathrm{blk}}^{(5)}(U_{p=5}(c))$.

After dividing by the period $p=5$ (see the relation between $v(U_5(c))$ and $v_{\mathrm{blk}}^{(5)}(U_5(c))$ in \eqref{def:v-dispersion}), the green dashed line represents the bottleneck-effect bound from Theorem \ref{thm:perturbative}, while the blue line and the red dotted line represent, respectively, the harmonic bound and the refined harmonic mean bound from Theorem \ref{thm:harmonic}. The values of the coefficients $a_j$ are specified in each graph.

We now illustrate the three velocity bounds in several representative cases with period $p=5$.

\begin{figure}[H]
\includegraphics[scale=0.55]{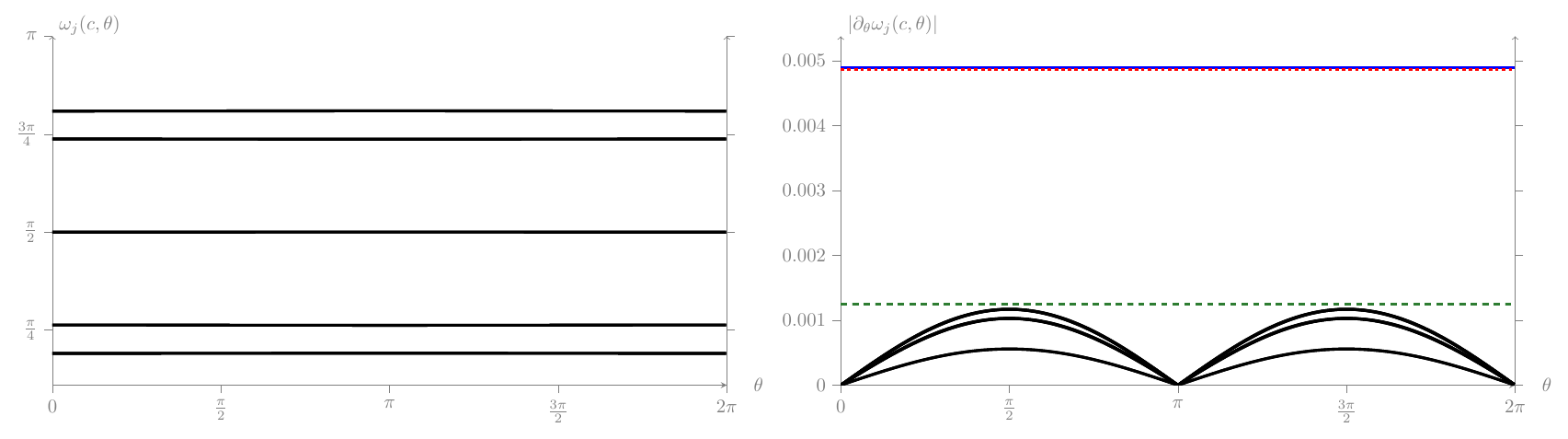}
\caption{transmission parameters: $a_1=0.8,\ a_2=0.7,\  a_3=1,\ a_4=0.005,\ a_5=1$.}
\label{fig:8710051}
\end{figure}
\begin{figure}[H]
\includegraphics[scale=0.55]{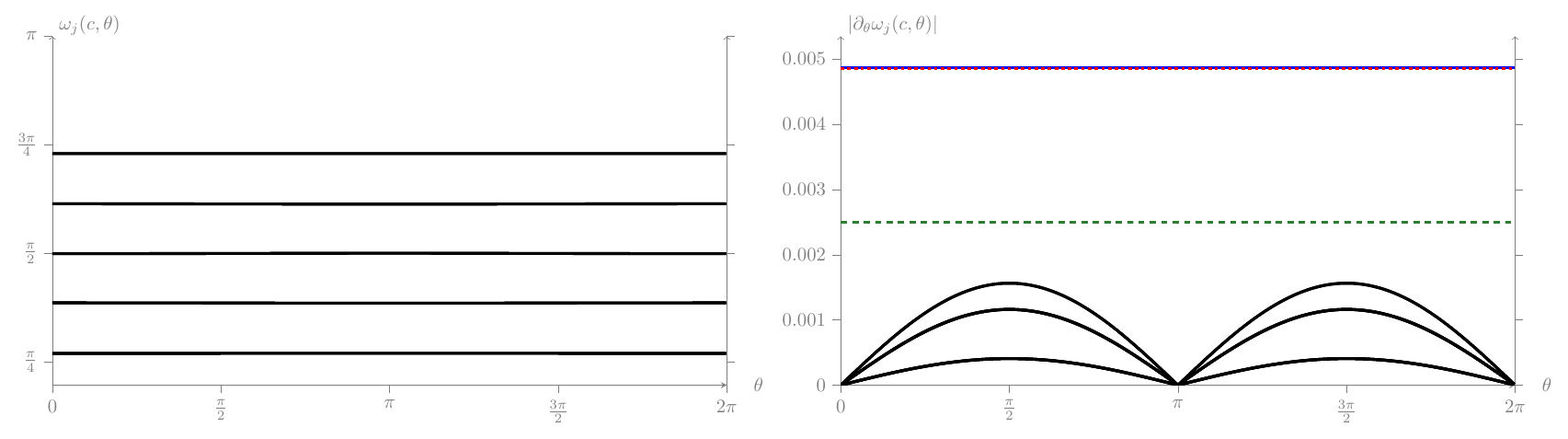}
\caption{transmission parameters: $a_1=0.8,\ a_2=0.7,\ a_3=0.6,\ a_4=0.005,\ a_5=0.8$.}
\label{fig:8760058}
\end{figure}
\begin{figure}[H]
\includegraphics[scale=0.55]{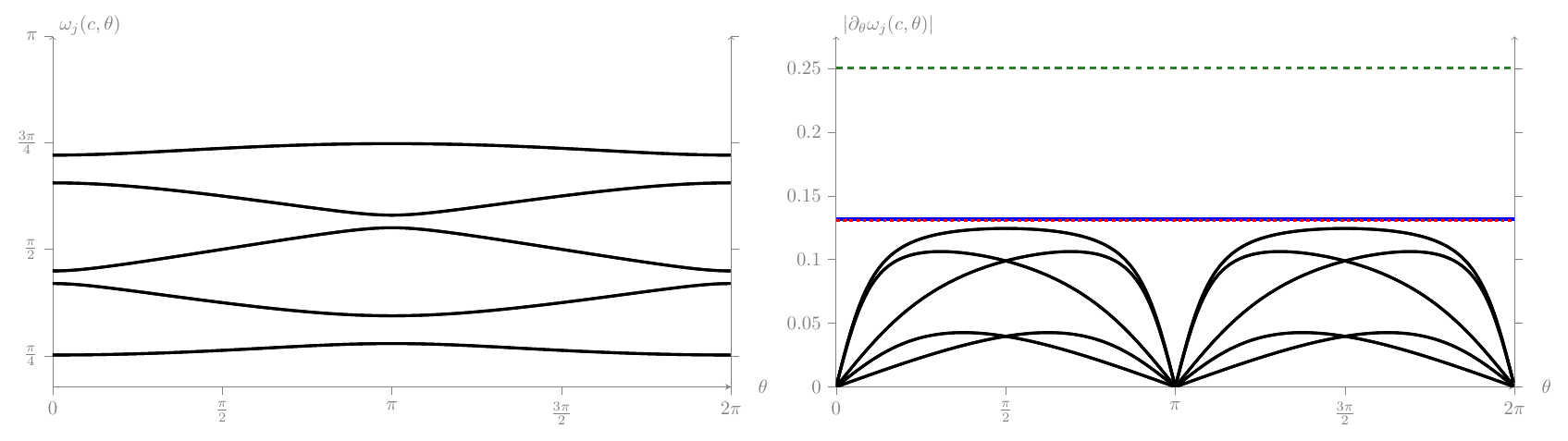}
\caption{transmission parameters: $a_1=0.8,\ a_2=0.7,\ a_3=0.6,\ a_4=0.5,\ a_5=0.8$.}
\label{fig:87658}
\end{figure}
\begin{figure}[H]
\includegraphics[scale=0.55]{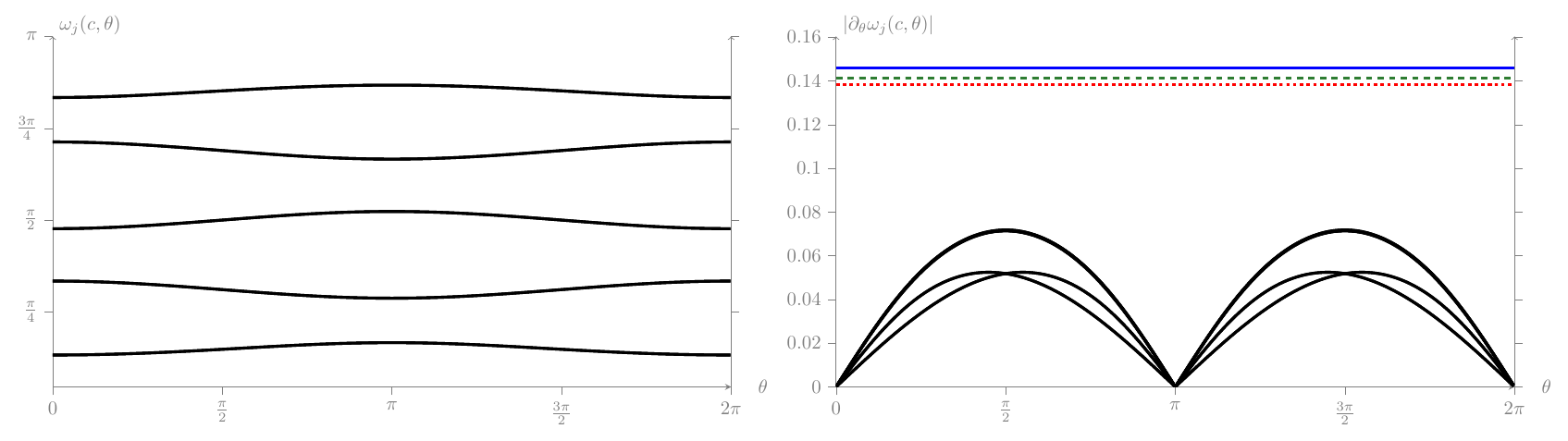}
\caption{transmission parameters: $a_1=0.8,\ a_2=1,\ a_3=0.9,\ a_4=1,\ a_5=0.4$.}
\label{fig:81914}
\end{figure}
The first two figures illustrate the bottleneck effect caused by the very small transmission parameter $a_4=0.005$. In Figure \ref{fig:8710051}, the bottleneck is adjacent to perfect transmitters, namely $a_3=a_5=1$. This is the regime in which the bottleneck-effect estimate is most pronounced, providing a sharper description of the velocity scale than the harmonic-type bounds; indeed letting $M$ be the true maximum and $m$ be the minimum bound we have that $\frac{m-M}{M}\approx 0.07$, meaning the minimum bound is within $7\%$ of its true value. Figure \ref{fig:8760058} shows that this bottleneck effect remains visible even when the neighboring coefficients are no longer perfect transmitters. In this case, the bottleneck-effect bound still captures the relevant velocity scale, although the separation from the harmonic-type bounds is less pronounced; indeed the minimum bound is within $59.9\%$  of the true maximum value. By contrast, Figure \ref{fig:87658} represents a more generic configuration, in which no transmission parameter is close to zero. In this regime, the bottleneck bound is no longer competitive, while the harmonic-type bounds, and in particular the refined harmonic mean bound, provide substantially better bounds on the group velocities. Here the refined harmonic mean is within $5.13\%$  of its true value. 
Finally, Figure \ref{fig:81914} illustrates the effect of the correction term in the refined harmonic bound \eqref{def:Hc-tilde}. Although in the examples illustrated in Figures \ref{fig:8710051}, \ref{fig:8760058}, and \ref{fig:87658}, the refined harmonic bound is very close to the harmonic bound, this configuration has larger variations between consecutive transmission parameters. In this case, the refined harmonic mean bound is visibly sharper than the harmonic bound. 

\begin{remark}\label{rem:not-min}
Figures~\ref{fig:8710051}, \ref{fig:8760058}, and \ref{fig:87658} show that the quantity
$\frac{1}{5}\min_k |a_k|$ does not provide an upper bound for the block velocity (That is, $\min_k |a_k|$ is not an upper bound for $v(U_{p}(c))$). For instance, in Figure~\ref{fig:8710051} one has
$\frac{1}{5}\min_k |a_k|=0.001$,
whereas the graphs of the derivatives of dispersion relations show clearly that $v_{\mathrm{blk}}^{(5)}(U_5(c))$ is larger than $0.001$.
\end{remark}

\begin{remark}\label{rem:not-min-2}
Figure~\ref{fig:81914} also shows that, even in the phase-free real-coin case, the quantity
$\frac{1}{p}\min_j |a_j|$ need not be a lower bound for the velocity. In this example $p=5$ and
$\frac{1}{5}\min_j |a_j|=0.08$,
whereas the graphs of the derivatives of the dispersion relations show that
$v_{\mathrm{blk}}^{(5)}(U_5(c))<0.08$.
Thus this simple minimum-type candidate does not provide a lower bound for the velocity. A general lower bound is still available, but it has a different exponential scale, see Appendix \ref{sec:Lower-bound}.
\end{remark}

\begin{remark}\label{rem:largeperiod}
The numerical examples above illustrate the behavior of the bounds for period $p=5$. Figure \ref{fig:largeperiod1} tests the same bounds in a larger-period regime, namely $p=50$.
\begin{figure}[H]
\includegraphics[scale=0.5]{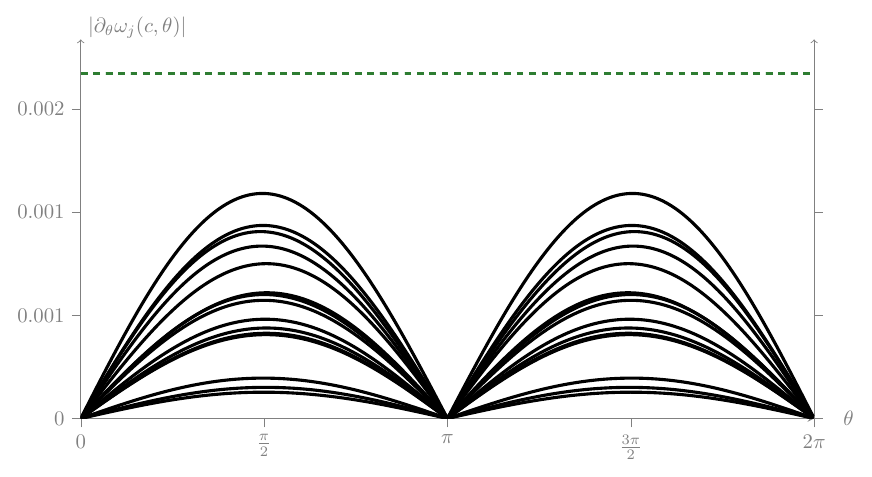}
\includegraphics[scale=0.5]{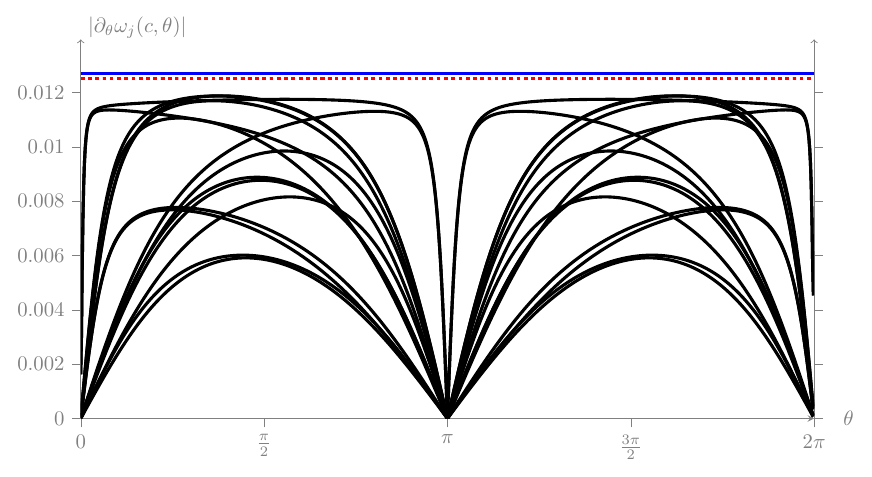}
\caption{ (Left)  $a_{16}=0.005$ cushioned from the left and right by 15 and 14 perfect transmitters, i.e., $a_i=1$, respectively, followed by $20$ parameters sampled from $[0.6,0.7]$.\\
(Right) $50$ transmission parameters uniformly sampled from $[0.6,0.8]$.}
\label{fig:largeperiod1}
\end{figure}
\footnote{In both figures here only the $30$ largest eigenvalue derivatives are plotted for reasons of legibility and rendering.}
The left panel of Figure \ref{fig:largeperiod1} considers a bottleneck configuration, where the very small transmission parameter $0.005$
  is cushioned on either side by perfect transmitters then by parameters sampled from $[0.6,0.7]$. In this regime, the minimum bound remains within $40.13\%$ of the exact velocity. The right panel considers a near-homogeneous profile, with $50$ real transmission parameters uniformly sampled from $[0.6,0.8]$. In this case, the refined harmonic bound is within $15.4\%$ of the exact velocity.
\end{remark}

\section{Perturbation around the decoupled limit}\label{sec:perturbation}
In this section we include the proof of Theorem \ref{thm:perturbative}. We use a perturbative argument that combines the block velocity and Fourier transform framework developed earlier in Section \ref{sec:block-velocity-Fourier} with the simplicity of the Floquet spectrum in the decoupled limit discussed in  Section \ref{sec:simple-spec} below. Related perturbative methods were used in \cite{ARFFW} to determine the velocity scaling for Schr\"odinger operators with periodic non-degenerate potentials.

The case $p=1$ of Theorem \ref{thm:perturbative} follows from the homogeneous case, where the velocity is known to be equal to $|a_1|$, see, e.g., \cite{ARCSW25}. The case $p=2$ follows from Lemma \ref{lem:warm-up}. Hence, for the remainder of this section, we assume that $p>2$.

\subsection{The decoupled limit and analytic perturbation} 
For the $p$-periodic string $c$ in \eqref{def:c-p},
suppose that
\begin{equation}
0<\varrho=\min_{1\leq k\leq p}|a_k| \ll 1,
\end{equation}
and assume that this minimum is attained at a unique site, i.e., 
\begin{equation}
|\{k:\ |a_k|=\varrho\}|=1.
\end{equation}
By the cyclic regrouping invariance of the velocity, see Remark \ref{rem:regrouping}, we may assume without loss of generality that
$|a_1|=\varrho$.
Write
\begin{equation}\label{def:a1-b1}
a_1=\alpha\varrho,\
\text{ and }\ 
b_1=\beta\sqrt{1-\varrho^2},
\quad \text{ with }\quad |\alpha|=|\beta|=1.
\end{equation}

We understand $U_p(c):=U_p(c(\varrho))$ as a perturbation of $U_p(c(0))$, where
\begin{equation}\label{def:c0}
c(0):=((0,\beta),(a_2,b_2),\ldots,(a_p,b_p)),
\qquad a_k\neq0 \text{ for all } k=2,\ldots, p.
\end{equation}
Equivalently, we consider the one-parameter family
\begin{equation}\label{def:c-eps-family}
c(\varrho):=((\alpha\varrho,\beta\sqrt{1-\varrho^2}),(a_2,b_2),\ldots,(a_p,b_p)),
\end{equation}
where the parameters $(a_2,b_2),\ldots,(a_p,b_p)$ are kept fixed. The nontrivial walk corresponds to the given positive value of $\varrho$, while the decoupled walk corresponds to $\varrho=0$. In this decoupled setting, the coins $C_{kp}$, where $k\in\mathbb Z$, are then regarded as $p$-periodic perfect reflectors, see \cite[Sec. 2.1]{ARJS26}. Due to the periodicity of these reflectors, initial states are trapped between each pair of consecutive reflectors, leading to zero velocity.

The Floquet matrix $\widehat U_p(c(0),\theta)$ of $U_p(c(0))$ is given by, see \eqref{def:S-hat} and \eqref{def:C-hat},
\begin{equation}\label{def:Up-all}
\widehat U_p(c(0),\theta)
=
\sum_{j=1}^{p}
\left(
|e_{j-1}\rangle\langle e_j|\otimes P_-
+
|e_{j+1}\rangle\langle e_j|\otimes P_+
\right)
(\idty_p\otimes C_j),
\qquad
C_1=
\begin{bmatrix}
0 & \beta\\
-\overline{\beta} & 0
\end{bmatrix},
\end{equation}
and the boundary convention is
\begin{equation}
|e_0\rangle := e^{-i\theta}|e_p\rangle,
\qquad
|e_{p+1}\rangle := e^{i\theta}|e_1\rangle.
\end{equation}
For example, for $p=3$, we obtain
\begin{equation}\label{ex:hatU-3}
\widehat U_3(c(0),\theta)=
\left[\begin{array}{cc|cc|cc}
0 & 0 & 0 & 0 & e^{i\theta}a_3 & e^{i\theta}b_3\\
0 & 0  & -\overline{b_2} & \overline{a_2} & 0 & 0\\
\hline
0 & \beta & 0 & 0 & 0 & 0\\
0 & 0 & 0 & 0 & -\overline{b_3} & \overline{a_3}\\
\hline
0 & 0 & a_2 & b_2 & 0 & 0\\
-e^{-i\theta}\overline{\beta} & 0 & 0 & 0 & 0 & 0
\end{array}
\right].
\end{equation}

It is direct to check that
\begin{equation}\label{U-gauge}
\widehat U_p(c(0),\theta)=D_{\theta}\widehat U_p(c(0),0)D_{\theta}^{-1},
\end{equation}
where $D_\theta$ is defined by
\begin{equation}\label{def:D-theta}
D_{\theta}=\idty_{2p}+(e^{i\theta}-1)\left(|e_1\rangle\langle e_1|\otimes P_+\right)=\diag\{e^{i\theta},1,1,\ldots,1\}_{2p\times 2p}.
\end{equation}
Thus $\widehat U_p(c(0),\theta)$ is unitarily equivalent to $\widehat U_p(c(0),0)$. In particular, $\mathrm{spec}(\widehat U_p(c(0),\theta))$ is independent of $\theta$. Hence, by \eqref{def:v-dispersion},
\begin{equation}
v(U_p(c(0)))=v^{(p)}_{\mathrm{blk}}(U_p(c(0)))=0.
\end{equation}
Moreover, Theorem \ref{thm:simple-spec}\eqref{thm:spectrum:2} shows that $\mathrm{spec}(\widehat U_p(c(0),\theta))$ is simple.

We now perturb away from the decoupled case. Since
\begin{equation}
b_1(\varrho)=\beta\sqrt{1-\varrho^2}
\end{equation}
is real analytic near $\varrho=0$, the matrix $\widehat U_p(c(\varrho),\theta)$ admits an analytic extension in $\varrho$ to a neighborhood of $0$. Although by definition we have $\varrho >0$, this analytic extension allows us to apply analytic perturbation theory at $\varrho=0$. Moreover, the dependence on $\theta$ is analytic through the factors $e^{\pm i\theta}$.

Since the spectrum of $\widehat U_p(c(0),\theta)$ is independent of $\theta$ by \eqref{U-gauge}, and simple by Theorem \ref{thm:simple-spec}\eqref{thm:spectrum:2}, the gaps between distinct eigenvalues of $\widehat U_p(c(0),\theta)$ are uniform in $\theta$. Therefore, for each eigenvalue $\lambda_j(c(0),0)$ of $\widehat U_p(c(0),0)$, there exists a contour $\Gamma_j=\Gamma_j(c(0))$ enclosing only the simple eigenvalue $\lambda_j(c(0),0)=\lambda_j(c(0),\theta)$ (after relabeling the eigenvalues) of $\widehat U_p(c(0),\theta)$, uniformly for all $\theta\in[0,2\pi)$.

Define the corresponding Riesz projection by \footnote{The integral in \eqref{def:Riesz-proj} is understood for fixed $\theta\in[0,2\pi)$. The contour $\Gamma_j$ is chosen independently of $\theta$ because the spectral gap at $\varrho=0$ is uniform in $\theta$.}
\begin{equation}\label{def:Riesz-proj}
P_j(\varrho,\theta)
=
\frac{1}{2\pi i}
\int_{\Gamma_j}
\left(\xi-\widehat U_p(c(\varrho),\theta)\right)^{-1}d\xi.
\end{equation}
There exists $\rho_0=\rho_0(c(0),j)>0$ such that for $|\varrho|<2\rho_0$, $P_j(\varrho,\theta)$ has rank one. The associated eigenvalue branch is given by
\begin{equation}\label{def:eig-branch-trace}
\lambda_j(c(\varrho),\theta)
=
\mathrm{Tr}\left(\widehat U_p(c(\varrho),\theta)P_j(\varrho,\theta)\right).
\end{equation}
By analytic perturbation theory for simple eigenvalues, see e.g., \cite[Chap.~II, Sec.~1]{Kato}, the eigenvalue $\lambda_j(c(\varrho),\theta)$ of $\widehat U_p(c(\varrho),\theta)$ is real analytic on the set
\begin{equation}
\mathcal A(2\rho_0):=\{(\varrho,\theta):\ |\varrho|<2\rho_0,\ \theta\in[0,2\pi)\}.
\end{equation}

Taylor's theorem in the variable $\varrho$ gives
\begin{equation}\label{eig-taylor}
\lambda_j(c(\varrho),\theta)
=
\lambda_j(c(0),\theta)
+
\left(\left.\partial_{\varrho}\lambda_j(c(\varrho),\theta)\right|_{\varrho=0}\right)\varrho
+
R_j(\varrho,\theta)\varrho^2,
\end{equation}
where $\partial_{\varrho}$ denotes differentiation with respect to $\varrho$, keeping the remaining parameters $(a_2,b_2),\ldots,(a_p,b_p)$ fixed. The remainder $R_j(\varrho,\theta)$ is given by \footnote{ Here Taylor's theorem is applied to the one-variable function $g_j(\varrho)=\lambda_j(c(\varrho),\theta)$,
with $\theta$ and the remaining parameters fixed. The remainder is written in integral form:
\[
g_j(\varrho)=g_j(0)+g_j'(0)\varrho+\varrho^2\int_0^1(1-s)g_j''(s\varrho)\,ds.
\]}
\begin{equation}\label{def:Rj-integral}
R_j(\varrho,\theta)
=
\int_0^1
(1-s)
\left.
\partial_\eta^2\lambda_j(c(\eta),\theta)
\right|_{\eta=s\varrho}
\,ds.
\end{equation}
Since $\lambda_j(c(\varrho),\theta)$ is analytic for $(\varrho,\theta)\in\mathcal A(2\rho_0)$, we have
\begin{equation}\label{eq:Rj-bound}
\sup_{(\varrho,\theta)\in \overline{\mathcal A(\rho_0)}}
\left(
|R_j(\varrho,\theta)|
+
|\partial_\theta R_j(\varrho,\theta)|
\right)
<\infty.
\end{equation}
Indeed, differentiating \eqref{def:Rj-integral} with respect to $\theta$ gives
\begin{equation}
\partial_\theta R_j(\varrho,\theta)
=
\int_0^1
(1-s)
\partial_\theta
\left(
\left.
\partial_\eta^2\lambda_j(c(\eta),\theta)
\right|_{\eta=s\varrho}
\right)
\,ds,
\end{equation}
and the integrand is continuous on a compact set.

Our goal is to bound the derivative of $\lambda_j(c(\varrho),\theta)$, expanded in \eqref{eig-taylor}, with respect to $\theta$. The first term in \eqref{eig-taylor}, $\lambda_j(c(0),\theta)$, is independent of $\theta$ by \eqref{U-gauge}. Therefore, differentiating \eqref{eig-taylor} with respect to $\theta$, and using \eqref{eq:Rj-bound}, we obtain
\begin{equation}\label{partial-theta-lambda-b}
\partial_{\theta}\lambda_j(c(\varrho),\theta)=\left(\left.\partial_\theta\partial_{\varrho}\lambda_j(c(\varrho),\theta)\right|_{\varrho=0}\right)\varrho+\mathcal{O}(\varrho^2),
\end{equation}
where the error term is uniform in $\theta\in[0,2\pi)$.

\subsection{Eigenvector estimates and the leading-order velocity bound} 
We now extract the leading-order behavior of the velocity from the perturbation expansion \eqref{partial-theta-lambda-b} above. Proofs of Lemmas \ref{lem:2-term} and \ref{lem:eigen-bonds} are presented in Section \ref{subsec:pf:1} below.

Let $x_j(c(0),0)$ be the normalized eigenvector of $\widehat U_p(c(0),0)$ corresponding to the eigenvalue $\lambda_j(c(0),0)$.

\begin{lemma}\label{lem:2-term}
For every $\theta\in[0,2\pi)$, we have
\begin{equation}\label{lem:eq-2-term}
\left.
\partial_\theta\partial_{\varrho}\lambda_j(c(\varrho),\theta)
\right|_{\varrho=0}=2i\Re\left(e^{i\theta}\alpha\overline{\beta}\,x_{j,1}^+\overline{x_{j,1}^-}\right)\lambda_j(c(0),0),
\end{equation}
where
$x_{j,1}^\pm:=\langle \delta_1^{\pm}|x_j(c(0),0)\rangle$, \footnote{Recall that $\delta_1^\pm=|e_1,\pm\rangle=e_1\otimes|\pm\rangle$.}
and $\alpha,\beta$ are the phases defined in \eqref{def:a1-b1}. 
\end{lemma}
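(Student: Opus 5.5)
The plan is to compute $\partial_\varrho\lambda_j$ at $\varrho=0$ explicitly as a function of $\theta$ using a first-order (Hellmann--Feynman) formula for simple eigenvalues, and then differentiate in $\theta$. Since $\lambda_j(c(\varrho),\theta)$ is real analytic on $\mathcal A(2\rho_0)$, the mixed partials commute. So it suffices to find a closed formula for $\partial_\varrho\lambda_j(c(\varrho),\theta)|_{\varrho=0}$ valid for all $\theta$.

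\textbf{Step 1: first-order formula.} Because $\lambda_j(c(0),\theta)$ is simple (Theorem \ref{thm:simple-spec}\eqref{thm:spectrum:2}) and $\widehat U_p(c(0),\theta)$ is unitary, hence normal, the left and right eigenvectors coincide. Standard first-order perturbation theory (differentiate the trace formula \eqref{def:eig-branch-trace}, or differentiate $\widehat U x=\lambda x$ and pair with $x$) then gives
\begin{equation*}
\partial_\varrho\lambda_j\big|_{\varrho=0}=\big\langle x_j(\theta),\,\partial_\varrho \widehat U_p(c(\varrho),\theta)\big|_{\varrho=0}\,x_j(\theta)\big\rangle .
\end{equation*}
Here $x_j(\theta)$ is the normalized eigenvector of $\widehat U_p(c(0),\theta)$. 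By the gauge relation \eqref{U-gauge} we may take $x_j(\theta)=D_\theta x_j(c(0),0)$. Only $C_1$ depends on $\varrho$, and $\partial_\varrho\sqrt{1-\varrho^2}=0$ at $\varrho=0$. Therefore
\begin{equation*}
\partial_\varrho\widehat U\big|_{\varrho=0}=\widehat S(\theta)\,\bigl(|e_1\rangle\langle e_1|\otimes \diag(\alpha,\overline{\alpha})\bigr).
\end{equation*}

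\textbf{Step 2: remove the shift.} Write $\widehat S(\theta)=\widehat U_p(c(0),\theta)\widehat C(c(0))^{*}$. Since $\widehat U^*x_j=\overline{\lambda_j}x_j$, we get $\langle x_j,\widehat U y\rangle=\lambda_j\langle x_j,y\rangle$. The expression therefore reduces to $\lambda_j$ times a quadratic form on the site-$1$ block:
\begin{equation*}
\lambda_j\,\Bigl\langle (x_{j,1}^+(\theta),x_{j,1}^-(\theta)),\ \begin{bmatrix}0&-\beta\\ \overline\beta&0\end{bmatrix}\begin{bmatrix}\alpha&0\\0&\overline\alpha\end{bmatrix}(x_{j,1}^+(\theta),x_{j,1}^-(\theta))\Bigr\rangle
=2i\,\mathrm{Im}\bigl(\alpha\overline\beta\,x_{j,1}^+(\theta)\overline{x_{j,1}^-(\theta)}\bigr)\lambda_j .
\end{equation*}
The middle matrix product is $C_1(0)^*\diag(\alpha,\overline{\alpha})=\begin{bmatrix}0&-\beta\overline\alpha\\ \overline\beta\alpha&0\end{bmatrix}$, and this matrix is skew-Hermitian, which is why the result is purely imaginary.

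\textbf{Step 3: $\theta$-dependence.} Since $D_\theta=\diag\{e^{i\theta},1,\dots,1\}$ multiplies only the $|e_1,+\rangle$ entry, we have $x_{j,1}^+(\theta)=e^{i\theta}x_{j,1}^+$ and $x_{j,1}^-(\theta)=x_{j,1}^-$. Moreover, $\lambda_j(c(0),\theta)=\lambda_j(c(0),0)$. Hence
\begin{equation*}
\partial_\varrho\lambda_j\big|_{\varrho=0}=2i\,\mathrm{Im}\bigl(e^{i\theta}\alpha\overline\beta\,x_{j,1}^+\overline{x_{j,1}^-}\bigr)\lambda_j(c(0),0).
\end{equation*}
Differentiating in $\theta$ via $\partial_\theta\mathrm{Im}(e^{i\theta}w)=\Re(e^{i\theta}w)$ yields \eqref{lem:eq-2-term}.

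The main point to check carefully is Step 1. The eigenvector choice must be consistent: the formula is independent of the phase of $x_j$ but requires normalization, and $D_\theta$ is unitary, so it preserves normalization. One must also justify the Hellmann--Feynman formula for the non-self-adjoint but normal matrix $\widehat U$. For the latter, I would differentiate $\widehat U x=\lambda x$ with an analytic normalized eigenvector, obtained from the Riesz projection \eqref{def:Riesz-proj}, and pair with $x$ using $\langle x,\widehat U\,\partial_\varrho x\rangle=\lambda\langle x,\partial_\varrho x\rangle$.
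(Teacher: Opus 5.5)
Your proof is correct and follows the paper's route. The ingredients are the same: the Hellmann--Feynman formula at $\varrho=0$, the gauge choice $x_j(\theta)=D_\theta x_j(c(0),0)$, and the eigenvalue equation at the reflecting site. The only difference is presentational. You write $\widehat S(\theta)=\widehat U_p(c(0),\theta)\widehat C(c(0))^*$ and use $\widehat U^*x_j=\overline{\lambda_j}x_j$ to reduce to the site-$1$ quadratic form before differentiating in $\theta$. This is the operator-level form of the paper's relations \eqref{eq:x0-relations}, and it has the small bonus of exhibiting $\partial_\varrho\lambda_j\big|_{\varrho=0}=2i\,\mathrm{Im}\bigl(e^{i\theta}\alpha\overline{\beta}\,x_{j,1}^+\overline{x_{j,1}^-}\bigr)\lambda_j(c(0),0)$ explicitly.
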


We obtain from \eqref{partial-theta-lambda-b}, Lemma \ref{lem:2-term}, and \eqref{def:block-v}
\begin{align}\label{pf:ratio}
\frac{v^{(p)}_{\mathrm{blk}}(U_p(c(\varrho)))}{\varrho}
&=
2\max_{\substack{\theta\in[0,2\pi)\\ j=1,\ldots,2p}}\left|\Re\left(e^{i\theta}\alpha\overline{\beta}\,x_{j,1}^+\overline{x_{j,1}^-}\right)\right|+\mathcal{O}(\varrho)\quad \text{ as }\varrho\downarrow 0,
\notag\\
&=
2\max_{j=1,\ldots,2p}|x_{j,1}^+x_{j,1}^-|+\mathcal{O}(\varrho)
\end{align}
The last step follows from observing that, for $z\in\mathbb C$, if $z=|z|e^{i\phi}$, then
\begin{equation}
\Re(e^{i\theta}z)=|z|\Re(e^{i(\theta+\phi)})=|z|\cos(\theta+\phi),
\end{equation}
which is maximized when $\theta=2\pi-\phi$.

\begin{lemma}\label{lem:eigen-bonds}
For any fixed $j=1,\ldots,2p$, the components of
$x_j(c(0),0)$ satisfy
\begin{equation}\label{bones:diag}
|x_{j,k}^-|=|x_{j,k+1}^+|, \quad \text{(the diagonal blue bonds in Figure \ref{Fig:Eigen-bones})}
\end{equation}
where $x_{j,p+1}^+:=x_{j,1}^+$. Moreover, if $\ell\neq1$ and
$|a_\ell|=1$, then
\begin{equation}\label{bones:vert}
|x_{j,\ell}^+|=|x_{j,\ell}^-|  \quad \text{(the vertical red dotted bonds in Figure \ref{Fig:Eigen-bones})}.
\end{equation}

\begin{figure}[H]
\centering
\includegraphics[width=0.9\textwidth]{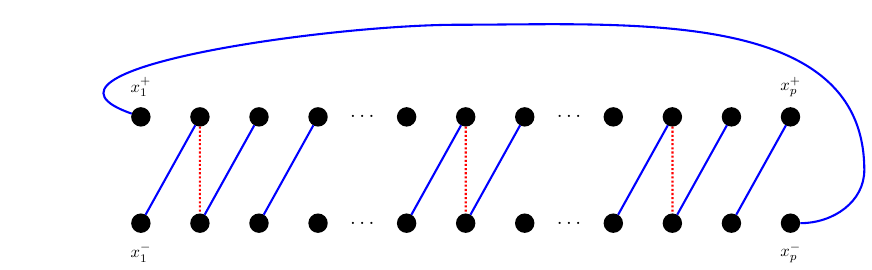}
\caption{
Schematic representation of the $2p$ components of the eigenvector
$x_j(c(0),0)$, grouped into pairs $(x_{j,k}^+,x_{j,k}^-)$.
The upper row denotes the $(+)$ components and the lower row denotes the
$(-)$ components. Blue diagonal bonds represent the equal-modulus
relations $|x_{j,k}^-|=|x_{j,k+1}^+|$, with cyclic convention
$x_{j,p+1}^+=x_{j,1}^+$. Dotted red vertical bonds indicate perfect
transmitter sites $|a_\ell|=1$, where $|x_{j,\ell}^+|=|x_{j,\ell}^-|$.
}
\label{Fig:Eigen-bones}
\end{figure}
\end{lemma}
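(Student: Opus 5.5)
The plan is to read the relations off the eigenvalue equation $\widehat U_p(c(0),0)x=\lambda x$ component by component, using the explicit action $\widehat S(0)|e_k,\pm\rangle=|e_{k\pm1},\pm\rangle$ (with $e_0=e_p$, $e_{p+1}=e_1$ at $\theta=0$) together with $|\lambda|=1$. Write $x=x_j(c(0),0)$ and $\lambda=\lambda_j(c(0),0)$. Since $\widehat U=\widehat S\,\widehat C$, the $(+)$ component at site $k+1$ of $\widehat U x$ is the $(+)$ output of the coin at site $k$. The $(-)$ component at site $k$ of $\widehat U x$ is the $(-)$ output of the coin at site $k+1$. This gives, for all $k$ cyclically,
\begin{equation*}
\lambda x^+_{k+1}=a_kx_k^++b_kx_k^-,\qquad \lambda x^-_{k}=-\overline{b_{k+1}}x_{k+1}^++\overline{a_{k+1}}x_{k+1}^-,
\end{equation*}
where $a_1=0$ and $b_1=\beta$ at $\varrho=0$.

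The key observation is that each local coin $C_k$ is unitary and maps the pair $(x_k^+,x_k^-)$ to $(\lambda x_{k+1}^+,\lambda x_{k-1}^-)$. Hence
\begin{equation*}
|x_k^+|^2+|x_k^-|^2=|x_{k+1}^+|^2+|x_{k-1}^-|^2 \qquad\text{for every } k.
\end{equation*}
These identities alone only give a telescoping flux relation. Writing $F_k:=|x_{k+1}^+|^2-|x_k^-|^2$ (the flux across the bond between $k$ and $k+1$), the identity at site $k$ says $F_k=F_{k-1}$. So the flux is constant in $k$.

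To show that the constant is zero, I use the reflector at site $1$. There $C_1$ is off-diagonal, so
\begin{equation*}
\lambda x_2^+=\beta x_1^-,\qquad \lambda x_p^-=-\overline\beta x_1^+,
\end{equation*}
which gives $|x_2^+|=|x_1^-|$, that is, $F_1=0$. Hence $F_k=0$ for all $k$, i.e. $|x_k^-|=|x_{k+1}^+|$. This is exactly \eqref{bones:diag}, including the cyclic case $k=p$, which is $|x_p^-|=|x_1^+|$ and also follows directly from the second reflector relation. Physically: a perfect reflector blocks current, so the stationary current of a Floquet eigenvector at $\theta=0$ vanishes everywhere.

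For \eqref{bones:vert}, take $\ell\neq1$ with $|a_\ell|=1$. Then $b_\ell=0$ and $C_\ell=\diag(a_\ell,\overline{a_\ell})$, so the eigen-equations give
\begin{equation*}
\lambda x_{\ell+1}^+=a_\ell x_\ell^+,\qquad \lambda x_{\ell-1}^-=\overline{a_\ell}x_\ell^-.
\end{equation*}
Therefore $|x_\ell^+|=|x_{\ell+1}^+|$ and $|x_\ell^-|=|x_{\ell-1}^-|$. Combining with the diagonal relations $|x_{\ell+1}^+|=|x_\ell^-|$ (bond $\ell$) gives
\begin{equation*}
|x_\ell^+|=|x_{\ell+1}^+|=|x_\ell^-|,
\end{equation*}
which is the claim. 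The main point requiring care is the bookkeeping of the cyclic boundary terms at $\theta=0$ when $\ell=p$ or $\ell=2$; these are handled by the conventions $e_0=e_p$ and $e_{p+1}=e_1$. The only other delicate step is establishing flux conservation from unitarity of each $2\times2$ coin. Simplicity of the spectrum is not needed; the argument holds for any eigenvector.
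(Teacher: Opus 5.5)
Your proof is correct and follows the paper's argument essentially step for step. Unitarity of each local coin makes $|x_{j,k+1}^+|^2-|x_{j,k}^-|^2$ independent of $k$; the off-diagonal reflector coin at site $1$ forces this quantity to vanish; and at a perfect transmitter the diagonal coin gives $|x_{j,\ell}^+|=|x_{j,\ell+1}^+|$, which combined with the diagonal bond yields the vertical one.
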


Observe first that in the case
\begin{equation}\label{eq:perfect-all}
|a_1|=\varrho, \qquad |a_2|=\cdots=|a_p|=1,
\end{equation}
where $\varrho\in(0,1)$, the corresponding decoupled vector has
\begin{equation}
a_1=0,\qquad |a_2|=\cdots=|a_p|=1.
\end{equation}
Hence Lemma \ref{lem:eigen-bonds}, applied to the eigenvectors
$x_j(c(0),0)$, shows that
\begin{equation}
|x_{j,n}^{s_1}|=|x_{j,m}^{s_2}|
\end{equation}
for any $n,m\in\{1,\ldots,p\}$, any $s_1,s_2\in\{+,-\}$, and any
$j=1,\ldots,2p$. Since $\|x_j(c(0),0)\|=1$, it follows that
\begin{equation}
|x_{j,n}^\pm| = \frac{1}{\sqrt{2p}}, \qquad n=1,\ldots,p, \quad j=1,\ldots,2p.
\end{equation}
Substituting this into \eqref{pf:ratio}, we obtain
\begin{equation}
\frac{v^{(p)}_{\mathrm{blk}}(U_p(c))}{\varrho} = \frac{1}{p} + \mathcal{O}(\varrho).
\end{equation}
Then \eqref{eq:v-vp} gives the desired conclusion for the case
$|a_1|=\varrho, |a_2|=\cdots=|a_p|=1$,
in \eqref{eq:main:bound}.

We now consider the case where \eqref{eq:perfect-all} is not satisfied, but we assume that some perfect transmitters are placed to
the right and to the left of $a_1$. Let $R_1$ and $L_1$ be the
number of consecutive perfect transmitters to the right and to the left,
respectively, of $a_1$ in the periodic string, as in \eqref{def:RL}.
By Lemma \ref{lem:eigen-bonds}, these perfect transmitters produce two
chains of equal-modulus entries in the decoupled eigenvector
$x_j(c(0),0)$: one chain of length $2(R_1+1)$ involving
$|x_{j,1}^-|$, and one chain of length $2(L_1+1)$ involving
$|x_{j,1}^+|$. Therefore, $\|x_j(c(0),0)\|^2=1$ implies
\begin{equation}
2(L_1+1)|x_{j,1}^+|^2+2(R_1+1)|x_{j,1}^-|^2+\mathrm{Rest}=1,
\end{equation}
where $\mathrm{Rest}\geq0$. In particular,
\begin{equation}
2(L_1+1)|x_{j,1}^+|^2+2(R_1+1)|x_{j,1}^-|^2\leq1.
\end{equation}
Set
\begin{equation}
x:=|x_{j,1}^+|,
\qquad
y:=|x_{j,1}^-|.
\end{equation}
Then the quantity appearing in \eqref{pf:ratio} satisfies
\begin{equation}
2|x_{j,1}^+x_{j,1}^-|=2xy,
\end{equation}
and it is bounded above by the solution of the optimization problem
\begin{equation}
\max 2xy
\quad
\text{subject to}
\quad
2(L_1+1)x^2+2(R_1+1)y^2\leq1,
\qquad x,y\geq0.
\end{equation}
The maximum is attained when the constraint is saturated. Using
Lagrange multipliers, or equivalently setting
\begin{equation}
X=\sqrt{2(L_1+1)}x,
\qquad
Y=\sqrt{2(R_1+1)}y,
\end{equation}
we get
\begin{equation}\label{eq:X-Y}
X^2+Y^2=1,
\qquad
2xy=
\frac{XY}{\sqrt{(L_1+1)(R_1+1)}}.
\end{equation}
Since
\begin{equation}
2XY\leq X^2+Y^2=1,
\qquad\text{hence}\qquad
XY\leq\frac12,
\end{equation}
we obtain from \eqref{eq:X-Y}
\begin{equation}
2|x_{j,1}^+x_{j,1}^-|
\leq
\frac{1}{2\sqrt{(R_1+1)(L_1+1)}}.
\end{equation}

\begin{remark}\label{rem:multiple-reflectors}
The only point in the  argument above where the assumption of a single perfect reflector is used is the simplicity of the spectrum of the decoupled
Floquet matrix. In the case of one perfect reflector, this simplicity is provided by Theorem \ref{thm:simple-spec}(\ref{thm:spectrum:2}). Hence the
same perturbative argument applies to configurations with several perfect reflectors, provided that the corresponding decoupled Floquet matrix has
simple spectrum.

More precisely, let $c(\varrho)$ be a $p$-periodic family such that
\begin{equation}
0<\varrho=\min_{1\leq k\leq p}|a_k|\ll 1,
\end{equation}
and assume that, for $\varrho$ sufficiently small, the set of minimizers is
fixed and given by
\begin{equation}
T=\{j_1,\ldots,j_n\}\subset \{1,\ldots,p\}.
\end{equation}
Then $U_p(c(\varrho))$ may be viewed as a perturbation of the decoupled walk
$U_p(c_{\perp T})$, where $c_{\perp T}$ denotes the $p$-periodic coin
sequence obtained by replacing the coins at the sites in $T$ by perfect
reflectors, in the real case, this means $(c_{\perp T})_m=(0,1)$, for $m\in T$.
Assume that the spectrum of the corresponding Floquet matrix
$\widehat U_p(c_{\perp T},0)$ is simple, as in the Theorem \ref{thm:spectrum:3} below. Applying the one-reflector estimate locally at
each perfect reflector $m\in T$ gives
\begin{equation}\label{rem:MR-bound}
\limsup_{\varrho\to 0}
\frac{v_{\mathrm{blk}}^{(p)}(U_p(c(\varrho)))}{\varrho}
\leq
\min_{m\in T}
\frac{1}{2\sqrt{(L_m+1)(R_m+1)}}=\frac{1}{2B},
\quad
B:=\max_{m\in T}\sqrt{(L_m+1)(R_m+1)}.
\end{equation}
\end{remark}

\subsection{Proofs of Lemmas \ref{lem:2-term} and \ref{lem:eigen-bonds}}\label{subsec:pf:1}
We finish the proof of Theorem \ref{thm:perturbative} by proving the two
 lemmas used above, Lemmas \ref{lem:2-term} and
\ref{lem:eigen-bonds}.

\begin{proof}[Proof of Lemma \ref{lem:2-term}]
Let $x_j(\theta):=x_j(c(0),\theta)$ be the normalized eigenvector of
$\widehat U_p(c(0),\theta)$ corresponding to the eigenvalue
$\lambda_j(c(0),\theta)$. By the gauge equivalence \eqref{U-gauge}, we may choose
\begin{equation}\label{eq:x-theta-0}
x_j(\theta)=D_{\theta}x_j(0).
\end{equation}

By the Hellmann--Feynman formula for a simple eigenvalue of a unitary
matrix, see Remark \ref{rem:HF} below, we have
\begin{equation}\label{eq:H-F-1}
\left.
\partial_{\varrho}\lambda_j(c(\varrho),\theta)
\right|_{\varrho=0}
=
\left\langle
x_j(\theta),
\left.
\partial_{\varrho}\widehat U_p(c(\varrho),\theta)
\right|_{\varrho=0}
x_j(\theta)
\right\rangle.
\end{equation}
Using \eqref{eq:x-theta-0} and then differentiating with respect to $\theta$, we obtain
\begin{equation}\label{partial-theta-lambda}
\left.
\partial_\theta\partial_{\varrho}\lambda_j(c(\varrho),\theta)
\right|_{\varrho=0}
=
\left\langle
x_j(0),
\partial_\theta
\left(
D_{\theta}^{-1}
\left.
\partial_{\varrho}\widehat U_p(c(\varrho),\theta)
\right|_{\varrho=0}
D_{\theta}
\right)
x_j(0)
\right\rangle.
\end{equation}

\begin{remark}\label{rem:HF}
Here is a careful derivation of \eqref{eq:H-F-1}.

For $(\varrho,\theta)\in\mathcal A(\rho_0)$, take the normalized analytic eigenvector
$x_j(c(\varrho),\theta)\in\mathrm{Ran}\ P_j(\varrho,\theta)$ defined in
\eqref{def:Riesz-proj}.

 That is, we have
\begin{equation}
\lambda_j(c(\varrho),\theta)
=
\left\langle
x_j(c(\varrho),\theta),
\widehat U_p(c(\varrho),\theta)x_j(c(\varrho),\theta)
\right\rangle.
\end{equation}
Taking the derivative with respect to $\varrho$, and omitting all arguments to simplify notations, we obtain
\begin{eqnarray}
\partial_{\varrho}\lambda_j
&=&
\langle x'_j,\widehat U x_j\rangle
+
\langle x_j,\widehat U' x_j\rangle
+
\langle x_j,\widehat U x'_j\rangle
\notag\\
&=&
\lambda_j
\left(
\langle x'_j,x_j\rangle
+
\langle x_j,x'_j\rangle
\right)
+
\langle x_j,\widehat U'x_j\rangle,
\end{eqnarray}
where all derivatives are taken with respect to $\varrho$. In the second equality, we used
$\widehat U x_j=\lambda_jx_j$ and, since $\widehat U$ is unitary, hence normal,
$\widehat U^*x_j=\overline{\lambda_j}x_j$. Then observe that
\begin{equation}
\langle x'_j,x_j\rangle+\langle x_j,x'_j\rangle
=
\partial_{\varrho}\|x_j(c(\varrho),\theta)\|^2
=
0.
\end{equation}
Substituting $\varrho=0$, we obtain \eqref{eq:H-F-1}.
\end{remark}

Now
\begin{equation}
\partial_{\varrho}C_1
=
\begin{bmatrix}
\alpha&-\dfrac{\beta\varrho}{\sqrt{1-\varrho^2}}\\
\dfrac{\overline{\beta}\varrho}{\sqrt{1-\varrho^2}}&\overline{\alpha}
\end{bmatrix} \text{ and hence }\left.
\partial_{\varrho}C_1\right|_{\varrho=0}
=
\begin{bmatrix}
\alpha & 0\\
0 & \overline{\alpha}
\end{bmatrix}.
\end{equation}
Therefore, by \eqref{def:Up-hat}, \eqref{def:S-hat}, and \eqref{def:C-hat},
\begin{equation}
\left.
\partial_{\varrho}\widehat U_p(c(\varrho),\theta)
\right|_{\varrho=0}
=
e^{-i\theta}|e_p\rangle\langle e_1|\otimes \overline{\alpha}P_-
+
|e_2\rangle\langle e_1|\otimes \alpha P_+.
\end{equation}
Thus,
\begin{equation}\label{eq:partial-U}
D_{\theta}^{-1}
\left.
\partial_{\varrho}\widehat U_p(c(\varrho),\theta)
\right|_{\varrho=0}
D_{\theta}
=
e^{-i\theta}|e_p\rangle\langle e_1|\otimes \overline{\alpha}P_-
+
e^{i\theta}|e_2\rangle\langle e_1|\otimes \alpha P_+.
\end{equation}
Differentiating \eqref{eq:partial-U} with respect to $\theta$, we get
\begin{equation}\label{d-theta-3}
\partial_\theta
\left(
D_\theta^{-1}
\left.
\partial_{\varrho}\widehat U_p(c(\varrho),\theta)
\right|_{\varrho=0}
D_\theta
\right)
=
-i e^{-i\theta}|e_p\rangle\langle e_1|\otimes \overline{\alpha}P_-
+
i e^{i\theta}|e_2\rangle\langle e_1|\otimes \alpha P_+.
\end{equation}

Write
\begin{equation}
x_j:=x_j(c(0),0)
=
\sum_{m=1}^p e_m\otimes x_{j,m},
\qquad
x_{j,m}:=
\begin{bmatrix}
x_{j,m}^+\\
x_{j,m}^-
\end{bmatrix}.
\end{equation}
Substituting \eqref{d-theta-3} into \eqref{partial-theta-lambda}, we obtain
\begin{align}\label{partial-2-1}
\left.
\partial_{\theta}\partial_{\varrho}\lambda_j(c(\varrho),\theta)
\right|_{\varrho=0}
&=
-i e^{-i\theta}\overline{\alpha}\,
\langle x_{j,p},P_-x_{j,1}\rangle
+
i e^{i\theta}\alpha\,
\langle x_{j,2},P_+x_{j,1}\rangle
\notag\\
&=
i\left(
e^{i\theta}\alpha\,\overline{x_{j,2}^+}x_{j,1}^+
-
e^{-i\theta}\overline{\alpha}\,\overline{x_{j,p}^-}x_{j,1}^-
\right).
\end{align}

It remains to express $x_{j,2}^+$ and $x_{j,p}^-$ in terms of the two
components at the reflector. We claim that
\begin{equation}\label{eq:x0-relations}
x_{j,1}^-=\lambda_j(c(0),0)\overline{\beta}\,x_{j,2}^+,
\qquad
x_{j,1}^+=-\lambda_j(c(0),0)\beta\,x_{j,p}^-.
\end{equation}
Assuming \eqref{eq:x0-relations} for the moment.  Use
$|\lambda_j(c(0),0)|=1$ and $|\alpha|=|\beta|=1$, then \eqref{partial-2-1} gives
\begin{align}
\left.
\partial_{\theta}\partial_{\varrho}\lambda_j(c(\varrho),\theta)
\right|_{\varrho=0}
&=
i\lambda_j(c(0),0)
\left(
e^{i\theta}\alpha\overline{\beta}\,
\overline{x_{j,1}^-}x_{j,1}^+
+
e^{-i\theta}\overline{\alpha}\beta\,
\overline{x_{j,1}^+}x_{j,1}^-
\right)
\notag\\
&=
2i\Re\left(
e^{i\theta}\alpha\overline{\beta}\,
x_{j,1}^+\overline{x_{j,1}^-}
\right)
\lambda_j(c(0),0).
\end{align}
This proves \eqref{lem:eq-2-term}.

It remains to prove \eqref{eq:x0-relations}.
From \eqref{def:Up-hat}, $\widehat U_p(c(0),0)=\widehat S(0)\widehat C(c(0))$. The eigenvalue equation
\begin{equation}
\widehat U_p(c(0),0)x_j=\lambda_jx_j
\end{equation}
implies
\begin{equation}
\widehat C(c(0))x_j=\lambda_j(\widehat S(0))^*x_j.
\end{equation}
Hence we have
\begin{equation}\label{collecting}
C_nx_{j,n}
=
\lambda_j
\begin{bmatrix}
x_{j,n+1}^+\\
x_{j,n-1}^-
\end{bmatrix},
\end{equation}
with the definitions $x_{j,p+1}:=x_{j,1}$ and $x_{j,0}:=x_{j,p}$, i.e., the indices are defined cyclically.

In particular, for $n=1$,
\begin{equation}\label{x_k-relations}
\begin{bmatrix}
0 & \beta\\
-\overline{\beta} & 0
\end{bmatrix}
x_{j,1}
=
\lambda_j
\begin{bmatrix}
x_{j,2}^+\\
x_{j,p}^-
\end{bmatrix}.
\end{equation}
Thus
\begin{equation}
\beta x_{j,1}^-=\lambda_jx_{j,2}^+,
\qquad
-\overline{\beta}x_{j,1}^+=\lambda_jx_{j,p}^-.
\end{equation}
Equivalently,
\begin{equation}
x_{j,1}^-=\lambda_j\overline{\beta}\,x_{j,2}^+,
\qquad
x_{j,1}^+=-\lambda_j\beta\,x_{j,p}^-,
\end{equation}
which is \eqref{eq:x0-relations}. This finishes the proof.
\end{proof}
\begin{proof}[Proof of Lemma \ref{lem:eigen-bonds}]

Taking the Euclidean norm of both sides of \eqref{collecting}, and using that
$C_n$ is unitary and $|\lambda_j|=1$, we obtain
\begin{equation}\label{x-general}
|x_{j,n}^+|^2+|x_{j,n}^-|^2
=
|x_{j,n+1}^+|^2+|x_{j,n-1}^-|^2.
\end{equation}
Equivalently,
\begin{equation}\label{eq:dn-constant}
|x_{j,n+1}^+|^2-|x_{j,n}^-|^2=|x_{j,n}^+|^2-|x_{j,n-1}^-|^2.
\end{equation}
Thus the quantity
\begin{equation}\label{def:dn}
d_n:=|x_{j,n+1}^+|^2-|x_{j,n}^-|^2
\end{equation}
is independent of $n$.

Moreover, \eqref{x_k-relations} shows that for $n=1$ we have
\begin{equation}
|x_{j,1}^-|=|x_{j,2}^+|,
\qquad
|x_{j,1}^+|=|x_{j,p}^-|.
\end{equation}
In particular,
\begin{equation}
d_1=|x_{j,2}^+|^2-|x_{j,1}^-|^2=0.
\end{equation}
Since $d_n$ is independent of $n$, it follows that $d_n=0$ for all
$n=1,\ldots,p$. Therefore
\begin{equation}\label{pf:bones:diag}
|x_{j,n}^-|=|x_{j,n+1}^+|,
\qquad n=1,\ldots,p.
\end{equation}
This proves \eqref{bones:diag}, i.e., the blue bonds in Figure
\ref{Fig:Eigen-bones}.

Furthermore, if $|a_\ell|=1$ for some $\ell\neq1$, then $b_\ell=0$ and therefore
\begin{equation}
C_\ell=
\begin{bmatrix}
a_\ell & 0\\
0 & \overline{a_\ell}
\end{bmatrix}.
\end{equation}
Hence \eqref{collecting} gives
\begin{equation}\label{pf:bones:horiz}
|x_{j,\ell}^+|=|x_{j,\ell+1}^+|,
\qquad
|x_{j,\ell}^-|=|x_{j,\ell-1}^-|.
\end{equation}
Combining \eqref{pf:bones:horiz} with \eqref{pf:bones:diag}, we obtain
\begin{equation}
|x_{j,\ell}^+|
\overset{\eqref{pf:bones:horiz}}{=}
|x_{j,\ell+1}^+|
\overset{\eqref{pf:bones:diag}}{=}
|x_{j,\ell}^-|.
\end{equation}
This proves \eqref{bones:vert}, i.e., the dotted red bonds in Figure \ref{Fig:Eigen-bones}.
\end{proof}

\section{The harmonic-type velocity bounds}\label{sec:harmonic}
In this section we prove Theorem \ref{thm:harmonic}. The argument gives a
non-perturbative upper bound for arbitrary $p$-periodic strings with nonzero
transmission parameters. 

As in the cases $p=1$ and $p=2$ of Theorem \ref{thm:perturbative}, the case $p=1$ of Theorem \ref{thm:harmonic} follows from the homogeneous case, where the velocity is equal to $|a_1|$, while the case $p=2$ follows from Lemma \ref{lem:warm-up}. Hence, for the remainder of this section, we assume that $p>2$.

Let $c$ be the $p$-periodic string as in \eqref{def:c-p}.
Throughout this section, we assume that
\begin{equation}
a_j\neq0,\qquad j=1,\ldots,p.
\end{equation}
The corresponding local coins are
\begin{equation}
C_j=
\begin{bmatrix}
a_j & b_j\\
-\overline{b_j} & \overline{a_j}
\end{bmatrix},
\qquad (a_j,b_j)\in\mathbb S^3.
\end{equation}
As explained in Section \ref{sec:Fourier} above, the study of the velocity of the corresponding shift-coin operator
$U_p(c)=SC(c)$ reduces to the study of the unitary $2p\times 2p$ Floquet matrix
\begin{equation}
\widehat U_p(c,\theta)=\widehat S(\theta)\widehat C(c),
\end{equation}
where $\widehat S(\theta)$ and $\widehat C(c)$ are given in
\eqref{def:S-hat} and \eqref{def:C-hat}, respectively. We present here
$\widehat S(\theta)$ for the reader's convenience:
\begin{equation}\label{def:S-hat-harmonic}
\widehat S(\theta)=\sum_{j=1}^{p}
\left(|e_{j-1}\rangle\langle e_j|\otimes P_-+|e_{j+1}\rangle\langle e_j|\otimes P_+\right), \text{ where }|e_0\rangle := e^{-i\theta}|e_p\rangle,
\qquad
|e_{p+1}\rangle := e^{i\theta}|e_1\rangle.
\end{equation}
For a fixed $p$-periodic string $c$, Theorem \ref{thm:simple-spec}(\ref{thm:spectrum:1}) shows that the spectrum of
$\widehat U_p(c,\theta)$ is simple for every $\theta \in E(c)\subset[0,2\pi)$ where
\begin{equation}
E(c):=\left\{\theta\in[0,2\pi):\ e^{-2i\theta}\neq\prod_{k=1}^p\frac{a_k}{\overline{a_k}} \right\}.
\end{equation}
Thus, for every $\theta\in E(c)$, the Floquet matrix $\widehat U_p(c,\theta)$ has simple eigenvalues
\begin{equation}\label{eq:simple-10}
\lambda_j(c,\theta)=e^{i\omega_j(c,\theta)},
\qquad j=1,\ldots,2p,
\end{equation}
where, on each connected component of $E(c)$, the eigenvalues may be labeled so that the functions $\omega_j(c,\theta)$ are real analytic in $\theta$. Consequently, by \eqref{def:v-dispersion}, the velocity is given by
\begin{equation}\label{velocity:sup-E(c)}
v(U_p(c))=p\sup_{\substack{\theta\in E(c)\\ j=1,\ldots,2p}}
|\partial_\theta\omega_j(c,\theta)|,
\end{equation}
Although the labeling in \eqref{eq:simple-10} may break down at the finite set $[0,2\pi)\setminus E(c)$, the analytic unitary family $\theta\mapsto \widehat U_p(c,\theta))$ admits local analytic eigenvalue parametrizations through these points. Consequently, after relabeling, the functions $\partial_\theta\omega_j(c,\theta)$ have continuous extensions to the points in $[0,2\pi)\setminus E(c)$, and omitting those points does not change the supremum in \eqref{def:v-dispersion}.

\subsection{A Hellmann--Feynman formula for the block velocity}
Our first step in bounding the right-hand side of \eqref{velocity:sup-E(c)} is to establish a Hellmann--Feynman type identity for the derivative of a dispersion branch.
Lemma \ref{lem:HF} expresses the dispersion relations associated with a simple Floquet eigenvalue in terms of the boundary components of the corresponding eigenvector. This formula allows us to estimate the block velocity without requiring an explicit expression for the dispersion relations.

\begin{lemma}\label{lem:HF}
Suppose that
\begin{equation}
\lambda(c,\theta):=e^{i\omega(c,\theta)}
\end{equation}
is a simple eigenvalue of $\widehat U_p(c,\theta)$ with corresponding
 eigenvector
\begin{equation}
\psi=\psi(c,\theta)=\sum_{j=1}^p e_j\otimes\psi_j,
\qquad
\psi_j=
\begin{bmatrix}
\psi_j^+\\
\psi_j^-
\end{bmatrix}.
\end{equation}
Then
\begin{equation}\label{lem:eq:main}
\partial_\theta\omega(c,\theta)=\frac{|\psi_1^+|^2-|\psi_p^-|^2}{\|\psi\|^2}.
\end{equation}
\end{lemma}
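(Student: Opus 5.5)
The plan is to differentiate the eigenvalue in $\theta$ with a Hellmann--Feynman identity, exactly as in Remark \ref{rem:HF} but with $\theta$ in place of $\varrho$. Since $\lambda(c,\theta)$ is simple, analytic perturbation theory (the Riesz projection argument of Section \ref{sec:perturbation}) gives, locally in $\theta$, an analytic branch $\lambda(c,\theta)$ and an analytic eigenvector. The quantity on the right-hand side of \eqref{lem:eq:main} is invariant under rescaling $\psi$, so we may normalize $\|\psi\|=1$. The computation of Remark \ref{rem:HF} then gives
\begin{equation*}
\partial_\theta\lambda=\langle \psi,(\partial_\theta\widehat U_p(c,\theta))\psi\rangle ,
\end{equation*}
and $\partial_\theta\lambda=i\lambda\,\partial_\theta\omega$. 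Since $\widehat C(c)$ does not depend on $\theta$, we have $\partial_\theta\widehat U_p=(\partial_\theta\widehat S(\theta))\widehat C(c)$.

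Next we compute $\partial_\theta\widehat S(\theta)$ from \eqref{def:S-hat-harmonic}. Only the two boundary terms depend on $\theta$, namely $e^{i\theta}|e_1\rangle\langle e_p|\otimes P_+$ and $e^{-i\theta}|e_p\rangle\langle e_1|\otimes P_-$. Hence
\begin{equation*}
\partial_\theta\widehat S(\theta)=ie^{i\theta}|e_1\rangle\langle e_p|\otimes P_+-ie^{-i\theta}|e_p\rangle\langle e_1|\otimes P_- .
\end{equation*}
Rather than working with $\widehat C(c)\psi$ directly, I would use the eigenvalue equation in the form $\widehat C(c)\psi=\lambda\,\widehat S(\theta)^*\psi$, as in the proof of Lemma \ref{lem:2-term}. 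Reading off the adjoint of $\widehat S(\theta)$, the $(e_p,+)$ component of $\widehat S(\theta)^*\psi$ is $e^{-i\theta}\psi_1^+$, and the $(e_1,-)$ component is $e^{i\theta}\psi_p^-$. Only these two components survive the projections in $\partial_\theta\widehat S$. Therefore
\begin{equation*}
(\partial_\theta\widehat S)\widehat C\psi=i\lambda\left(\psi_1^+\, e_1\otimes|+\rangle-\psi_p^-\, e_p\otimes|-\rangle\right).
\end{equation*}

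Pairing this with $\psi$ gives $\partial_\theta\lambda=i\lambda\left(|\psi_1^+|^2-|\psi_p^-|^2\right)$. Dividing by $i\lambda$ yields \eqref{lem:eq:main} for normalized $\psi$, and restoring the factor $\|\psi\|^{-2}$ gives the general case. The result is real, as it must be since $\omega$ is real.

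The only delicate point is bookkeeping: getting the phases $e^{\pm i\theta}$ right in $\widehat S(\theta)^*$, so that they cancel against those in $\partial_\theta\widehat S$. I would verify this against the explicit $p=3$ matrix \eqref{def:U3}. The justification of the Hellmann--Feynman step itself is already contained in Remark \ref{rem:HF}, given simplicity of the eigenvalue.
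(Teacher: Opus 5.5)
Your proposal is correct and takes essentially the same route as the paper. Both use the Hellmann--Feynman identity in $\theta$ and then the eigenvalue equation to reduce $(\partial_\theta\widehat U_p)\psi$ to $\lambda\,\widehat S'(\theta)\widehat S(\theta)^*\psi$; the paper phrases this as the operator identity $\widehat U_p'\widehat U_p^*=\widehat S'\widehat S^*=i\bigl(|e_1\rangle\langle e_1|\otimes P_+-|e_p\rangle\langle e_p|\otimes P_-\bigr)$, and your phase bookkeeping for the two boundary components matches it.
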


Thus the derivative of the eigenphase is the normalized net boundary current through the block, see Remark \ref{rem:current} below.
\begin{remark}
With Lemma \ref{lem:eigen-bonds},  formula \eqref{lem:eq:main} confirms, in the decoupled case, that $\partial_\theta\omega(c(0),\theta)=0$.
\end{remark}
\begin{proof}[Proof of Lemma \ref{lem:HF}]
We first recall the Hellmann--Feynman formula for simple eigenvalues of unitary
matrices. Since
\begin{equation}
\lambda(c,\theta)
=
\frac{\langle \psi,\widehat U_p(c,\theta)\psi\rangle}{\|\psi\|^2},
\end{equation}
differentiating with respect to $\theta$, and omitting all arguments for
notational simplicity, gives
\begin{align}
\partial_\theta\lambda\,\|\psi\|^2+\lambda\,\partial_\theta\|\psi\|^2&=\langle \psi',\widehat U_p\psi\rangle+\langle \psi,\widehat U_p'\psi\rangle +\langle \psi,\widehat U_p\psi'\rangle \notag\\
&=\lambda\left(\langle \psi',\psi\rangle+\langle \psi,\psi'\rangle\right)+\langle \psi,\widehat U_p'\psi\rangle.
\end{align}
Since
$\partial_\theta\|\psi\|^2=\langle \psi',\psi\rangle+\langle \psi,\psi'\rangle$,
the terms involving $\partial_\theta\|\psi\|^2$ cancel. Hence
\begin{equation}
\partial_\theta\lambda\,\|\psi\|^2=\langle \psi,\widehat U_p'\psi\rangle.
\end{equation}
Since $\lambda=e^{i\omega}$, we have $\partial_\theta\lambda=i\lambda\,\partial_\theta\omega$.
Therefore,
\begin{equation}
i\lambda\|\psi\|^2\partial_\theta\omega=\langle \psi,\widehat U_p'\psi\rangle,
\end{equation}
and hence
\begin{equation}
\partial_\theta\omega=-i\frac{\langle \psi,\widehat U_p'\widehat U_p^*\psi\rangle}{\|\psi\|^2}.
\end{equation}
Now
\begin{equation}
(\partial_\theta \widehat U_p(c,\theta))(\widehat U_p(c,\theta))^*=\widehat S'(\theta)\widehat C(c)(\widehat C(c))^*(\widehat S(\theta))^*=\widehat S'(\theta)(\widehat S(\theta))^*.
\end{equation}
Thus
\begin{equation}\label{pf:omega-prime-2}
\partial_\theta\omega(c,\theta)=-i\frac{\left\langle\psi,\widehat S'(\theta)(\widehat S(\theta))^*\psi\right\rangle}{\|\psi\|^2}.
\end{equation}

A direct calculation from \eqref{def:S-hat-harmonic} gives
\begin{align}
 \widehat S'(\theta)&=i\left(e^{i\theta}|e_1\rangle\langle e_p|\otimes P_+-e^{-i\theta}|e_p\rangle\langle e_1|\otimes P_-\right), \notag\\
(\widehat S(\theta))^*&=\sum_{j=1}^p\left(|e_j\rangle\langle e_{j-1}|\otimes P_-+|e_j\rangle\langle e_{j+1}|\otimes P_+\right).
\end{align}
Therefore,
\begin{equation}
\widehat S'(\theta)(\widehat S(\theta))^*=i\left(|e_1\rangle\langle e_1|\otimes P_+-
|e_p\rangle\langle e_p|\otimes P_-\right).
\end{equation}
Substituting this into \eqref{pf:omega-prime-2}, we obtain
\begin{equation}
\partial_\theta\omega(c,\theta)=\frac{|\psi_1^+|^2-|\psi_p^-|^2}{\|\psi\|^2}.
\end{equation}
This proves \eqref{lem:eq:main}.
\end{proof}

\subsection{A conserved quantity and the harmonic bound}

Lemma \ref{lem:HF} above reduces the estimation of the block velocity to
the control of a difference of boundary masses of the Floquet eigenvector. We
now show that this difference is in fact the value of a conserved quantity along
the period. This conservation law allows us to compare the same quantity at each site and leads directly to a harmonic mean bound on the velocity.

Observe that
\begin{equation}
\widehat S(\theta)\widehat C(c)\psi=\lambda\psi
\qquad\Longrightarrow\qquad
\widehat C(c)\psi=\lambda(\widehat S(\theta))^*\psi.
\end{equation}
Hence
\begin{equation}\label{eq:C-psi}
C_j\begin{bmatrix}\psi_j^+\\\psi_j^-\end{bmatrix}
=
\lambda\begin{bmatrix}\psi_{j+1}^+\\\psi_{j-1}^-
\end{bmatrix},
\qquad j=1,\ldots,p,
\end{equation}
with the cyclic boundary conventions
\begin{equation}
\psi_{p+1}^+=e^{-i\theta}\psi_1^+, \qquad \psi_0^-=e^{i\theta}\psi_p^-.
\end{equation}
Since $C_j$ is unitary and $|\lambda|=1$, \eqref{eq:C-psi} implies
\begin{equation}\label{eq:=LR}
|\psi_j^+|^2+|\psi_j^-|^2 = |\psi_{j+1}^+|^2+|\psi_{j-1}^-|^2
\end{equation}
Equivalently,
\begin{equation}
|\psi_j^+|^2-|\psi_{j-1}^-|^2 = |\psi_{j+1}^+|^2-|\psi_j^-|^2,  \text{ for all }j=1,\ldots,p.
\end{equation}
Thus the quantity
\begin{equation}\label{eq:conserved}
K^\psi(c,\theta) := |\psi_j^+|^2-|\psi_{j-1}^-|^2 = |\psi_{j+1}^+|^2-|\psi_j^-|^2
\end{equation}
is independent of $j$ \footnote{Compare with \eqref{def:dn} when $(\varrho,\theta)=(0,0)$.}.
In particular, by Lemma \ref{lem:HF},
\begin{equation}\label{11}
\partial_\theta\omega(c,\theta)=\frac{K^\psi(c,\theta)}{\|\psi\|^2}.
\end{equation}
\begin{remark}\label{rem:current}
The conserved quantity $K^\psi(c,\theta)$ has a natural interpretation in terms of (probability) fluxes. For a Floquet eigenvector $\psi=\psi(c,\theta)$, equation \eqref{eq:C-psi} (and its consequence \eqref{eq:=LR}) says that, after applying the coin at site $j$ and then shifting, the outgoing $(+)$ component crossing the cut from $j$ to $j+1$ is represented, up to the eigenvalue phase, by $\psi_{j+1}^+$. Thus the corresponding right-going flux is $|\psi_{j+1}^+|^2$. 
Similarly, the left-going flux through the same cut is $|\psi_j^-|^2$, as follows from \eqref{eq:C-psi} applied to site $j+1$; see Figure \ref{fig:flux-cut}.

\begin{figure}[H]
\centering
\includegraphics[width=0.25\textwidth]{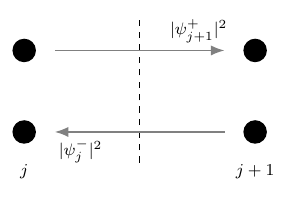}
\caption{Right- and left-going fluxes through the cut between sites $j$ and $j+1$.}
\label{fig:flux-cut}
\end{figure}
\noindent Thus, the signed difference is the net flux through the cut. This is interpreted as the current $K^{\psi}$, which is independent of the position of the cut, as explained in \eqref{eq:conserved}. With this notion, the block velocity is the normalized maximum current carried by the Floquet eigenvectors.

In the decoupled case $c=c(0)$, the equal-modulus relations of Lemma~\ref{lem:eigen-bonds} (see Figure \ref{Fig:Eigen-bones}) correspond to a vanishing current. Hence $\partial_\theta\omega(c(0),0)=0$, in agreement with the fact that the spectrum of $\widehat U_p(c(0),\theta)$ is independent of $\theta$.
\end{remark}

We have,
\begin{align}\label{K:Inner}
K^\psi(c,\theta)&=|\psi_j^+|^2-|\psi_{j-1}^-|^2\overset{(\ref{eq:C-psi})}{=}|\psi_j^+|^2-\left|-\overline{b_j}\psi_j^++\overline{a_j}\psi_j^-\right|^2\notag\\
&= \left\langle\begin{bmatrix}1 & 0\\ -\overline{b_j} & \overline{a_j}\end{bmatrix}\psi_j, \begin{bmatrix}1 & 0\\ \overline{b_j} & -\overline{a_j}\end{bmatrix}\psi_j\right\rangle \notag\\
&=\left\langle\psi_j,
\begin{bmatrix}
|a_j|^2 & \overline{a_j}b_j\\
a_j\overline{b_j} & -|a_j|^2
\end{bmatrix}\psi_j\right\rangle \notag\\
&= |a_j| \langle\psi_j, A_j \psi_j\rangle,
\end{align}
where (since  $|a_j|>0$)
\begin{equation}\label{def:Aj}
A_j := \frac{1}{|a_j|}
\begin{bmatrix}
|a_j|^2 & \overline{a_j}b_j\\
a_j\overline{b_j} & -|a_j|^2
\end{bmatrix}.
\end{equation}
Observe that $A_j=A_j^*$ and $A_j^2=\idty_2$. Hence $A_j$ is unitary with spectrum contained in $\{-1,1\}$.

To prove the harmonic bound for the velocity 
we may use \eqref{K:Inner} as follows to obtain a simple bound on $K^\psi(c,\theta)$. To obtain the refined harmonic bound in Theorem \ref{thm:harmonic}, a more involved argument is needed, see Section \ref{subsection:refined} below.

Since $\|A_j\|=1$, we have
\begin{equation}\label{Hf:11}
|K^\psi(c,\theta)| \leq |a_j|\left(|\psi_j^+|^2+|\psi_j^-|^2\right).
\end{equation}
Since $|a_j|>0$, this implies
\begin{equation}
\frac{|K^\psi(c,\theta)|}{|a_j|} \leq |\psi_j^+|^2+|\psi_j^-|^2.
\end{equation}
Summing over one period gives
\begin{equation}
|K^\psi(c,\theta)|\sum_{j=1}^p\frac{1}{|a_j|} \leq \|\psi\|^2.
\end{equation}
Substituting this estimate into Lemma \ref{lem:HF}, we obtain
\begin{equation}\label{eq:v-harmonic}
|\partial_\theta\omega(c,\theta)| \leq\left(\sum_{j=1}^p\frac{1}{|a_j|}\right)^{-1}.
\end{equation}
Consequently, \eqref{eq:v-vp} gives,
\begin{equation}
v(U_p(c)) = p\,v_{\mathrm{blk}}^{(p)}(U_p(c)) \leq p\left(\sum_{j=1}^p\frac{1}{|a_j|}\right)^{-1}.
\end{equation}
Thus the velocity is bounded by the harmonic mean of
$|a_1|,\ldots,|a_p|$. This bound is sharp in the homogeneous case, when
$|a_1|=\cdots=|a_p|$, see e.g. \cite{ARCSW25}. However, it is insensitive
to the ordering of the periodic sequence. This suggests that \eqref{eq:v-harmonic} is too coarse to capture the effect of the spatial arrangement of the coins, and motivates the search for a refined bound with the natural translation symmetry of the periodic walk.

\subsection{A refined harmonic bound}\label{subsection:refined}

We now refine the preceding estimate. Define
\begin{equation}\label{def:tau}
\tau_j:=\frac{b_j}{1+|a_j|},\qquad \alpha_j:=\frac{a_j}{|a_j|}.
\end{equation}
One checks directly from \eqref{def:Aj} that
\begin{equation}
A_j=
\begin{bmatrix}
|a_j| & \overline{\alpha_j}b_j\\
\alpha_j\overline{b_j} & -|a_j|
\end{bmatrix}=Q_j\begin{bmatrix}1&0\\0&-1\end{bmatrix}Q_j^*,
\end{equation}
where
\begin{equation}
Q_j=\frac{1}{\sqrt{2(1+|a_j|)}}\begin{bmatrix}
1+|a_j| & \overline{\alpha_j}b_j\\
\alpha_j\overline{b_j} & -(1+|a_j|)
\end{bmatrix}.
\end{equation}
Indeed, the two columns of $Q_j$ are normalized eigenvectors of $A_j$ with eigenvalues
$1$ and $-1$, respectively.

Therefore,
\begin{equation}
Q_j^*\psi_j
=
\sqrt{\frac{1+|a_j|}{2}}
\begin{bmatrix}
\psi_j^+ + \overline{\alpha_j}\tau_j\psi_j^-\\
\alpha_j\overline{\tau_j}\psi_j^+-\psi_j^-
\end{bmatrix}.
\end{equation}
Hence
\begin{align}
\langle \psi_j,A_j\psi_j\rangle&=\|\psi_j\|_{\mathbb C^2}^2-(1+|a_j|)\left|\alpha_j\overline{\tau_j}\psi_j^+-\psi_j^-\right|^2,\label{eq:Aj-positive}\\
-\langle \psi_j,A_j\psi_j\rangle&=\|\psi_j\|_{\mathbb C^2}^2-(1+|a_j|)\left|\psi_j^++\overline{\alpha_j}\tau_j\psi_j^-\right|^2.\label{eq:Aj-negative}
\end{align}

Thus, setting $K^\psi:=K^\psi(c,\theta)$, we consider two cases:
\begin{align}
K^\psi\geq0&\quad\Longrightarrow\quad\frac{|K^\psi|}{|a_j|}=\|\psi_j\|_{\mathbb C^2}^2-(1+|a_j|)\left|R_j^{(+)}\right|^2,\\
K^\psi\leq0&\quad\Longrightarrow\quad\frac{|K^\psi|}{|a_j|}=\|\psi_j\|_{\mathbb C^2}^2-(1+|a_j|)\left|R_j^{(-)}\right|^2,
\end{align}
where
\begin{equation}\label{def:R+R-}
R_j^{(+)}:=\alpha_j\overline{\tau_j}\psi_j^+-\psi_j^-,\qquad
R_j^{(-)}:=\psi_j^++\overline{\alpha_j}\tau_j\psi_j^-.
\end{equation}
If $K^\psi\neq0$, take $s:=\mathrm{sign}(K^\psi)$. Summing over one period gives
\begin{equation}
\|\psi\|^2=|K^\psi|\sum_{j=1}^p\frac{1}{|a_j|}+\sum_{j=1}^p(1+|a_j|)\left|R_j^{(s)}\right|^2.
\end{equation}
Therefore, by \eqref{11}
\begin{equation}
|\partial_\theta\omega(c,\theta)|=\left(\sum_{j=1}^p\frac{1}{|a_j|}+\frac{1}{|K^\psi|}\sum_{j=1}^p(1+|a_j|)\left|R_j^{(s)}\right|^2\right)^{-1}.
\end{equation}
If $K^\psi=0$, then $\partial_\theta\omega(c,\theta)=0$, and the upper
bounds below are trivial.

\begin{lemma}\label{lem:noVectors}
Assume $K^\psi\neq0$, and let $s:=\mathrm{sign}(K^\psi)$. Then
\begin{equation}\label{eq:noVectors}
\frac{1}{|K^\psi|}\sum_{j=1}^p(1+|a_j|)\left|R_j^{(s)}\right|^2\geq\frac{1}{4 \mu_a}\sum_{j=1}^p\left||\tau_{j+1}|-|\tau_j|\right|^2, \text{ where }\mu_a:=\max_{1\leq k\leq p}\frac{|a_k|^2}{1+|a_k|}
\end{equation}
where the indices are understood cyclically.
\end{lemma}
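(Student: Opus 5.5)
The plan is to turn the local defects $R_j^{(s)}$ into information about how the ratios $|\psi_j^-|/|\psi_j^+|$ (or $|\psi_{j-1}^-|/|\psi_j^+|$) vary from site to site. Then I would use the conservation law \eqref{eq:conserved} to force neighbouring ratios to agree. Any disagreement must then be paid for by the defects, and Cauchy--Schwarz converts this into \eqref{eq:noVectors}. I will treat the case $s=+$; the case $s=-$ should follow by the symmetric argument, exchanging the roles of $\psi^+$ and $\psi^-$ (equivalently reflecting the lattice), which sends $K^\psi\mapsto -K^\psi$.

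\emph{Step 1 (exact local identities).} From the definition \eqref{def:R+R-} we have $\psi_j^-=\alpha_j\overline{\tau_j}\psi_j^+-R_j^{(+)}$. Substituting this into the second row of \eqref{eq:C-psi}, $\lambda\psi_{j-1}^-=-\overline{b_j}\psi_j^++\overline{a_j}\psi_j^-$, and using $\overline{b_j}=(1+|a_j|)\overline{\tau_j}$ together with $\overline{a_j}\alpha_j=|a_j|$, I expect the identity
\[
\lambda\psi_{j-1}^-=-\overline{\tau_j}\,\psi_j^+-\overline{a_j}\,R_j^{(+)}.
\]
Thus, writing $u_j:=|\psi_j^+|$, both $|\psi_j^-|$ and $|\psi_{j-1}^-|$ equal $|\tau_j|u_j$ up to errors of size $|R_j^{(+)}|$ and $|a_j||R_j^{(+)}|$. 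In the exact case $R_j^{(+)}=0$, the conservation law $K^\psi=u_j^2-|\psi_{j-1}^-|^2$ gives $u_j^2(1-|\tau_j|^2)=K^\psi$. Since $1-|\tau_j|^2=2|a_j|/(1+|a_j|)$, this yields $u_j^2=K^\psi(1+|a_j|)/(2|a_j|)$.

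\emph{Step 2 (comparison between neighbours).} Using $K^\psi=u_{j+1}^2-|\psi_j^-|^2$ and $K^\psi=u_j^2-|\psi_{j-1}^-|^2$, I would express $|\tau_j|$ in two ways: once through the data at site $j$, and once through the data at site $j+1$ via $|\psi_j^-|$. The aim is an inequality of the form
\[
\bigl||\tau_{j+1}|-|\tau_j|\bigr|\,\sqrt{K^\psi}\;\lesssim\;\frac{|a_j|}{\sqrt{1+|a_j|}}\,|R_j^{(+)}|\sqrt{1+|a_j|}+\frac{|a_{j+1}|}{\sqrt{1+|a_{j+1}|}}\,|R_{j+1}^{(+)}|\sqrt{1+|a_{j+1}|}.
\]
The weights $|a_k|/\sqrt{1+|a_k|}$ are chosen so that, after bounding them by $\sqrt{\mu_a}$, squaring, and using $(x+y)^2\le 2x^2+2y^2$, each $(1+|a_j|)|R_j^{(+)}|^2$ appears at most twice when summing cyclically over $j$. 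That counting accounts for the constant $1/(4\mu_a)$. Dividing by $K^\psi$ then gives \eqref{eq:noVectors}.

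\emph{Main obstacle.} The hard part is Step 2: obtaining a clean linear (not merely square-root) control of $\bigl||\tau_{j+1}|-|\tau_j|\bigr|$ by the defects, with exactly the factor $|a_j|$ coming from the identity in Step 1 and with $\sqrt{K^\psi}$ as the normalisation. The relations involve moduli of sums, so the first thing I would try is the following. Set $g_j:=\sqrt{u_j^2-K^\psi}=|\psi_{j-1}^-|$, so that both $g_j$ and $g_{j+1}$ are expressed through the same kind of quantities. Then compare the ratios $g_j/u_j$ and $g_{j+1}/u_{j+1}$ to $|\tau_j|$ and $|\tau_{j+1}|$ using the reverse triangle inequality. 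The map $r\mapsto r$ versus $u\mapsto\sqrt{u^2-K}/u$ is monotone, and its derivative should be bounded with the $\sqrt{K}$ normalisation. If this chain produces the wrong constant, I would instead consider the variational problem of minimising $\sum_j(1+|a_j|)|R_j^{(+)}|^2$ subject to the constraints \eqref{eq:C-psi} and fixed $K^\psi$, and look for the lower bound among its exact solutions.
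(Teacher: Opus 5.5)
Your Step 1 identity is correct, and the inequality you aim for in Step 2, namely $\bigl||\tau_{j+1}|-|\tau_j|\bigr|\sqrt{K^\psi}\le|a_j||R_j^{(+)}|+|a_{j+1}||R_{j+1}^{(+)}|$, is exactly the right one; your squaring and double-counting then do give $1/(4\mu_a)$. The gap is that Step 2 is never proved. Moreover, the route you sketch does not produce the weight $|a_j|$ in front of $|R_j^{(+)}|$: you compare each site's ratio with $|\tau_j|$ separately, then use monotonicity. To pass from site $j$ to site $j+1$ you only have two relations:
\begin{itemize}
\item the definition $\psi_j^-=\alpha_j\overline{\tau_j}\psi_j^+-R_j^{(+)}$, whose error $|R_j^{(+)}|$ carries no factor $|a_j|$;
\item the first row of \eqref{eq:C-psi}, which in your variables reads $\lambda\psi_{j+1}^+=\alpha_j\psi_j^+-b_jR_j^{(+)}$ and controls $|u_{j+1}-u_j|$ only by $|b_j||R_j^{(+)}|$.
\end{itemize}
Your ratio device needs exactly such control of $|u_{j+1}-u_j|$. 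In addition, $\frac{d}{du}\frac{\sqrt{u^2-K}}{u}=\frac{K}{u^2\sqrt{u^2-K}}$ is unbounded as $u^2\downarrow K$, i.e.\ when the left-moving flux vanishes. Closing the argument with these ingredients gives an estimate such as $\bigl||\tau_{j+1}|-|\tau_j|\bigr|\sqrt{K^\psi}\le(1+|b_j|)|R_j^{(+)}|+|a_{j+1}||R_{j+1}^{(+)}|$. The order-one weight $1+|b_j|$ leads to a constant that stays bounded as $\max_k|a_k|\to0$, whereas $1/(4\mu_a)\to\infty$. So the lemma as stated does not follow.

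The missing idea is to compare $\psi_j^-$ not with $\psi_j^+$ but with the right-mover $\psi_{j+1}^+$ across the same cut. Multiply the first row $\lambda\psi_{j+1}^+=a_j\psi_j^++b_j\psi_j^-$ by $\overline{\tau_j}$ and use $\overline{\tau_j}b_j=1-|a_j|$ and $a_j=\alpha_j|a_j|$. This gives
\[
\lambda\overline{\tau_j}\,\psi_{j+1}^+-\psi_j^-=|a_j|\bigl(\alpha_j\overline{\tau_j}\psi_j^+-\psi_j^-\bigr)=|a_j|R_j^{(+)},
\]
so the cancellation $1-\overline{\tau_j}b_j=|a_j|$ supplies exactly the weight you were missing. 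Now eliminate $\psi_j^-$ between this and your own Step 1 identity at site $j+1$, $\lambda\psi_j^-=-\overline{\tau_{j+1}}\psi_{j+1}^+-\overline{a_{j+1}}R_{j+1}^{(+)}$. This gives
\[
\bigl(\overline{\tau_{j+1}}+\lambda^2\overline{\tau_j}\bigr)\psi_{j+1}^+=\lambda|a_j|R_j^{(+)}-\overline{a_{j+1}}R_{j+1}^{(+)},
\]
which is the paper's identity \eqref{R+} up to a unimodular factor. The reverse triangle inequality and $|\psi_{j+1}^+|^2=K^\psi+|\psi_j^-|^2\ge K^\psi$ then yield your Step 2 inequality directly, with no ratio or monotonicity argument. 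For $s=-$ the paper uses the companion identity \eqref{R-} together with $|\psi_j^-|\ge\sqrt{|K^\psi|}$. Your reflection reduction is a workable alternative: it preserves $|a_j|$, $|\tau_j|$ and $\mu_a$, and maps $|R_j^{(-)}|$ to $|R^{(+)}|$ of the reflected eigenvector.
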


It follows that
\begin{equation}\label{eq:refined-bound-complex}
|\partial_\theta\omega(c,\theta)|\leq\left(\sum_{j=1}^p\frac{1}{|a_j|}+\frac{1}{4 \mu_a}\sum_{j=1}^p\left|\frac{|b_{j+1}|}{1+|a_{j+1}|}-\frac{|b_j|}{1+|a_j|}\right|^2
\right)^{-1}.
\end{equation}
Consequently,
\begin{equation}
v(U_p(c))\leq p\left(\sum_{j=1}^p\frac{1}{|a_j|}+\frac{1}{4 \mu_a}\sum_{j=1}^p\left|\frac{|b_{j+1}|}{1+|a_{j+1}|}-\frac{|b_j|}{1+|a_j|}\right|^2\right)^{-1}.
\end{equation}

\begin{proof}[Proof of Lemma \ref{lem:noVectors}]
We will later prove the following two identities:
\begin{align}
|a_{j+1}|R_{j+1}^{(+)}-\alpha_{j+1}e^{i\omega}|a_j|R_j^{(+)}&=-\alpha_{j+1}\left(\overline{\tau_{j+1}}+e^{2i\omega}\overline{\tau_j}\right)\psi_{j+1}^+,
\label{R+}\\
e^{i\omega}|a_{j+1}|R_{j+1}^{(-)}-\alpha_j |a_j|R_j^{(-)}&=\left(\tau_j+e^{2i\omega}\tau_{j+1}\right)\psi_j^-.\label{R-}
\end{align}
Taking moduli gives
\begin{align}
|a_{j+1}||R_{j+1}^{(+)}|+|a_j||R_j^{(+)}|&\geq\left|\overline{\tau_{j+1}}+e^{2i\omega}\overline{\tau_j}\right||\psi_{j+1}^+|,\notag\\
|a_{j+1}||R_{j+1}^{(-)}|+|a_j||R_j^{(-)}|&\geq\left|\tau_j+e^{2i\omega}\tau_{j+1}\right||\psi_j^-|.\label{|R|}
\end{align}
Since
\begin{equation}K^\psi=|\psi_{j+1}^+|^2-|\psi_j^-|^2,
\end{equation}
we have
\begin{align}K^\psi>0&\quad\Longrightarrow\quad|\psi_{j+1}^+|\geq \sqrt{|K^\psi|},\notag\\
K^\psi<0&\quad\Longrightarrow\quad|\psi_j^-|\geq \sqrt{|K^\psi|}.\label{Sign-K}
\end{align}
Moreover, by the reverse triangle inequality,
\begin{equation}\label{ineq:t}
\left|\overline{\tau_{j+1}}+e^{2i\omega}\overline{\tau_j}\right|\geq\bigr| |\tau_{j+1}|-|\tau_j| \bigr|,
\qquad
\left|\tau_j+e^{2i\omega} \tau_{j+1}\right|\geq \bigr| |\tau_{j+1}|-|\tau_j| \bigr|.
\end{equation}
If $K^\psi>0$, we use the first inequality in \eqref{|R|}. If
$K^\psi<0$, we use the second inequality in \eqref{|R|}. Hence, with
$s=\mathrm{sign}(K^\psi)$,
\begin{equation}
|a_{j+1}||R_{j+1}^{(s)}|+|a_j||R_j^{(s)}|\geq\bigl||\tau_{j+1}|-|\tau_j|\bigr|\sqrt{|K^\psi|}.
\end{equation}
Squaring both sides and using
\begin{equation}\label{bound}
2(A^2+B^2)\geq(A+B)^2,
\qquad
|a_j|^2\leq (1+|a_j|) \max_{1\leq k\leq p}\frac{|a_k|^2}{1+|a_k|}=: (1+|a_j|)\mu_a
\end{equation}
we get
\begin{equation}
2 \mu_a(1+|a_{j+1}|)|R_{j+1}^{(s)}|^2+2 \mu_a (1+|a_j|)|R_j^{(s)}|^2\geq \bigl||\tau_{j+1}|-|\tau_j|\bigr|^2 |K^\psi|.
\end{equation}
Summing over one period gives
\begin{equation}
\sum_{j=1}^p(1+|a_j|)|R_j^{(s)}|^2\geq\frac{|K^\psi|}{4 \mu_a }\sum_{j=1}^p\bigl||\tau_{j+1}|-|\tau_j|\bigr|^2.
\end{equation}
This proves \eqref{eq:noVectors}.

It remains to prove \eqref{R+} and \eqref{R-}.
Recall from \eqref{eq:C-psi} that
\begin{align}
e^{i\omega}\psi_{j+1}^+&=a_j\psi_j^+ + b_j\psi_j^-,\label{1}\\
e^{i\omega} \psi_j^-&=-\overline{b_{j+1}}\psi_{j+1}^++\overline{a_{j+1}}\psi_{j+1}^-,\label{2}
\end{align}
and from \eqref{def:R+R-}
\begin{equation}\label{3}
R_j^{(+)}=\alpha_j\overline{\tau_j}\psi_j^+-\psi_j^-,\qquad R_j^{(-)}=\psi_j^++\overline{\alpha_j}\tau_j\psi_j^-,
\end{equation}
where
\begin{equation}
\alpha_j=\frac{a_j}{|a_j|},
\qquad
\tau_j=\frac{b_j}{1+|a_j|}.
\end{equation}
We will use the identities
\begin{equation}\label{4}
\overline{\tau_j}b_j=\tau_j\overline{b_j}=1-|a_j|,
\qquad
b_j-|a_j|\tau_j=\tau_j,
\qquad
\overline{b_j}-|a_j|\overline{\tau_j}=\overline{\tau_j}.\\[0.5cm]
\end{equation}

\noindent\underline{Identity \eqref{R+}:}
We have (use $a_j=\alpha_j |a_j|$)
\begin{align}
|a_j| R_j^{(+)}&\overset{\eqref{3}}{=} a_j \overline{\tau_j}\psi_j^+ -|a_j|\psi_j^- \notag\\
&\overset{\eqref{1}}{=} \overline{\tau_j} (e^{i\omega}\psi_{j+1}^+- b_j\psi_j^-)-|a_j|\psi_j^- \notag\\
&\overset{\eqref{4}}{=} \overline{\tau_j}e^{i\omega}\psi_{j+1}^+-\psi_j^-. \label{R+:1}
\end{align}
Then observe that
\begin{align}
e^{i\omega} \psi_j^- & \overset{\eqref{2}}{=} -\overline{b_{j+1}}\psi_{j+1}^++\overline{a_{j+1}}\psi_{j+1}^- \notag\\
& = \left(-\overline{b_{j+1}}+|a_{j+1}|\overline{\tau_{j+1}}\right)\psi_{j+1}^+-\overline{\alpha_{j+1}}|a_{j+1}|\left(\alpha_{j+1}\overline{\tau_{j+1}}\psi_{j+1}^+-\psi_{j+1}^-\right)\notag\\
&\overset{\eqref{4}}{=} -\overline{\tau_{j+1}}\psi_{j+1}^+ - \overline{\alpha_{j+1}}|a_{j+1}|R_{j+1}^{(+)}.
\end{align}
Substitute in \eqref{R+:1}
\begin{align}
|a_j| R_j^{(+)}& =  \overline{\tau_j}e^{i\omega}\psi_{j+1}^+ +e^{-i\omega}\overline{\tau_{j+1}} \psi_{j+1}^+ + e^{-i\omega}\overline{\alpha_{j+1}}|a_{j+1}| R_{j+1}^{(+)} \notag \\
&= e^{-i\omega}\overline{\alpha_{j+1}}|a_{j+1}| R_{j+1}^{(+)} +e^{-i\omega}\left(\overline{\tau_{j+1}} + e^{2i\omega}\overline{\tau_j}\right)\psi_{j+1}^+
\end{align}
This proves \eqref{R+}.\\

\noindent\underline{Identity \eqref{R-}:} We have (use $\overline{a_{j+1}}=\overline{\alpha_{j+1}}|a_{j+1}|$)

\begin{align}
|a_{j+1}| R_{j+1}^{(-)}& \overset{\eqref{3}}{=} |a_{j+1}| \psi_{j+1}^++\overline{a_{j+1}}\tau_{j+1}\psi_{j+1}^- \notag\\
&\overset{\eqref{2}}{=} |a_{j+1}| \psi_{j+1}^+ +\tau_{j+1}\left(e^{i\omega}\psi_j^-+\overline{b_{j+1}}\psi_{j+1}^+\right) \notag\\
&\overset{\eqref{4}}{=} \psi_{j+1}^+ + e^{i\omega}\tau_{j+1} \psi_j^- \label{R-:1}
\end{align}
Then observe that
\begin{align}
e^{i\omega} \psi_{j+1}^+ &\overset{\eqref{1}}{=} a_j \psi_j^+ +b_j \psi_j^- \notag\\
& \overset{}{=} a_j \left(\psi_j^++\overline{\alpha_j}\tau_j \psi_j^-\right)+\left(b_j-|a_j|\tau_j\right)\psi_j^- \notag\\
&\overset{\eqref{4}}{=} a_j R_j^{(-)}+\tau_j\psi_j^-
\end{align}
Substitute in \eqref{R-:1} to obtain
\begin{equation}
|a_{j+1}|R_{j+1}^{(-)}=e^{-i\omega} a_j R_j^{(-)} +e^{-i\omega}\tau_j \psi_j^{-}+e^{i\omega}\tau_{j+1}\psi_j^-
\end{equation}
Arrange the terms to obtain \eqref{R-}. 

This finishes the proof.
\end{proof}


\section{Simplicity of the Floquet spectrum}\label{sec:simple-spec} 

The results of this section provide the crucial spectral simplicity input needed in
Sections \ref{sec:perturbation} and \ref{sec:harmonic} to prove our main
results, Theorems \ref{thm:perturbative} and \ref{thm:harmonic}.

Recall that the Floquet matrix is defined by
\begin{equation}\label{def:floquet}
\widehat U_p(c,\theta)=\widehat{S}(\theta)\widehat{C}(c),
\end{equation}
where $\widehat{S}(\theta)$ is given in \eqref{def:S-hat} for $\theta\in[0,2\pi)$, and $\widehat{C}(c)$ is given in \eqref{def:C-hat} with the periodic string $c$ as in \eqref{def:c-p},
and local coins
\begin{equation}\label{eq:coin-def}
C_j=
\begin{bmatrix}
a_j & b_j\\
-\overline{b_j} & \overline{a_j}
\end{bmatrix},
\qquad (a_j, b_j)\in\mathbb S^3 \text{ for }j=1,\ldots,p.
\end{equation}
We emphasize here that $c$ does not include a shorter repeating block. For example, the constant coin case is not considered here.
\subsection{The cases with at most one reflector}\label{sec:spectrum-simple-1R}

We first treat the regimes with zero or one reflector, where a reflector means a site $j$ for which $a_j=0$. In these cases the eigenvalue equation has a particularly rigid structure. In the no-reflector case, simplicity fails only at finitely many values of $\theta$. This result is needed in the proof of Theorem \ref{thm:harmonic}. When there is exactly one reflector, the spectrum is  independent of $\theta$ and simple. This is needed in the perturbative argument in the proof of Theorem \ref{thm:perturbative}. 

The following theorem summarizes the corresponding simplicity statements.

\begin{theorem}\label{thm:simple-spec}
Consider the Floquet matrix \eqref{def:floquet}.
\begin{enumerate}[(i)]
\item \label{thm:spectrum:1} (No reflectors) Assume that $a_j\neq0$ for every $j=1,\ldots,p$. 
Then, for every 
\begin{equation}\label{eq:exceptional-theta}
\theta\in E(c):=\left\{\theta\in[0,2\pi):\ e^{-2i\theta}\neq \prod_{j=1}^p\frac{a_j}{\overline{a_j}}\right\}.
\end{equation}
the Floquet matrix $\widehat U_p(c,\theta)$ has simple spectrum.
\item \label{thm:spectrum:2} (One-reflector) If $a_1=0$ and $a_j\neq0$ for all $j\neq1$, then
\begin{equation}
\mathrm{spec}(\widehat U_p(c(0),\theta))
=
\mathrm{spec}(\widehat U_p(c(0),0))
\end{equation}
is simple, where
\begin{equation}
c(0):=((0,b_1),(a_2,b_2),\ldots,(a_p,b_p)), \text{ and }|b_1|=1.
\end{equation}
\end{enumerate}
\end{theorem}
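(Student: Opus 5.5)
For \eqref{thm:spectrum:1} I would use a transfer-matrix argument. Fix $\theta$ and an eigenvalue $\lambda$ of $\widehat U_p(c,\theta)$, and let $\psi$ be an eigenvector. As in \eqref{eq:C-psi}, the eigenvalue equation reads
\[
C_j\begin{bmatrix}\psi_j^+\\ \psi_j^-\end{bmatrix}=\lambda\begin{bmatrix}\psi_{j+1}^+\\ \psi_{j-1}^-\end{bmatrix},\qquad \psi_{p+1}^+=e^{-i\theta}\psi_1^+,\quad \psi_0^-=e^{i\theta}\psi_p^- .
\]
Since $a_j\neq0$, the second row can be solved for $\psi_j^-=(\lambda\psi_{j-1}^-+\overline{b_j}\psi_j^+)/\overline{a_j}$. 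Substituting into the first row and using $|a_j|^2+|b_j|^2=1$ gives a linear map $T_j(\lambda)$ sending $(\psi_j^+,\psi_{j-1}^-)^{T}$ to $(\psi_{j+1}^+,\psi_j^-)^{T}$:
\[
T_j(\lambda)=\frac{1}{\overline{a_j}}\begin{bmatrix}\lambda^{-1} & b_j\\ \overline{b_j} & \lambda\end{bmatrix},
\qquad \det T_j(\lambda)=\frac{1-|b_j|^2}{\overline{a_j}^{\,2}}=\frac{a_j}{\overline{a_j}} .
\]
Hence $\psi$ is uniquely determined by the pair $v=(\psi_1^+,\psi_0^-)^{T}$, and the map $\psi\mapsto v$ is injective on the eigenspace. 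Let $M(\lambda)=T_p(\lambda)\cdots T_1(\lambda)$ be the monodromy. The two boundary conventions together say exactly that
\[
M(\lambda)v=(\psi_{p+1}^+,\psi_p^-)^{T}=e^{-i\theta}v .
\]
So the eigenspace of $\lambda$ is isomorphic to $\ker\bigl(M(\lambda)-e^{-i\theta}\idty_2\bigr)$, which has dimension at most $2$.

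Dimension $2$ would force $M(\lambda)=e^{-i\theta}\idty_2$. Taking determinants then gives $e^{-2i\theta}=\prod_{j=1}^p a_j/\overline{a_j}$, which is excluded precisely when $\theta\in E(c)$. Therefore every eigenvalue has geometric multiplicity one. Since $\widehat U_p(c,\theta)$ is unitary, algebraic and geometric multiplicities coincide, and the spectrum is simple.

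For \eqref{thm:spectrum:2}, the $\theta$-independence of the spectrum is the gauge identity \eqref{U-gauge}, so it suffices to treat $\theta=0$ (any fixed $\theta$ works). For simplicity, the site-$1$ equation with $a_1=0$ decouples, exactly as in \eqref{x_k-relations}:
\[
b_1\psi_1^-=\lambda\psi_2^+,\qquad -\overline{b_1}\psi_1^+=\lambda\psi_p^- .
\]
Thus $\psi_1^-=\lambda\overline{b_1}\psi_2^+$, so the initial pair $(\psi_2^+,\psi_1^-)$ is a fixed multiple of $\psi_2^+$. The transfer matrices $T_2,\ldots,T_p$ exist because $a_j\neq0$ for $j\neq1$, and they propagate this pair to $(\psi_{p+1}^+,\psi_p^-)$, determining $\psi_2,\ldots,\psi_p$. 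The second relation above then recovers $\psi_1^+=-\lambda b_1\psi_p^-$. So the whole eigenvector is a linear function of the single scalar $\psi_2^+$. If $\psi_2^+=0$, everything vanishes. Hence each eigenspace is one-dimensional, and by unitarity the $2p$ eigenvalues are distinct.

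The main obstacle is bookkeeping rather than ideas. One has to be careful with the phase conventions at the block boundary, namely that both boundary conditions produce the same factor $e^{-i\theta}$ so that the condition is a genuine eigen-equation for $M(\lambda)$. In case (ii) one must check that the reflector equations close up consistently with the $\theta$-twisted boundary. The primitivity of $c$ is not needed for this argument.
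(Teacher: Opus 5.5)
Your proof is correct and follows the paper's argument: a transfer-matrix/monodromy reduction in (i), where $\det M(\lambda)=\prod_j a_j/\overline{a_j}$ rules out a two-dimensional eigenspace for $\theta\in E(c)$, and in (ii) the gauge identity plus the observation that every eigenvector is a linear function of a single scalar. The differences are cosmetic. In (i) you exclude dimension two by noting it would force $M(\lambda)=e^{-i\theta}\idty_2$, whereas the paper argues that the second eigenvalue $\Delta(c)e^{i\theta}$ differs from $e^{-i\theta}$. In (ii) you parametrize by $\psi_2^+$ instead of $\psi_1^-$, which also makes the paper's separate check that $\psi_1^-\neq 0$ unnecessary.
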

\begin{proof}
Let $\lambda\in\mathrm{spec}(\widehat U_p(c,\theta))$. Since $\widehat U_p(c,\theta)$
is unitary, $|\lambda|=1$ and, in particular, $\lambda\neq0$.

Let
\begin{equation}
\psi=\sum_{j=1}^p e_j\otimes \psi_j, \qquad
\psi_j=
\begin{bmatrix}
\psi_j^+\\
\psi_j^-
\end{bmatrix}.
\end{equation}
Let $0\neq \psi\in\ker(\lambda\idty-\widehat U_p(c,\theta))$. The equation
$\widehat U_p(c,\theta)\psi=\lambda\psi$ is equivalent to
\begin{equation}\label{eq:direct-eigen-system-vector}
C_j\psi_j=\lambda
\begin{bmatrix}
\psi_{j+1}^+\\
\psi_{j-1}^-
\end{bmatrix}
\qquad j=1,\ldots,p,
\end{equation}
where the boundary conditions are
\begin{equation}\label{eq:direct-component-boundary}
\psi_{p+1}^+=e^{-i\theta}\psi_1^+,
\qquad
\psi_0^-=e^{i\theta}\psi_p^-.
\end{equation}
Since
\begin{equation}
C_j=
\begin{bmatrix}
a_j & b_j\\
-\overline{b_j} & \overline{a_j}
\end{bmatrix},
\end{equation}
Equations \eqref{eq:direct-eigen-system-vector} read as the system
\begin{equation}\label{eq:direct-eigen-system}
\begin{aligned}
\lambda\psi_{j+1}^+
&=
a_j\psi_j^+ + b_j\psi_j^-,
\\
\lambda\psi_{j-1}^-
&=
-\overline{b_j}\psi_j^+ + \overline{a_j}\psi_j^-,
\end{aligned}
\qquad j=1,\ldots,p.
\end{equation}
For every $j$ such that $a_j\neq0$, solve the second equation in
\eqref{eq:direct-eigen-system} for $\psi_j^-$. This gives
\begin{equation}
\psi_j^-
=
\frac{1}{\overline{a_j}}
\left(\lambda\psi_{j-1}^-+\overline{b_j}\psi_j^+\right).
\end{equation}
Substituting this into the first equation in \eqref{eq:direct-eigen-system} and using
$|a_j|^2+|b_j|^2=1$, we obtain
\begin{equation}
\psi_{j+1}^+
=
\frac{1}{\overline{a_j}}
\left(\lambda^{-1}\psi_j^+ + b_j\psi_{j-1}^-\right).
\end{equation}
Thus \eqref{eq:direct-eigen-system} may be written in transfer matrix form as
\begin{equation}\label{eq:direct-transfer-relation}
\begin{bmatrix}
\psi_{j+1}^+\\
\psi_j^-
\end{bmatrix}
=
T_j(\lambda)
\begin{bmatrix}
\psi_j^+\\
\psi_{j-1}^-
\end{bmatrix},
\qquad j=1,\ldots,p,
\end{equation}
where
\begin{equation}\label{eq:direct-transfer-matrix}
T_j(\lambda)
=
\frac{1}{\overline{a_j}}
\begin{bmatrix}
\lambda^{-1} & b_j\\
\overline{b_j} & \lambda
\end{bmatrix},
\qquad
\det T_j(\lambda)=\frac{a_j}{\overline{a_j}}.
\end{equation}
Note that if $a_j=0$, then the corresponding transfer matrix $T_j(\lambda)$ is not defined.\\

\noindent \underline{Case \eqref{thm:spectrum:1}:}
Here, $a_j\neq0$ for all $j$, hence all transfer matrices $T_j(\lambda)$ for
$j=1,\ldots,p$ are defined as in \eqref{eq:direct-transfer-matrix}. Define the
\emph{monodromy} matrix
\begin{equation}\label{def:mono}
M(\lambda)=T_p(\lambda)T_{p-1}(\lambda)\cdots T_1(\lambda).
\end{equation}
Iterating \eqref{eq:direct-transfer-relation} gives
\begin{equation}\label{eq:direct-monodromy-action}
M(\lambda)
\begin{bmatrix}
\psi_1^+\\
\psi_0^-
\end{bmatrix}
=
\begin{bmatrix}
\psi_{p+1}^+\\
\psi_p^-
\end{bmatrix}
\overset{(\ref{eq:direct-component-boundary})}{=}
e^{-i\theta}
\begin{bmatrix}
\psi_1^+\\
\psi_0^-
\end{bmatrix}.
\end{equation}
Now observe that the recursion \eqref{eq:direct-transfer-relation} and the fact that
$\psi\neq0$ imply that
\begin{equation}
\begin{bmatrix}
\psi_1^+\\
\psi_0^-
\end{bmatrix}
\neq0.
\end{equation}
Thus $e^{-i\theta}$ is an eigenvalue of $M(\lambda)$.

Moreover, \eqref{eq:direct-transfer-relation} implies that the map
\begin{equation}\label{eq:direct-isomorphism-map}
\psi\longmapsto
\begin{bmatrix}
\psi_1^+\\
\psi_0^-
\end{bmatrix}
\end{equation}
is an isomorphism from $\ker(\lambda\idty-\widehat U_p(c,\theta))$ onto
$\ker(M(\lambda)-e^{-i\theta}\idty_2)$. Thus,
\begin{equation}\label{eigen-U-M}
\lambda\in\mathrm{spec}(\widehat U_p(c,\theta)) \ \ \iff e^{-i\theta}\in\mathrm{spec}(M(\lambda)).
\end{equation}
Consequently,
\begin{equation}\label{eq:direct-nullity-monodromy}
\dim\ker(\lambda\idty-\widehat U_p(c,\theta))
=
\dim\ker(M(\lambda)-e^{-i\theta}\idty_2).
\end{equation}

By \eqref{def:mono} and \eqref{eq:direct-transfer-matrix}, we have
\begin{equation}
\det M(\lambda)=\prod_{j=1}^p\frac{a_j}{\overline{a_j}}=:\Delta(c).
\end{equation}
Since $\lambda_1=e^{-i\theta}$ is one eigenvalue of $M(\lambda)$, the other eigenvalue is
$\lambda_2=\Delta(c)e^{i\theta}$. We have
\begin{equation}
\lambda_1\neq \lambda_2 \iff e^{-i\theta}\neq \Delta(c)e^{i\theta} \iff e^{-2i\theta}\neq\Delta(c).
\end{equation}
That is, under the condition $e^{-2i\theta}\neq\Delta(c)$, $\lambda_1=e^{-i\theta}$ is a simple eigenvalue of the $2\times2$ matrix
$M(\lambda)$. Hence
\begin{equation}
\dim\ker(M(\lambda)-e^{-i\theta}\idty_2)=1,
\end{equation}
and we conclude that
\begin{equation}\label{end}
\dim\ker(\lambda\idty-\widehat U_p(c,\theta))=1.
\end{equation}
Since $\widehat U_p(c,\theta)$ is unitary, \eqref{end} shows that every eigenvalue of
$\widehat U_p(c,\theta)$ is algebraically simple whenever
$\theta\in[0,2\pi)$ satisfies \eqref{eq:exceptional-theta}.

\begin{remark}\label{rem:spectrum-real}
In the real case considered previously, one has $\Delta(c)=1$, and the condition
\begin{equation}
e^{-2i\theta}=\Delta(c)
\end{equation}
reduces to $\theta=0$ or $\theta=\pi$.
\end{remark}

\begin{remark}\label{spec:symmetry}
We note here that if $b_j\in\mathbb R$ for all $j$, then $\mathrm{spec}(\widehat U_p(c,\theta))$ is invariant under complex conjugation. This follows from the following argument.

Consider the reflection $R:=\begin{bmatrix}0 & 1\\ 1 & 0\end{bmatrix}$. Since $b_j\in\mathbb R$, then the transfer matrices \eqref{eq:direct-transfer-matrix} satisfy
\begin{equation}
T_j(\lambda^{-1})= R T_j(\lambda) R\qquad \Longrightarrow \qquad M(\lambda^{-1})= T_p(\lambda^{-1})\ldots T_1(\lambda^{-1})= R M(\lambda)R.
\end{equation}
Since $R^2=\idty_2$, then the matrices $M(\lambda)$ and $M(\lambda^{-1})$ are similar. Hence $\mathrm{spec}(M(\lambda))=\mathrm{spec}(M(\lambda^{-1}))$. Then
\begin{equation}
\lambda\in \mathrm{spec}(\widehat U_p(c,\theta)) \overset{\eqref{eigen-U-M}}{\iff} e^{-i\theta}\in\mathrm{spec}(M(\lambda))\iff e^{-i\theta}\in\mathrm{spec}(M(\lambda^{-1})) \overset{\eqref{eigen-U-M}}{\iff}  \lambda^{-1}\in \mathrm{spec}(\widehat U_p(c,\theta)).
\end{equation}
Since $\widehat U_p(c,\theta)$ is unitary, then $e^{i\omega(c,\theta)}\in \mathrm{spec}(\widehat U_p(c,\theta))$ if and only if $e^{-i\omega(c,\theta)}\in \mathrm{spec}(\widehat U_p(c,\theta))$.\\
\end{remark}

\noindent \underline{Case \eqref{thm:spectrum:2}:}
Assume that $a_1=0$ and $a_j\neq0$ for $j=2,\ldots,p$. Since
$|a_1|^2+|b_1|^2=1$, we have $|b_1|=1$. 
That
\begin{equation}
\mathrm{spec}(\widehat U_p(c(0),\theta))=\mathrm{spec}(\widehat U_p(c(0),0)).
\end{equation}
Follows from the gauge invariance \eqref{U-gauge}. It remains to prove that $\mathrm{spec}(\widehat U_p(c(0),0))$ is simple.

The relations \eqref{eq:direct-eigen-system} read for $j=1$ as
\begin{equation}\label{eq:direct-reflector-equations}
\lambda\psi_2^+=b_1\psi_1^-,\qquad \lambda\psi_0^-=-\overline{b_1}\psi_1^+.
\end{equation}
Using the boundary conditions \eqref{eq:direct-component-boundary}, the second equation in
\eqref{eq:direct-reflector-equations} gives
\begin{equation}
\psi_p^-= -\lambda^{-1}\overline{b_1}e^{-i\theta}\psi_1^+,
\end{equation}
and let's recall the first boundary condition $\psi_{p+1}^+=e^{-i\theta}\psi_1^+$. Hence
\begin{equation}\label{eq:direct-right-boundary-reflector}
\begin{bmatrix}\psi_{p+1}^+\\ \psi_p^- \end{bmatrix}
=
e^{-i\theta}\psi_1^+
\begin{bmatrix}1\\ -\lambda^{-1}\overline{b_1}\end{bmatrix}.
\end{equation}
On the other hand, the first equation in \eqref{eq:direct-reflector-equations} gives
\begin{equation}\label{eq:direct-left-boundary-reflector}
\begin{bmatrix}
\psi_2^+\\
\psi_1^-
\end{bmatrix}
=
\psi_1^-
\begin{bmatrix}
\lambda^{-1}b_1\\
1
\end{bmatrix}.
\end{equation}
For $j=2,\ldots,p$, all transfer matrices are defined, and therefore
\begin{equation}\label{eq:direct-reflector-monodromy}
T_p(\lambda)T_{p-1}(\lambda)\cdots T_2(\lambda)
\begin{bmatrix}
\psi_2^+\\
\psi_1^-
\end{bmatrix}
=
\begin{bmatrix}
\psi_{p+1}^+\\
\psi_p^-
\end{bmatrix}.
\end{equation}
 Let $0\neq\psi\in\ker(\lambda\idty-\widehat U_p(c(0),\theta))$. We first claim that $\psi_1^-\neq0$. Indeed, if $\psi_1^-=0$, then
\eqref{eq:direct-left-boundary-reflector} gives
\begin{equation}
\begin{bmatrix} \psi_2^+\\ \psi_1^- \end{bmatrix} =0.
\end{equation}
Since the matrices $T_j(\lambda)$ are invertible for $j=2,\ldots,p$,
\eqref{eq:direct-reflector-monodromy} gives
\begin{equation}
\begin{bmatrix} \psi_{p+1}^+\\ \psi_p^- \end{bmatrix} =0.
\end{equation}
Then \eqref{eq:direct-right-boundary-reflector} gives $\psi_1^+=0$, and the transfer recursion yields
$\psi=0$, a contradiction. Thus $\psi_1^-\neq0$.

Equations \eqref{eq:direct-left-boundary-reflector} and \eqref{eq:direct-reflector-monodromy} show that every vector in
$\ker(\lambda\idty-\widehat U_p(c(0),\theta))$ is determined uniquely by the single scalar
$\psi_1^-$. Therefore
\begin{equation}
\dim\ker(\lambda\idty-\widehat U_p(c(0),\theta))\leq1.
\end{equation}
Since $\lambda$ is an eigenvalue, the kernel is nontrivial, and hence
\begin{equation}
\dim\ker(\lambda\idty-\widehat U_p(c(0),\theta))
=
1.
\end{equation}
This proves that the spectrum of $\widehat U_p(c(0),0)$, and hence of
$\widehat U_p(c(0),\theta)$ for every $\theta$, is simple.
\end{proof}

\subsection{The multiple-reflector case}\label{sec:simple-spec-case3}

To simplify the presentation, we restrict our analysis in this section to real coins. Thus
\begin{equation}
c_j=(a_j,b_j),\ a_j\in[-1,1], \text{ and }b_j=\sqrt{1-a_j^2}.
\end{equation}
At $\theta=0$, the Floquet matrix acts on the standard basis by (see \eqref{def:Up-hat})
\begin{equation}\label{eq:real-action}
\widehat U_p(c,0)\delta_j^+=a_j\delta_{j+1}^+-b_j\delta_{j-1}^-,\qquad\widehat U_p(c,0)\delta_j^-=b_j\delta_{j+1}^+ + a_j\delta_{j-1}^-,
\end{equation}
where the indices are understood periodically.

Let $T\subsetneq \{1,\ldots,p\}$ be the set of perfect reflectors. In the
multiple-reflector case, $|T|\ge 2$. After a cyclic relabeling of the
period, we may assume that $1\in T$, and we write
\begin{equation}
T=\{j_1=1,j_2,\ldots,j_n\}, \qquad 1=j_1<j_2<\cdots<j_n\le p.
\end{equation}
For simplicity, we use the real reflector convention $(c_{\perp T})_\ell=(0,1)$, for $\ell\in T$.
Thus
\begin{equation}
(c_{\perp T})_\ell=
\begin{cases}
(0,1), & \ell\in T,\\
(a_\ell,b_\ell),\ a_\ell\neq 0,& \ell\notin T .
\end{cases}
\end{equation}
We write the elements of $T$ in cyclic order and set
\begin{equation}
j_{n+1}:=j_1+p=p+1 .
\end{equation}
For $k=1,\ldots,n$, define
\begin{equation}
I_k:=\{j_k,j_k+1,\ldots,j_{k+1}-1\},\qquad L_k:=|I_k|=j_{k+1}-j_k .
\end{equation}
Then
\begin{equation}\label{dec:T}
\{1,\ldots,p\}=\bigcup_{k=1}^n I_k .
\end{equation}
The string on the $k$-th interval is
$c^{(k)}:=\left.(c_{\perp T})\right|_{I_k}$.
Equivalently,
\begin{equation}\label{eq:ck-real}
(c^{(k)})_1=(0,1),\qquad (c^{(k)})_m=(a_{j_k+m-1},b_{j_k+m-1}),\qquad m=2,\ldots,L_k . \footnote{When $L_k=1$, this simply means $c^{(k)}=((0,1))$.}
\end{equation}
Hence each
$c^{(k)}$ is a one-reflector string.

We also introduce the subspace
\begin{equation}\label{def:Hk}
\mathcal H_k
:=
\operatorname{span}
\left\{
\{\delta_{j_k}^-,\delta_{j_{k+1}}^+\}
\cup
\{\delta_j^+,\delta_j^-: j_k<j<j_{k+1}\}
\right\},
\end{equation}
with the periodic convention $\delta_{j_{n+1}}^+=\delta_{p+1}^+=\delta_1^+ $.
Then $\dim \mathcal H_k=2L_k$, and \eqref{dec:T} induces the orthogonal
decomposition
\begin{equation}\label{dec:Hk}
\mathbb C^{2p}
=
\bigoplus_{k=1}^n \mathcal H_k .
\end{equation}
Notice that the two spin components at a reflecting site are assigned to
adjacent blocks: $\delta_{j_k}^-$ belongs to $\mathcal H_k$, while
$\delta_{j_k}^+$ belongs to $\mathcal H_{k-1}$, with cyclic indexing.

\begin{theorem}[Multiple reflectors]\label{thm:spectrum:3}
In the multiple reflectors setting above, we have
\begin{equation}\label{eq:block-dec-multiple}
\widehat U_p(c_{\perp T},0)
\equiv
\bigoplus_{k=1}^n \widehat U_{L_k}(c^{(k)},0).
\end{equation}
Here each $\widehat U_{L_k}(c^{(k)},0)$ is a one-reflector Floquet matrix
as in Theorem \ref{thm:simple-spec}(\ref{thm:spectrum:2}). In particular, each block
$\widehat U_{L_k}(c^{(k)},0)$ has simple spectrum.

Consequently, the full matrix $\widehat U_p(c_{\perp T},0)$ has simple
spectrum if and only if
\begin{equation}\label{eq:overlap}
\operatorname{spec}\bigl(\widehat U_{L_k}(c^{(k)},0)\bigr)
\cap
\operatorname{spec}\bigl(\widehat U_{L_\ell}(c^{(\ell)},0)\bigr)
=
\varnothing
\qquad
\text{for all } k\neq \ell .
\end{equation}
\end{theorem}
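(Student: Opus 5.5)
We will verify directly from the action \eqref{eq:real-action} that each subspace $\mathcal H_k$ in \eqref{def:Hk} is invariant under $\widehat U_p(c_{\perp T},0)$. Combined with the orthogonal decomposition \eqref{dec:Hk}, this gives the block structure. The only delicate basis vectors are the ones attached to reflecting sites. For $j_k\in T$ we have $a_{j_k}=0$ and $b_{j_k}=1$, so \eqref{eq:real-action} gives
\begin{equation}
\widehat U_p(c_{\perp T},0)\delta_{j_k}^-=\delta_{j_k+1}^+,\qquad \widehat U_p(c_{\perp T},0)\delta_{j_k}^+=-\delta_{j_k-1}^- .
\end{equation}
Thus $\delta_{j_k}^-\in\mathcal H_k$ is sent to $\delta_{j_k+1}^+$. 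If $L_k\geq 2$ this vector lies in $\mathcal H_k$. If $L_k=1$, then $j_k+1=j_{k+1}$, and $\delta_{j_{k+1}}^+\in\mathcal H_k$ by definition. Likewise $\delta_{j_{k+1}}^+\in\mathcal H_k$ is sent to $-\delta_{j_{k+1}-1}^-$, which again lies in $\mathcal H_k$. For an interior site $j_k<j<j_{k+1}$, the images $\delta_{j\pm1}^{\pm}$ are either interior vectors or the boundary vectors $\delta_{j_{k+1}}^+$ and $\delta_{j_k}^-$, both of which belong to $\mathcal H_k$. Hence each $\mathcal H_k$ is invariant, and by unitarity so is its orthogonal complement. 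This yields a direct sum decomposition.

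Next we identify each restriction with $\widehat U_{L_k}(c^{(k)},0)$. We will use the unitary map $\mathcal H_k\to\mathbb C^{2L_k}$ defined by
\begin{equation}
\delta_{j_k+m-1}^\pm\mapsto |e_m,\pm\rangle,\qquad m=1,\ldots,L_k,
\end{equation}
with the two exceptions dictated by the reflector. The vector $\delta_{j_{k+1}}^+$ plays the role of $|e_1,+\rangle$, the missing $+$ component at the local reflector site. The vector $\delta_{j_k}^-$ plays the role of $|e_1,-\rangle$. With this identification we check that the action \eqref{eq:real-action} restricted to $\mathcal H_k$ coincides with the action of $\widehat U_{L_k}(c^{(k)},0)$. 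That matrix sends $|e_1,-\rangle\mapsto|e_2,+\rangle$ and $|e_1,+\rangle\mapsto -|e_{0},-\rangle=-|e_{L_k},-\rangle$ at $\theta=0$, and this matches the two displayed relations above. For interior sites the action is site-by-site identical, including the wrap-around transitions $|e_{L_k},+\rangle\mapsto|e_{L_k+1},+\rangle=|e_1,+\rangle$, which correspond to $\delta_{j_{k+1}-1}^\pm\mapsto\delta_{j_{k+1}}^+$. The main bookkeeping obstacle is exactly this matching at the reflector, together with the degenerate case $L_k=1$. In that case $\widehat U_1$ must be read as the $2\times2$ matrix given by the same convention, and we will check it separately by hand.

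Simplicity of each block then follows from Theorem \ref{thm:simple-spec}(\ref{thm:spectrum:2}), since $c^{(k)}$ has a reflector only at its first site. For $L_k=1$ the block is $\begin{bmatrix}0&1\\-1&0\end{bmatrix}$ up to the ordering of the basis, with simple eigenvalues $\pm i$. Finally, the spectrum of a direct sum is the union of the spectra of its summands, counted with multiplicity, and each summand has simple spectrum. So $\widehat U_p(c_{\perp T},0)$ has simple spectrum if and only if no eigenvalue is shared by two different blocks. This is exactly condition \eqref{eq:overlap}.
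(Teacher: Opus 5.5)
Your proposal is correct and follows essentially the same route as the paper. Your identification $\mathcal H_k\to\mathbb C^{2L_k}$ is exactly the inverse of the paper's unitary $W_k$, and your basis-by-basis check at the reflector and interior sites is the paper's intertwining relation $\widehat U_p(c_{\perp T},0)W_k=W_kV_k$, which makes your separate invariance step redundant but harmless; your explicit handling of $L_k=1$ (the block $\begin{bmatrix}0&1\\-1&0\end{bmatrix}$ with eigenvalues $\pm i$) is a small welcome addition, since the paper covers that case only implicitly through the convention $\mathfrak e_2^+=\mathfrak e_1^+$.
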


\begin{remark}
Although each one-reflector block has simple spectrum, spectra of
different blocks need not be disjoint. Hence the full decoupled Floquet
matrix has simple spectrum only under the additional non-overlap condition
\eqref{eq:overlap}. This condition is generic, but it is not automatic.
\end{remark}

\begin{proof}
Let $\{\mathfrak e_m^\pm\}_{m=1}^{L_k}$ denote the standard basis of
$\mathbb C^{2L_k}$, with the cyclic convention
\begin{equation}
\mathfrak e_{L_k+1}^+=\mathfrak e_1^+,
\qquad
\mathfrak e_0^-=\mathfrak e_{L_k}^- .
\end{equation}
For the proof, set
\begin{equation}
V_k:=\widehat U_{L_k}(c^{(k)},0).
\end{equation}
Define a unitary map $W_k:\mathbb C^{2L_k}\to\mathcal H_k$ by
\begin{equation}\label{eq:Wk}
W_k\mathfrak e_1^+ := \delta_{j_{k+1}}^+,
\qquad
W_k\mathfrak e_1^- := \delta_{j_k}^-,
\end{equation}
and, for $m=2,\ldots,L_k$,
\begin{equation}\label{eq:W-action}
W_k\mathfrak e_m^\pm := \delta_{j_k+m-1}^\pm .
\end{equation}
Thus the basis of $\mathbb C^{2L_k}$ is mapped directly onto the natural
basis of the block $\mathcal H_k$.

We now verify the intertwining relation
\begin{equation}\label{eq:intertwining}
\widehat U_p(c_{\perp T},0)W_k=W_k V_k .
\end{equation}
Since $(c^{(k)})_1=(0,1)$, the one-reflector matrix $V_k$ satisfies, by \eqref{eq:real-action}
\begin{equation}\label{V-action-1}
V_k\mathfrak e_1^+=-\mathfrak e_{L_k}^-,\qquad V_k\mathfrak e_1^-=\mathfrak e_2^+,
\end{equation}
where $\mathfrak e_2^+=\mathfrak e_1^+$ if $L_k=1$. On the other hand,
using \eqref{eq:real-action} and the fact that both $j_k$ and
$j_{k+1}$ are reflectors, we get
\begin{equation}
\widehat U_p(c_{\perp T},0)W_k\mathfrak e_1^+ \overset{\eqref{eq:Wk}}{=}\widehat U_p(c_{\perp T},0)\delta_{j_{k+1}}^+\overset{\eqref{eq:real-action}}{=}-\delta_{j_{k+1}-1}^-\overset{\eqref{eq:W-action}}{=}W_k(-\mathfrak e_{L_k}^-)\overset{\eqref{V-action-1}}{=} W_k V_k \mathfrak e_1^+,\\
\end{equation}
Similarly, 
\begin{equation}
\widehat U_p(c_{\perp T},0)W_k\mathfrak e_1^- =\widehat U_p(c_{\perp T},0)\delta_{j_k}^-=\delta_{j_k+1}^+=W_k\mathfrak e_2^+ =W_k V_k \mathfrak e_1^-.
\end{equation}
Thus \eqref{eq:intertwining} holds on the reflecting site of the
one-reflector block.

It remains to check the non-reflecting sites when $L_k>1$. Let $2\le m\le L_k$ and put
$j=j_k+m-1$.
Then $j\notin T$, and by \eqref{eq:ck-real}, $(c^{(k)})_m=(a_j,b_j)$.
Therefore
\begin{equation}
\widehat U_p(c_{\perp T},0)W_k\mathfrak e_m^+
\overset{\eqref{eq:W-action}}{=} \widehat U_p(c_{\perp T},0)\delta_j^+ 
 \overset{\eqref{eq:real-action}}{=}
a_j\delta_{j+1}^+-b_j\delta_{j-1}^-  
 \overset{\eqref{eq:W-action}}{=}
W_k\bigl(a_j\mathfrak e_{m+1}^+
        -b_j\mathfrak e_{m-1}^-\bigr)  
\overset{\eqref{eq:real-action}}{=}
W_kV_k\mathfrak e_m^+ ,
\end{equation}
and similarly
\begin{equation}
\widehat U_p(c_{\perp T},0)W_k\mathfrak e_m^-=\widehat U_p(c_{\perp T},0)\delta_j^- =b_j\delta_{j+1}^+ + a_j\delta_{j-1}^-  =W_k\bigl(b_j\mathfrak e_{m+1}^++a_j\mathfrak e_{m-1}^-\bigr)  =W_kV_k\mathfrak e_m^- .
\end{equation}
Hence \eqref{eq:intertwining} holds on the whole basis.

Now define
\begin{equation}
W:=\bigoplus_{k=1}^n W_k:
\bigoplus_{k=1}^n \mathbb C^{2L_k}
\longrightarrow
\mathbb C^{2p}.
\end{equation}
Since the ranges of the maps $W_k$ are precisely the mutually orthogonal
subspaces $\mathcal H_k$, the map $W$ is unitary. From
\eqref{eq:intertwining}, we obtain
\begin{equation}
W^*\widehat U_p(c_{\perp T},0)W
=
\bigoplus_{k=1}^n V_k
=
\bigoplus_{k=1}^n \widehat U_{L_k}(c^{(k)},0).
\end{equation}
This proves the block decomposition \eqref{eq:block-dec-multiple}.

By Theorem \ref{thm:simple-spec}, each one-reflector block
$\widehat U_{L_k}(c^{(k)},0)$ has simple spectrum. Therefore
$\widehat U_p(c_{\perp T},0)$ has simple spectrum if and only if no
eigenvalue belongs to two distinct one-reflector blocks. Equivalently,
\begin{equation}
\operatorname{spec}\bigl(\widehat U_{L_k}(c^{(k)},0)\bigr)
\cap
\operatorname{spec}\bigl(\widehat U_{L_\ell}(c^{(\ell)},0)\bigr)
=
\varnothing
\qquad
\text{for all } k\neq \ell .
\end{equation}
This is \eqref{eq:overlap}, and the proof is complete.
\end{proof}

\appendix

\section{The two-periodic velocity formula}\label{sec:warm-up}
We now prove Lemma \ref{lem:warm-up}. Let
\begin{equation}
c=((a_1,b_1),(a_2,b_2)),\quad (a_j,b_j)\in\mathbb S^3, \text{ and }
C_j=
\begin{bmatrix}
a_j & b_j\\
-\overline{b_j} & \overline{a_j}
\end{bmatrix} \text{ for }j=1,2.
\end{equation}
From \eqref{def:U2}, we have
\begin{equation}
\left(\widehat U_2(c,\theta)\right)^2=
\begin{bmatrix}
M(c,\theta) & 0_{2\times 2}\\
0_{2\times 2} & \widetilde M(c,\theta)
\end{bmatrix},
\end{equation}
where
\begin{equation}\label{eq:M-p2-complex}
M(c,\theta)=
\begin{bmatrix}
e^{i\theta}a_1a_2-b_2\overline{b_1}& e^{i\theta}b_1a_2+\overline{a_1}b_2 \\
-a_1\overline{b_2}-e^{-i\theta}\overline{b_1}\,\overline{a_2} & -b_1\overline{b_2}+e^{-i\theta}\overline{a_1}\,\overline{a_2}
\end{bmatrix}.
\end{equation}
Moreover, $\widetilde M(c,\theta)$ is similar to $M(c,\theta)$. Hence
$(\widehat U_2(c,\theta))^2$ has the same two eigenvalues as $M(c,\theta)$,
each with multiplicity two.

Since $M(c,\theta)$ is unitary and $\det M(c,\theta)=1$,
its two eigenvalues are of the form
\begin{equation}
\lambda_\pm(c,\theta)=e^{\pm i\omega_2(c,\theta)}.
\end{equation}
By \cite[Lemma 3.8]{ARCSW25},
\begin{equation}\label{1234}
v(U_2(c))=\frac{1}{2}v\left((U_2(c))^2\right)\overset{\eqref{def:v-dispersion}}{=}v_{\mathrm{blk}}^{(2)}\left((U_2(c))^2\right)=\sup_{\theta\in[0,2\pi)}|\partial_\theta\omega_2(c,\theta)|.
\end{equation}

Using
\begin{equation}
\mathrm{Tr}\,M(c,\theta)=\lambda_+(c,\theta)+\lambda_-(c,\theta)=2\cos(\omega_2(c,\theta)),
\end{equation}
we obtain
\begin{equation}\label{eq:cos-omega-complex}
\cos(\omega_2(c,\theta))=\Re(e^{i\theta}a_1a_2)-\Re(b_1\overline{b_2}).
\end{equation}
We note that the case $|a_1a_2|=0$ yields zero velocity, as this corresponds to the existence of a (bi-infinite) sequence of perfect reflectors, see \cite[Lemma 2.2]{ARJS26}.  This can be seen also from $M(c,\theta)$ in \eqref{eq:M-p2-complex} above. If $a_2=0$, then observe that $M(c,\theta)=M(c,0)$, and if $a_1=0$ then it is direct to see that
\begin{equation}
M(c,\theta)=\begin{bmatrix} e^{i\theta} & 0\\ 0& 1\end{bmatrix} M(c,0)\begin{bmatrix} e^{-i\theta} & 0\\ 0& 1\end{bmatrix}.
\end{equation}
That is $\mathrm{spec}(M(c,\theta))=\mathrm{spec}(M(c,0))$. Hence, the eigenvalues of $M(c,\theta)$ are independent of $\theta$ and the velocity is zero.

Moreover, the extreme case of $|a_1|=|a_2|=1$ corresponds to the trivial perfect transmitting walk with velocity 1. This can be seen also from $M(c,\theta)$ in \eqref{eq:M-p2-complex} above, where in this case we have $b_1=b_2=0$ and hence
\begin{equation}
M(c,\theta)=\begin{bmatrix} e^{i\theta} a_1 a_2 &0 \\ 0& \overline{e^{i\theta}  a_1 a_2}\end{bmatrix}
\end{equation}
which gives by \eqref{1234}, $v(U_2(c))=1$. Thus, we assume in the following that
\begin{equation}
0<|a_1 a_2|<1.
\end{equation} 

Writing $a_1a_2=|a_1a_2|e^{i\phi}$,  \eqref{eq:cos-omega-complex} becomes
\begin{equation}
\cos(\omega_2(c,\theta))=|a_1a_2|\cos(\theta+\phi)-\Re(b_1\overline{b_2}).
\end{equation}
Therefore, away from the points where the denominator vanishes,
\begin{equation}\label{eq:omega-derivative-complex}
|\partial_\theta\omega_2(c,\theta)|^2=\frac{|a_1a_2|^2\sin^2(\theta+\phi)}{1-\left(|a_1a_2|\cos(\theta+\phi)-\Re(b_1\overline{b_2})\right)^2}.
\end{equation}

Put $x:=\cos(\theta+\phi)$, and define
\begin{equation}\label{def:F}
F(x):=\frac{|a_1a_2|^2(1-x^2)}{1-\left(|a_1a_2|x-\Re(b_1\overline{b_2})\right)^2},\qquad x\in D_F,
\end{equation}
where
\begin{equation}\label{def:Df}
D_F:=\{x\in [-1,1];\ |a_1a_2|x-\Re(b_1\overline{b_2})\neq \pm1\}.
\end{equation}
Thus,
\begin{equation}\label{max-F}
\left(v(U_2(c))\right)^2=\sup_{x \in D_F}F(x)=:\mu.
\end{equation}

\noindent \underline{Case 1: $b_1\overline{b_2}\notin\mathbb R$:} In this case, we will show that 
\begin{equation}\label{p2-case1}
\left(v(U_2(c))\right)^2=\mu< |a_1|^2
\end{equation}
and since $F$ is symmetric in $|a_1|$ and $|a_2|$ then \eqref{p2-case1} shows that $v(U_2(c))<\min\{|a_1|,|a_2|\}$, as desired.

First observe that if $b_1\overline{b_2}\notin\mathbb R$ then $F$ is defined for all $x\in[-1,1]$. This follows because $|\Re(b_1\overline{b_2})|<|b_1 b_2|$, then by Cauchy-Schwarz (and recall that $|a_j|^2+|b_j|^2=1$)
\begin{equation}
\left||a_1a_2|x-\Re(b_1\overline{b_2})\right| < |a_1a_2|+|b_1b_2|\leq \sqrt{|a_1|^2+|b_1|^2}\sqrt{|a_2|^2+|b_2|^2}=1,
\end{equation}
Proving \eqref{p2-case1} is equivalent to proving that
\begin{equation}
|a_1a_2|^2(1-x^2)<\left(1-\left(|a_1a_2|x-\Re(b_1\overline{b_2})\right)^2\right)|a_1|^2
\end{equation}
Divide by $|a_1|^2>0$ to see that we need to prove that
\begin{equation}\label{p2-case1-q>0}
q(x):=1-\left(|a_1a_2|x-\Re(b_1\overline{b_2})\right)^2 - |a_2|^2(1-x^2)\ >0 \text{ for all }x\in[-1,1].
\end{equation}
Write $q(x)$ as a quadratic polynomial in $x$,
\begin{equation}
q(x)= |a_2|^2(1-|a_1|^2)\ x^2 + 2 |a_1 a_2|\Re(b_1\overline{b_2})\ x+ 1-\Re(b_1\overline{b_2})^2-|a_2|^2
\end{equation}
Observe first that 
\begin{equation}
q(0)=1-\Re(b_1\overline{b_2})^2-|a_2|^2 > 1-(1-|a_1|^2)(1-|a_2|^2)-|a_2|^2=|a_1|^2 (1-|a_2|^2)>0.
\end{equation}
Thus to prove that $q(x)>0$ for all $x$, it is enough to prove that its discriminant is negative. The latter is given as (after dividing by $4|a_2|^2$)
\begin{equation}
|a_1|^2+|a_2|^2-|a_1 a_2|^2+\Re(b_1\overline{b_2})^2-1
\end{equation}
which is negative because $\Re(b_1\overline{b_2})^2 < (1-|a_1|^2)(1-|a_2|^2)$, and note that this is a strict inequality because  $b_1\overline{b_2}\notin\mathbb R$.

This proves \eqref{p2-case1} and hence, we proved that $v(U_2(c))<\min\{|a_1|,|a_2|\}$.\\

\noindent \underline{Case 2: $b_1\overline{b_2}\in\mathbb R$:} 
Note that if $|a_1|\neq |a_2|$ then the denominator of $F$ in \eqref{def:F} is not zero, as 
\begin{equation}
\left||a_1a_2|x-\Re(b_1\overline{b_2})\right|\leq |a_1a_2|+|b_1b_2| \text{ for all }x\in[-1,1]
\end{equation}
Then, by Cauchy–Schwarz
\begin{equation} 
|a_1a_2|+|b_1b_2| \leq 1,
\end{equation}
and the last inequality is strict if and  if $|a_1|\neq |a_2|$. Hence, if $|a_1|\neq |a_2|$ then 
\begin{equation}\label{D:notzero}
\left||a_1a_2|x-\Re(b_1\overline{b_2})\right|<1 \text{ for all }x\in[-1,1]
\end{equation}

If $|a_1|=|a_2|$, then $|b_1|=|b_2|$. Since $b_1\overline{b_2}\in\mathbb R$, the relative phase between $b_1$ and $b_2$ is either $0$ or $\pi$. Hence, in this case, either $b_1=b_2$ or $b_1=-b_2$. Therefore, in Case 2, we consider the three possibilities
\begin{equation}
(1)\ |a_1|=|a_2|,\ b_1=b_2\qquad
(2)\ |a_1|=|a_2|,\ b_1=-b_2;\quad \text{ and }\quad
(3)\ |a_1|\neq |a_2|.
\end{equation}
\begin{enumerate}[(1)]

\item \underline{$|a_1|=|a_2|$ and $b_1=b_2$}\\
In this case $F$ simplifies to
\begin{equation}
F(x)=\frac{|a_1|^2(1-x)}{2-|a_1|^2(1+x)},\quad x\in(-1,1]
\end{equation}
Note that $F$ is decreasing 
\begin{equation}
\lim_{x\to -1^+}F(x)=|a_1|^2 \quad \text{ and } F(1)=0.
\end{equation}
Thus 
\begin{equation}
v(U_2(c))=(\sup F)^{1/2}=|a_1|.
\end{equation}

\item \underline{$|a_1|=|a_2|$ and $b_1=-b_2$}\\
In this case $F$ simplifies to the increasing function
\begin{equation}
F(x)=\frac{|a_1|^2(1+x)}{2-|a_1|^2(1-x)},\quad x\in[-1,1)
\end{equation}
Then
\begin{equation}
F(-1)=0 \text{ and }\lim_{x\to 1^-}F(x)=|a_1|^2 \quad \Rightarrow v(U_2(c))=(\sup F)^{1/2}=|a_1|.
\end{equation}

\item  \underline{$|a_1|\neq|a_2|$}\\
Here we discuss a direct case separately. Namely, if the maximum of $|a_1|$ and $|a_2|$ is equal to 1, then $F$ in \eqref{def:F} reads as (assuming that $1=|a_1|>|a_2|\neq 0$)
\begin{equation}
F(x)=\frac{|a_2|^2(1-x^2)}{1-|a_2|^2x^2}
\end{equation}
which attains its maximum at $x=0$, and $\sup F=|a_2|^2$. Hence $v(U_2(c))=|a_2|=\min\{|a_1|,|a_2|\}$. The case $1=|a_2|>|a_1|\neq 0$ is identical by symmetry.

In the following we assume that $0< |a_1|,|a_2| <1$.\\
By \eqref{def:F} and \eqref{D:notzero}, we have $F(\pm1)=0$. Then the maximum in \eqref{max-F} is attained at an interior point of $[-1,1]$. At an interior maximizer $x_0\in (-1,1)$, we have
\begin{equation}
F(x_0)=\mu,\qquad F'(x_0)=0.
\end{equation}
Equivalently, the level equation
\begin{equation}\label{eq:level}
F(x)=\mu
\end{equation}
has a repeated root (root of multiplicity 2). Indeed, \eqref{eq:level} reads as the quadratic equation (in $x$)
\begin{equation}
\mu\left(1-\left(|a_1a_2|x-\Re(b_1\overline{b_2})\right)^2\right)=|a_1a_2|^2(1-x^2).
\end{equation}
Expand to obtain the quadratic equation in $x$,
\begin{equation}\label{Warmup:x2}
|a_1a_2|^2(1-\mu)x^2+2\mu |a_1a_2|\Re(b_1\overline{b_2})x+\mu\left(1-\Re(b_1\overline{b_2})^2\right)-|a_1a_2|^2=0.
\end{equation}
The repeated root condition is that the discriminant vanishes. After dividing by
$4|a_1a_2|^2$, this gives
\begin{equation}
\mu^2\Re(b_1\overline{b_2})^2-(1-\mu)\left(\mu\left(1-\Re(b_1\overline{b_2})^2\right)-|a_1a_2|^2\right)=0.
\end{equation}

Rearrange the terms,
\begin{equation}
\mu^2-B\mu+|a_1a_2|^2=0 \quad \text{ where }\quad B:=1-\Re(b_1\overline{b_2})^2+|a_1a_2|^2.
\end{equation}
Since $b_1\overline{b_2}\in\mathbb R$ then
\begin{equation}
\Re(b_1\overline{b_2})^2=(1-|a_1|^2)(1-|a_2|^2) \quad\text{ and hence }\quad B=|a_1|^2+|a_2|^2.
\end{equation}
This gives the solutions
\begin{align}
\mu_\pm &= \frac{B\pm\sqrt{B^2-4|a_1a_2|^2}}{2} \notag\\
&= \frac{1}{2}\left(|a_1|^2+|a_2|^2\pm \left||a_1|^2-|a_2|^2\right|\right).
\end{align}
Thus,
\begin{equation}
\mu_+=\max\{|a_1|^2, |a_2|^2\}\neq 1\quad \text{ and }\quad \mu_-=\min\{|a_1|^2, |a_2|^2\}.
\end{equation}
We show that in this case $\mu_+$ is rejected. For any root $\mu$, the corresponding repeated solution is, from \eqref{Warmup:x2},
\begin{equation}
x_\mu=-\frac{\mu b_1 \overline{b_2}}{|a_1a_2|(1-\mu)}.
\end{equation}
Assume first that $\mu_+=|a_1|^2$, i.e., $1>|a_1|> |a_2|>0$. Then
\begin{equation}
|x_{\mu_+}|^2=\frac{|a_1|^2(1-|a_2|^2)}{|a_2|^2(1-|a_1|^2)} > 1.
\end{equation}
Hence $x_{\mu_+}\notin[-1,1]$, so $\mu_+$ is not admissible. The case
$\mu_+=|a_2|^2$, i.e., $|a_2|>|a_1|$, is identical and gives again
$|x_{\mu_+}|^2>1$. Therefore the admissible root is
$\mu=\mu_-=\min\{|a_1|^2,|a_2|^2\}$.
Consequently,
\begin{equation}
v(U_2(c))=\min\{|a_1|,|a_2|\}.
\end{equation}
\end{enumerate}

This finishes the proof of Lemma \ref{lem:warm-up}.

\section{Periodic walks with one nontrivial coin}\label{sec:one-nontrivial-coin}
Consider the $p$-periodic string $c$ as in \eqref{def:c-p}, given as
\begin{equation}\label{def:c-1-sefect}
c:=\left((a,b), (1,0), (1,0),\ldots, (1,0)\right),\ (a,b)\in\mathbb S^3
\end{equation}
i.e., a periodic shift-coin walk that includes only one nontrivial coin. 
\begin{theorem}\label{thm:1-defect}
For $p> 2$, let $U_p(c)$ be the $p$-periodic shift-coin walk generated by the string $c$  in \eqref{def:c-1-sefect}, then $v(U_p)=|a|$.
\end{theorem}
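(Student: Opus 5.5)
The plan is to compute the dispersion relations of $\widehat U_p(c,\theta)$ exactly, using the transfer matrix machinery from the proof of Theorem \ref{thm:simple-spec}, and then maximize the resulting group velocity by hand, as in the two-periodic computation of Appendix \ref{sec:warm-up}. The degenerate cases go first. If $a=0$, the walk consists of periodic perfect reflectors and $v=0=|a|$, for instance by the gauge argument \eqref{U-gauge}. If $|a|=1$, all coins are diagonal and the walk is the perfectly transmitting walk with $v=1$. So assume $0<|a|<1$, and write $a=|a|e^{i\phi}$.

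\emph{Step 1: the monodromy matrix.} Every site $j=2,\ldots,p$ is a perfect transmitter with $a_j=1$, $b_j=0$, so by \eqref{eq:direct-transfer-matrix} its transfer matrix is $T_j(\lambda)=\diag\{\lambda^{-1},\lambda\}$. Hence
\begin{equation*}
M(\lambda)=\diag\{\lambda^{-(p-1)},\lambda^{p-1}\}\,T_1(\lambda)=\frac{1}{\overline a}\begin{bmatrix}\lambda^{-p} & \lambda^{-(p-1)}b\\ \lambda^{p-1}\overline b & \lambda^{p}\end{bmatrix},
\end{equation*}
with $\operatorname{Tr}M(\lambda)=(\lambda^p+\lambda^{-p})/\overline a$ and $\det M(\lambda)=a/\overline a$. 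By \eqref{eigen-U-M}, $\lambda=e^{i\omega}\in\mathrm{spec}(\widehat U_p(c,\theta))$ if and only if $\mu=e^{-i\theta}$ solves $\mu^2-\operatorname{Tr}M\,\mu+\det M=0$. Multiplying this equation by $\overline a e^{i\theta}$ gives the explicit dispersion relation
\begin{equation*}
\cos(p\,\omega)=\Re\left(a e^{i\theta}\right)=|a|\cos(\theta+\phi).
\end{equation*}
Thus, for each $\theta$, the $2p$ eigenphases are exactly the $2p$ solutions $\omega$ of this equation modulo $2\pi$.

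\emph{Step 2: differentiate and maximize.} Implicit differentiation gives $p\,\sin(p\omega)\,\partial_\theta\omega=|a|\sin(\theta+\phi)$. Since $|a|<1$, we have $|\cos(p\omega)|\le|a|<1$, so $\sin(p\omega)\neq0$ and every branch is smooth in $\theta$. Setting $x=\cos(\theta+\phi)$, this yields
\begin{equation*}
p^2|\partial_\theta\omega|^2=\frac{|a|^2(1-x^2)}{1-|a|^2x^2}\le|a|^2,
\end{equation*}
with equality at $x=0$. By \eqref{def:v-dispersion}, $v(U_p(c))=p\max|\partial_\theta\omega|=|a|$.

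\emph{Main obstacle.} The only delicate point is justifying that the block velocity formula \eqref{def:v-dispersion} applies to these implicitly defined branches, in particular at points where branches could cross. Here the bound $|\cos(p\omega)|<1$ shows that the solutions of the dispersion relation never coalesce. I will check that the $2p$ roots $\omega$ are distinct for every $\theta$ and depend analytically on $\theta$, which also makes the spectrum simple. If that check fails at isolated $\theta$, I would fall back on the continuity remark made after \eqref{velocity:sup-E(c)}: removing finitely many points does not change the supremum.
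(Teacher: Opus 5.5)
Your proof is correct, and it takes a genuinely different route from the paper's.

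The paper proves the two inequalities separately:
\begin{itemize}
\item For the upper bound, it combines the current identity of Lemma \ref{lem:HF}, $\partial_\theta\omega=K^\psi/\|\psi\|^2$, with the fact that the perfect transmitters force $|\psi_j^\pm|$ to be constant in $j$. This gives $|K^\psi|\le|a|(|\psi_1^+|^2+|\psi_1^-|^2)=|a|\,\|\psi\|^2/p$.
\item For the lower bound, it uses a purely dynamical argument. $U^p$ leaves $\ell^2(p\mathbb Z)\otimes\mathbb C^2$ invariant and is unitarily equivalent there to the homogeneous walk with coin $(a,b)$, with the position operator rescaled by $p$. Hence $v\ge v(\widetilde U)=|a|$.
\end{itemize}

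You instead use that the transmitters' transfer matrices are diagonal to obtain the closed-form dispersion relation $\cos(p\omega)=|a|\cos(\theta+\phi)$. In the variable $p\omega$ this is exactly the homogeneous dispersion relation, i.e.\ the spectral counterpart of the paper's invariant-subspace reduction. Both inequalities then follow at once from one-variable calculus, as in Appendix \ref{sec:warm-up}.

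Your route yields more information:
\begin{itemize}
\item the full dispersion relations;
\item the angle $\theta+\phi=\pi/2$ where the maximal group velocity is attained;
\item simplicity of the Floquet spectrum for every $\theta$, including the angles excluded from $E(c)$.
\end{itemize}
Your promised check goes through immediately, so no fallback is needed. For $|a|<1$ the equation has exactly $2p$ distinct roots modulo $2\pi$. Each is an eigenvalue by the isomorphism behind \eqref{eigen-U-M}, so they exhaust the spectrum and are simple. Since $\sin(p\omega)\neq0$, the implicit function theorem gives analytic branches.

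The price is that you use \eqref{def:v-dispersion} as an equality in both directions, and your argument depends on the special diagonal structure. By contrast, the paper's upper bound is a specialization of its general harmonic machinery. Its lower bound does not use the Floquet velocity formula for $U_p(c)$ at all, only the known velocity of the homogeneous walk.
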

\begin{proof}
First, we note that if $a=0$ then it is direct to see that $v(U_p)=0$, see e.g., \cite[Lemma 2.2]{ARJS26}. So in the following we assume that $a\neq 0$.

We prove the equality by proving the two inequalities separately.\\

\noindent \underline{$v(U_p)\leq|a|$:} Here we use the methods in Section \ref{sec:harmonic}.
By Theorem \ref{thm:simple-spec}(\ref{thm:spectrum:2}) and Remark \ref{rem:spectrum-real}, the spectrum of $\widehat U_p(c,\theta)$ is simple for all 
\begin{equation}
E(c)=\left\{\theta\in[0,2\pi):\ e^{-2i\theta}\neq \frac{a}{\overline{a}}\right\},
\end{equation}
 then the velocity is given by \eqref{velocity:sup-E(c)}. 

Let $\psi$ be an eigenvector of $\widehat U(c,\theta)$ corresponding to a simple eigenvalue $\lambda:=\lambda(c,\theta)=e^{i\omega(c,\theta)}$. Recall from \eqref{eq:C-psi} that
\begin{equation}
C_j\begin{bmatrix}\psi_j^+\\ \psi_j^-\end{bmatrix}=\lambda \begin{bmatrix}\psi_{j+1}^+\\ \psi_{j-1}^-\end{bmatrix}, \text{ for }j=1,\ldots, p.
\end{equation}
with the cyclic conventions
\begin{equation}
\psi_{p+1}^+=e^{-i\theta}\psi_1^+,
\qquad
\psi_0^-=e^{i\theta}\psi_p^-.
\end{equation}
Then, for $j=2,\ldots,p$, we have
\begin{equation}\label{j23...p}
\psi_j^+=\lambda \psi_{j+1}^+, \qquad \psi_j^-=\lambda \psi_{j-1}^-,
\end{equation}
Since $|\lambda|=1$, it follows directly from \eqref{j23...p} that
\begin{equation}
|\psi_1^+|=|\psi_2^+|=\cdots=|\psi_p^+| \quad\text{ and }\quad |\psi_1^-|=|\psi_2^-|=\cdots=|\psi_p^-|.
\end{equation}
Consequently,
\begin{equation}\label{eq:1p-psi2}
|\psi_j^+|^2+|\psi_j^-|^2=\frac{1}{p}\|\psi\|^2, \qquad j=1,\ldots,p.
\end{equation}

We now apply Lemma \ref{lem:HF} 
\begin{equation}
\partial_\theta\omega(c,\theta)= \frac{K^\psi}{\|\psi\|^2},\quad \text{ with }\quad K^\psi:=|\psi_1^+|^2-|\psi_p^-|^2
\end{equation}
and, by the steps from  \eqref{K:Inner} to \eqref{Hf:11} with $j=1$, we obtain
\begin{equation}
|K^\psi| \leq |a|\left(|\psi_1^+|^2+|\psi_1^-|^2\right)\overset{\eqref{eq:1p-psi2}}{=}\frac{|a|}{p}\|\psi\|^2.
\end{equation}
Thus, we have
\begin{equation}
|\partial_\theta\omega(c,\theta)|\leq \frac{|a|}{p} \quad \Longrightarrow\quad v_{\mathrm{blk}}(U_p(c))\leq \frac{|a|}{p}.
\end{equation}
Equivalently, by \eqref{eq:v-vp}
\begin{equation}\label{defect:<=}
v(U_p(c))\leq |a|.\\[0.5cm]
\end{equation}

\noindent \underline{$v(U_p)\geq |a|$:}
Assume that the nontrivial coin sits on the sites $p\mathbb Z$, and that
\begin{equation}
C_{kp}=
\begin{bmatrix}
a & b\\
-\overline b & \overline a
\end{bmatrix},
\qquad |a|^2+|b|^2=1, \quad \text{ and }C_n=\idty_2, \qquad n\notin p\mathbb Z.
\end{equation}
Let $\mathcal H_p$ be the closed subspace of $\mathcal H$ consisting of
states supported on positions $p\mathbb Z$. Equivalently,
\begin{equation}
\mathcal H_p:=\ell^2(p\mathbb Z)\otimes \C^2\subset
=\mathcal H =\ell^2(\mathbb Z)\otimes \C^2.
\end{equation}

We first observe that $\mathcal H_p$ is invariant under $U^p$. Indeed, a
direct calculation shows that, for every $k\in\mathbb Z$,
\begin{align}\label{eq:Up-basis}
U^p \delta_{kp}^+
&=
a \delta_{(k+1)p}^+
-\overline b\, \delta_{(k-1)p}^-,
\notag\\
U^p \delta_{kp}^-
&=
b \delta_{(k+1)p}^+
+\overline a\, \delta_{(k-1)p}^-.
\end{align}
Thus, (since $U^p$ is bounded) $U^p\mathcal H_p\subseteq \mathcal H_p$.

Define the unitary
\begin{equation}\label{def:J}
J:\mathcal H\longrightarrow \mathcal H_p \ \ \text{ by }\ \ J =\sum_{k\in\mathbb Z} |kp\rangle\langle k|\otimes \idty_2.
\end{equation}
Let $\widetilde U$ be the homogeneous shift-coin walk on $\mathcal H$ with
constant coin $\widetilde C=\begin{bmatrix}a & b\\ -\overline b & \overline a \end{bmatrix}$.
A direct calculation shows (using \eqref{eq:Up-basis}, \eqref{def:J}, and the definition \eqref{def:Q} of the position operator $Q$)
\begin{equation}\label{J*-J}
J^*U^pJ=\widetilde U \quad \text{ and }\quad J^*QJ=pQ.
\end{equation}
Moreover, for every $\phi\in\mathcal H$,
\begin{equation}
\|QJ\phi\|^2=\sum_{k\in\mathbb Z}|pk|^2\|\phi_k\|_{\C^2}^2=p^2\|Q\phi\|^2.
\end{equation}
Therefore
$\phi\in D(Q)$ if and only if  $J\phi\in D(Q)\cap\mathcal H_p$,
and hence
\begin{equation}\label{J-psi-phi}
J D(Q)=D(Q)\cap\mathcal H_p.
\end{equation}
Restrict the supremum to states in $D(Q)\cap\mathcal H_p$, and
take the subsequence $t=pm$, we obtain
\begin{equation}
v(U_p(c)) \geq \sup_{\substack{\psi\in D(Q)\cap\mathcal H_p\\ \|\psi\|=1}} \limsup_{m\to\infty} \frac{1}{pm}\|Q (U^{p})^m\psi\|.
\end{equation}
Note that because $\psi\in\mathcal H_p$ then $Q (U^p)\psi \in \mathcal H_p$.

Now take any $\psi\in D(Q)\cap\mathcal H_p$ with $\|\psi\|=1$. Then we have,
\begin{equation}
\frac{1}{pm}\|Q U^{pm}\psi\|=\frac{1}{pm}\|(J^*QJ) ( J^* U^p J)^m J^*\psi\| \overset{\eqref{J*-J}}{=} \frac{1}{m}\|Q \widetilde U^mJ^*\psi\| 
\end{equation}
Consequently, set $\phi:=J^* \psi$ and use  \eqref{J-psi-phi} to see that
\begin{equation}\label{defect:>=}
v(U_p(c))\geq \sup_{\substack{\phi\in D(Q)\\ \|\phi\|=1}} \limsup_{m\to\infty} \frac{1}{m} \|Q\widetilde U^m\phi\|=v(\widetilde U)=|a|.
\end{equation}
In the last step we used the fact that the walk $\widetilde U$ is homogeneous with transmission coefficient
$|a|$. Therefore, by the standard results
$v(\widetilde U)=|a|$, see, e.g., \cite{ARCSW25}.

The inequalities \eqref{defect:<=} and \eqref{defect:>=} show 
\begin{equation}
v(U_p(c))=|a|.
\end{equation}
\end{proof}
\section{A general lower bound for the velocity}\label{sec:Lower-bound}
Consider the general periodic shift-coin walk $U_p$ determined by the $p$-periodic string 
\begin{equation}\label{lower-bound-c}
c=\left((a_1,b_1),\ldots (a_p,b_p)\right) \text{ where }(a_j,b_j)\in\mathbb S^3 \text{ and }a_j\neq 0 \text{ for all }j.
\end{equation}

The following theorem gives a general lower bound for the velocity of $U_p(c)$. We recall that if $a_j=0$ for some site $j$, then $v(U_p)=0$. 
\begin{theorem}\label{thm:v-lower}
For the $p$-periodic shift-coin $U_p(c)$ as above, the velocity has the following general lower bound.
\begin{equation}\label{eq:v-lower}
v(U_p)\geq pL(c), \quad \text{ where }L(c):=\left(\prod_{j=1}^p \frac{|a_j|}{1+|b_j|} \right)\left(\sum_{j=1}^p\frac{1}{1+|b_j|}\right)^{-1}.
\end{equation}
\end{theorem}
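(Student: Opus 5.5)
The plan is to produce a single $\theta$ and a single Floquet eigenvector whose normalized current is large. By \eqref{def:v-dispersion} and \eqref{11}, it suffices to find $\theta\in E(c)$ and an eigenvector $\psi$ of $\widehat U_p(c,\theta)$ with
\begin{equation*}
\frac{|K^\psi(c,\theta)|}{\|\psi\|^2}\ \ge\ L(c).
\end{equation*}
Multiplying by $p$ via \eqref{eq:v-vp} then gives \eqref{eq:v-lower}. I would work entirely in the transfer-matrix picture from the proof of Theorem \ref{thm:simple-spec}\eqref{thm:spectrum:1}. Write $\phi_j:=(\psi_j^+,\psi_{j-1}^-)^T$, so that $\phi_{j+1}=T_j(\lambda)\phi_j$. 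By \eqref{eq:conserved}, $K^\psi=\langle\phi_j,J\phi_j\rangle$ with $J=\diag(1,-1)$ for every $j$, so each $T_j$ preserves the indefinite form $J$ up to the unimodular factor $\det T_j$.

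The first ingredient is an exact computation of the singular values of $T_j(\lambda)$ for $|\lambda|=1$. From \eqref{eq:direct-transfer-matrix} one gets $T_j^*T_j=|a_j|^{-2}\begin{bmatrix}1&\lambda^{-1} b_j\\ \lambda\overline{b_j}&1\end{bmatrix}$ (up to conjugation by a phase), whose eigenvalues are $(1\pm|b_j|)^2/|a_j|^2$. Hence
\begin{equation*}
\|T_j\|=\frac{1+|b_j|}{|a_j|},\qquad \|T_j^{-1}\|=\frac{1+|b_j|}{|a_j|}.
\end{equation*}
The quantities in $L(c)$ are exactly these norms.

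The second ingredient is to reduce $\|\psi\|^2$ to the vectors $\phi_j$. Using \eqref{eq:C-psi} and unitarity of $C_j$, $\|\psi_j\|^2=\|\phi_{j+1}^{\,\mathrm{shifted}}\|^2$; more simply, $\|\psi\|^2=\sum_j\|\phi_j\|^2$ up to relabeling of the $-$ components. I then want a bound of the form
\begin{equation*}
\|\phi_j\|^2\le C_j\,|K|,\qquad C_j\ \text{built from}\ \prod_k\frac{1+|b_k|}{|a_k|}\ \text{and the weight}\ \frac{1}{1+|b_j|}.
\end{equation*}
The weight $1/(1+|b_j|)$ should enter through the refined local identity \eqref{eq:Aj-positive}--\eqref{eq:Aj-negative}. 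That identity gives $|K|/|a_j|=\|\psi_j\|^2-(1+|a_j|)|R_j^{(s)}|^2$, so $\|\psi_j\|^2$ exceeds $|K|/|a_j|$ only by the residual term. I would control that residual by propagating from a reference site through the transfer matrices, using the norm bound above together with $\det M=\Delta(c)$.

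The key step, and the one I expect to be the main obstacle, is choosing $\theta$, equivalently the spectral parameter $\lambda$. The eigenvector $\phi_1$ of the monodromy $M(\lambda)$ must be far from $J$-null, i.e.\ $|\langle\phi_1,J\phi_1\rangle|\gtrsim\|\phi_1\|^2\prod_k|a_k|/(1+|b_k|)$. My first attempt is a counting/averaging argument. As $\lambda$ traverses a spectral band, $e^{-i\theta}$ runs over half the circle. For each branch, the integral of $\partial_\theta\omega$ over $[0,2\pi)$ equals the band length, while $\sum_j\partial_\theta\omega_j$ integrates to zero because $\det\widehat S(\theta)$ is $\theta$-independent. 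So it suffices to lower-bound the total length of the bands (equivalently, upper-bound the gaps) of $\widehat U_p$.

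A gap opens only where $|\mathrm{Tr}\,M(\lambda)|$ exceeds $2$ (after normalizing the determinant). I would bound $|\partial_\lambda\mathrm{Tr}\,M|$ from below on bands using the $J$-unitary structure, and bound the total gap length by $\prod_k(1+|b_k|)/|a_k|$-type growth of $M$. If the averaging route does not yield the exact constant, the fallback is direct. At the band center where $\mathrm{Tr}\,M$ vanishes, $M$'s eigenvectors are explicitly computable. There $|K|/\|\phi_1\|^2$ is bounded below by $1/\|M\|\ge\prod_k|a_k|/(1+|b_k|)$, and summing the per-site bounds, with the optimal choice of starting site via Remark \ref{rem:regrouping}, produces the factor $\bigl(\sum_j 1/(1+|b_j|)\bigr)^{-1}$.
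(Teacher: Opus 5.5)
Your choice of spectral point is right. On the spectrum $\mathrm{Tr}\,M=e^{-i\theta}+\Delta(c)e^{i\theta}$, so $\mathrm{Tr}\,M=0$ is exactly the condition $|\Delta(c)e^{i\theta}-e^{-i\theta}|=2$ that defines the paper's $\theta_0$, and $\theta_0\in E(c)$.

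\textbf{The gap is in the final step.} The per-site bound you actually state is $\|\phi_j\|^2\le\|M^{(j)}\|\,|K^\psi|$, with $M^{(j)}=T_{j-1}\cdots T_1T_p\cdots T_j$. Using $\|M^{(j)}\|\le P:=\prod_k\frac{1+|b_k|}{|a_k|}$, this becomes the same bound at every site. Summing therefore gives only $\|\psi\|^2\le pP\,|K^\psi|$, i.e.\ $v(U_p)\ge\prod_k\frac{|a_k|}{1+|b_k|}$. This falls short of \eqref{eq:v-lower} by the factor $p\bigl(\sum_j\frac{1}{1+|b_j|}\bigr)^{-1}\in[1,2)$, which is strictly bigger than $1$ as soon as some $b_j\neq0$.

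Nothing in your proposal produces the missing weights:
\begin{itemize}
\item The ``optimal choice of starting site'' via Remark~\ref{rem:regrouping} cannot create site-dependent weights, since every site gets the same bound.
\item The identities \eqref{eq:Aj-positive}--\eqref{eq:Aj-negative} carry weights $1+|a_j|$, not $1+|b_j|$.
\item The band-length averaging is only a sketch and would not give a sharp constant.
\end{itemize}
So, as written, the proposal does not establish the stated bound.

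\textbf{How the paper closes it.} It differentiates $\mathrm{Tr}\,M(e^{i\omega(c,\theta)})=e^{-i\theta}+\Delta(c)e^{i\theta}$ at $\theta_0$, which gives $|\partial_\theta\omega|=2/\bigl|\frac{d}{d\omega}\mathrm{Tr}\,M\bigr|$. It then bounds this derivative by the product rule. The $k$-th term contains $\|\partial_\omega T_k\|=\frac{1}{|a_k|}=\frac{\|T_k\|}{1+|b_k|}$, and this is exactly where the weights $\frac{1}{1+|b_k|}$ come from. Equivalently, at $\theta_0$ one can check
$\|\psi_k\|^2/|K^\psi|=\frac12\bigl|\mathrm{Tr}(T_p\cdots T_{k+1}\,\partial_\omega T_k\,T_{k-1}\cdots T_1)\bigr|$.
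So the natural per-site object is $\|\psi_k\|^2$, not $\|\phi_k\|^2$.

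\textbf{A repair within your own framework.} Keep the exact trace-zero information instead of the crude bound $1/\|M\|$.
\begin{itemize}
\item A $J$-unitary matrix with zero trace has the form $e^{i\gamma}\begin{bmatrix} is&\beta\\ \overline{\beta}&-is\end{bmatrix}$ with $s\in\mathbb R$ and $s^2-|\beta|^2=1$.
\item Its eigenvectors satisfy $\|\phi\|^2/|\langle\phi,J\phi\rangle|=|s|=\frac12\bigl(\|M\|+\|M\|^{-1}\bigr)$.
\item Since $1\le\|M^{(j)}\|\le P$, this gives $\|\psi\|^2\le\frac p2\bigl(P+P^{-1}\bigr)|K^\psi|$.
\item Set $x_j:=\frac{1+|b_j|}{|a_j|}$. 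Then $\frac{1}{1+|b_j|}=\frac12\bigl(1+x_j^{-2}\bigr)$ and $P^2x_j^{-2}=\prod_{\ell\neq j}x_\ell^2\ge1$.
\item Hence $\frac p2\bigl(P+P^{-1}\bigr)\le P\sum_j\frac{1}{1+|b_j|}$, which is \eqref{eq:v-lower}.
\end{itemize}
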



\begin{remark}\label{rem:Lower-bound}
Theorem \ref{thm:v-lower} also explains the scale of the velocity of quantum walks in electric fields in
\cite{ARCSW25,ARS2023-CMP}. For example, a homogeneous walk $U((a,b))$ with constant coin
$(a,b)$ and a specific field of odd period $p$ has velocity exactly $|a|^p$. More precisely, define the unitary electric field 
\begin{equation}
F_p=\sum_{j=1}^p \alpha_j |j\rangle\langle j|\otimes \idty_2, \qquad \alpha_j:=e^{\frac{2\pi i j}{p}} .
\end{equation}
The electric shift-coin walk is then given by $U_p:=F_p U((a,b))$. By \cite[Lemma 3.2]{ARJS26},
\begin{equation}
v(U_p)=v\left(F_p S C((a,b))\right)=v\left(S C((a,b))F_p\right)=v(U_p(c)),
\end{equation}
where $c$ is the $p$-periodic string $c=((a_1,b_1), \ldots, (a_p,b_p))$ with $(a_j,b_j)=\alpha_j(a,b)$ for $j=1,\ldots,p$. That is, the field phases are absorbed into the coins, and the periodic electric walk becomes a $p$-periodic walk generated by $c$, with
$|a_1|=\cdots=|a_p|=|a|$ and $|b_1|=\cdots=|b_p|=|b|$. In this case
Theorem \ref{thm:v-lower} gives
\begin{equation}
v(U_p) \overset{(\text{odd }p)}{=}|a|^p \geq  pL(c)=\frac{|a|^p}{(1+|b|)^{p-1}}\geq 2 \left(\frac{|a|}{2}\right)^p.
\end{equation}
Thus, in the odd-period case, the lower bound captures the exponential suppression in the period. Here we recall, see \cite{ARCSW25}
\begin{equation}
v(U_p)=\begin{cases}
|a|^p &  p \text{  is odd}\\
|a|^{p/2} & p \text{ is even}.
\end{cases}
\end{equation}
We note that a similar argument applies to the split-step walk (and hence for the corresponding generalized extended CMV setting), where the corresponding field can be absorbed into the second coin.
\end{remark}
\begin{proof}
The proof depends on the transfer matrices argument presented in Section \ref{sec:spectrum-simple-1R}. Recall that the transfer matrices associated with the Floquet matrix $\widehat U_p(c,\theta)$ (given in \eqref{def:Up-hat}) are given as
\begin{equation}
T_j(\lambda)=\frac{1}{\overline{a_j}}\begin{bmatrix} \lambda^{-1} & b_j\\ \overline{b_j} & \lambda\end{bmatrix} \text{ for }j=1,\ldots,p.
\end{equation}
and the monodromy matrix
\begin{equation}
M(\lambda)= T_p(\lambda) T_{p-1}(\lambda)\ldots T_1(\lambda), \text{ and observe that }\det M(\lambda)=\Delta(c):=\prod_{j=1}^p\frac{a_j}{\overline{a_j}}.
\end{equation}
Recall also \eqref{eigen-U-M}, that
\begin{equation}
\lambda\in\mathrm{spec}(\widehat U_p(c,\theta))\quad \iff \quad e^{-i\theta}\in\mathrm{spec}(M(\lambda))
\end{equation}
Thus, for a simple eigenvalue $\lambda(c,\theta)=e^{i\omega(c,\theta)}$ of $\widehat U_p(c,\theta)$, the matrix $M(e^{i\omega(c,\theta)})$ has one eigenvalue $e^{-i\theta}$. Since $\det M(e^{i\omega(c,\theta)})=\Delta(c)$, its second eigenvalue is $\Delta(c)e^{i\theta}$. Hence
\begin{equation}
\mathrm{Tr}\ M(e^{i\omega(c,\theta)})=e^{-i\theta}+\Delta(c)e^{i\theta}.
\end{equation}
Taking the derivative with respect to $\theta$, we obtain
\begin{equation}
\partial_\theta\omega(c,\theta)=\frac{i(\Delta(c)e^{i\theta}-e^{-i\theta})}{\frac{d}{d\omega}\mathrm{Tr}\ M(e^{i\omega(c,\theta)})}
\end{equation}

By Theorem \ref{thm:simple-spec}(\ref{thm:spectrum:1}), all the eigenvalues
$\{\lambda_j=e^{i\omega_j(c,\theta)}\}_{j=1}^{2p}$ of $\widehat U_p(c,\theta)$
are simple for $\theta\in E(c)$, where
\begin{equation}
E(c)=\{\theta\in[0,2\pi):\ e^{-2i\theta}\neq \Delta(c)\}.
\end{equation}
Let $\theta_0$ be the angle such that 
\begin{equation}\label{lower:2}
|\Delta(c)e^{i\theta_0}-e^{-i\theta_0}|=2,
\end{equation}
 and observe that $\theta_0\in E(c)$.

Therefore, the block velocity is given as, see \eqref{velocity:sup-E(c)}
\begin{equation}
v_{\mathrm{blk}}^{(p)}(U_p)=\sup_{\substack{\theta\in E(c)\\ j=1,\ldots,2p}}|\partial_\theta\omega_j(c,\theta)| \geq \sup_{j}|\partial_\theta\omega_j(c,\theta_0)|=\frac{2}{\displaystyle\left| \left.\frac{d}{d\omega}\right|_{\omega=\omega_j(c,\theta_0)}\mathrm{Tr}\ M(e^{i\omega(c,\theta_0)})\right|}
\end{equation}
We claim now that
\begin{equation}\label{eq:transfer:bound}
\left| \left.\frac{d}{d\omega}\right|_{\omega=\omega_j(c,\theta_0)} \mathrm{Tr}\ M(e^{i\omega})\right| \leq 2\left(\prod_{\ell=1}^p\frac{1+|b_\ell|}{|a_\ell|}\right) \left(\sum_{k=1}^p \frac{1}{1+|b_k|}\right).
\end{equation}
which gives directly the desired bound \eqref{eq:v-lower}.

It remains to prove \eqref{eq:transfer:bound}.

Observe that, (we use $\omega_j:=\omega_j(c,\theta_0)$)
\begin{equation}
\left. \frac{d}{d\omega}\right|_{\omega=\omega_j}\mathrm{Tr}\ M(e^{i\omega})=\mathrm{Tr}\ \left(\sum_{k=1}^p T_p(e^{i\omega_j})\ldots T_{k+1}(e^{i\omega_j})\left(\left.\frac{d}{d\omega_j}\right|_{\omega=\omega_j} T_k(e^{i\omega})\right) T_{k-1}(e^{i\omega_j})\ldots T_1(e^{i\omega_j})\right)
\end{equation}
Then use $|\mathrm{Tr}\ A|\leq 2\|A\|$ for any $2\times 2$ matrix $A$, to find
\begin{align}
\left|  \left.\frac{d}{d\omega_j}\right|_{\omega=\omega_j}\mathrm{Tr}\ M(e^{i\omega_j}) \right| \leq 2 \sum_{k=1}^p \prod_{\substack{\ell=1\\ \ell\neq k}}^p\left\|T_\ell(e^{i\omega_j})\right\|\ \left\|\left.\frac{d}{d\omega_j}\right|_{\omega=\omega_j} T_k(e^{i\omega})\right\|.
\end{align}

Use
\begin{equation}
\left\|T_\ell(e^{i\omega})\right\|=\frac{1+|b_\ell|}{|a_\ell|}
\qquad \text{and} \qquad
\left\|\frac{d}{d\omega} T_k(e^{i\omega})\right\|=\frac{1}{|a_k|}.
\end{equation}
to obtain the bound \eqref{eq:transfer:bound}.

\end{proof}

\vspace{0.5cm}
\noindent\textbf{\large Data Availability}\\
No data are associated with this article.\\

\noindent\textbf{\large Conflict of Interest}\\
The authors have no conflict of interest to declare.


\bibliographystyle{abbrvArXiv}
\bibliography{References.bib}
\end{document}